\DeclareSymbolFontAlphabet{\mathbb}{AMSb}
\DeclareSymbolFontAlphabet{\mathbbl}{bbold}
\newtheorem{theorem}{Theorem}
\newtheorem{corollary}{Corollary}
\newtheorem{lemma}{Lemma}
\newtheorem{definition}{Definition}
\newtheorem{conjecture}{Conjecture}
\newtheorem{remark}{Remark}
\def\autorefapp#1{\hyperref[#1]{Appendix~\ref{#1}}}
\DeclareMathOperator\tr{Tr}
\DeclareMathOperator{\Tr}{Tr}
\DeclareMathOperator{\spann}{span}
\def\ii{\mathrm{i}}
\def\d{\mathrm{d}}
\newcommand{\C}{\mathbb{C}}
\def\ep{\varepsilon}
\def\iden{\mathbbl{1}}
\def\ketbra#1{ |{#1}\rangle\!\langle{#1}| }
\def\vev#1{\langle{#1}\rangle}
\def\ni{\noindent}
\def\nn{\nonumber\\}
\def\Wg{{\cal W}\! g}
\def\orth{{\rm O}}
\begin{document}

\title{Improved spectral gaps for random quantum circuits:\\ large local dimensions and all-to-all interactions}

\author{Jonas Haferkamp}
\affiliation{Dahlem Center for Complex Quantum Systems, Freie Universit{\"a}t Berlin, Germany}
\author{Nicholas Hunter-Jones}
\affiliation{Perimeter Institute for Theoretical Physics, Waterloo, ON N2L 2Y5, Canada}

\begin{abstract}\noindent
Random quantum circuits are a central concept in quantum information theory with applications ranging from demonstrations of quantum computational advantage to descriptions of scrambling in strongly-interacting systems and black holes. 
The utility of random quantum circuits in these settings stems from their ability to rapidly generate quantum pseudo-randomness.
In a seminal paper by Brand{\~a}o, Harrow, and Horodecki, it was proven that the $t$-th moment operator of local random quantum circuits on $n$ qudits with local dimension $q$ has a spectral gap of at least $\Omega(n^{-1}t^{-5-3.1/\log(q)})$, which implies that they are efficient constructions of approximate unitary designs.

As a first result, we use Knabe bounds for the spectral gaps of frustration-free Hamiltonians to show that $1D$ random quantum circuits have a spectral gap scaling as $\Omega(n^{-1})$, provided that $t$ is small compared to the local dimension: $t^2\leq O(q)$.
This implies a (nearly) linear scaling of the circuit depth in the design order $t$.
Our second result is an unconditional spectral gap bounded below by $\Omega(n^{-1}\log^{-1}(n) t^{-\alpha(q)})$ for random quantum circuits with all-to-all interactions.
This improves both the $n$ and $t$ scaling in design depth for the non-local model.
We show this by proving a recursion relation for the spectral gaps involving an auxiliary random walk. 
Lastly, we solve the smallest non-trivial case exactly and combine with numerics and Knabe bounds to improve the constants involved in the spectral gap for small values of $t$.
\end{abstract}

\maketitle

Random unitary matrices are widespread in quantum information theory, with applications in tomography, state distinguishability, cryptography, randomized benchmarking, and decoupling. Nevertheless, achieving full uniform randomness can be prohibitively expensive and requires exponential resources.
Therefore, one often resorts to "less random" probability distributions, so-called unitary $t$-designs~\cite{dankert_exact_2009,gross_evenly_2007}.
These are probability distributions that produce the same expectation values as the uniform (Haar) measure on the unitary group up to polynomials of degree $t$.

As opposed to full Haar-randomness, unitary designs can be approximately generated with polynomial resources. Specifically, it is known that random quantum circuits (RQCs), with randomly chosen two-local unitary gates, form approximate unitary designs \cite{HL08,brandao_local_2016,HM18,NHJ19,brandao2010exponential}, following a line of work studying their convergence properties \cite{ELL05,ODP07,Zni08,BV10}.
In particular, it was proven in Ref.~\cite{brandao_local_2016} that $n$-qudit random quantum circuits in a parallelized one-dimensional architecture constitute $\varepsilon$-approximate unitary $t$-designs in depth $O(nt^{11})$. 

Recently, a direct connection between higher approximate designs and circuit complexity was established in Ref.~\cite{CompGrowth19}.
This result implies that the complexity of quantum circuits of depth $T=O(nt^{11})$ grows at least as $T^{1/11}$.
In fact, a conjecture by Brown and Susskind~\cite{brown2018second,susskind2018black}, motivated by the long-time behavior of black holes in the context of the AdS/CFT correspondence, 
anticipates that the complexity of local random circuits grows linearly in time for an exponentially long time.
This would be implied by a $T=O(nt)$ scaling of the circuit depth. Finally, progress towards proving this scaling was made in Ref.~\cite{NHJ19}, using a mapping to the statistical mechanics of a lattice model to show that $O(nt)$ depth RQCs form approximate $t$-designs in the limit of large local dimensions.

In this work we first show that it suffices to choose the local dimension as $q= \Omega(t^2)$, thus independent of the system size and polynomial in $t$ to ensure that random quantum circuits of depth $O(nt\log(t))$ converge to approximate unitary $t$-designs.
This is a consequence of a bound on the \textit{spectral gap} of random quantum circuits, which we define as the difference between the highest eigenvalue of $1$ and the second highest eigenvalue $\lambda_2$ of the $t$-th moment operator.
The key tool we use is a finite-size criterion for spectral gaps, so-called Knabe bounds~\cite{knabe1988energy,GM15} as opposed to the martingale (or Nachtergaele) method~\cite{nachtergaele1996spectral} used in~\cite{brandao_local_2016}.
Combining these bounds with an approximate orthogonality result from Ref.~\cite{brandao_local_2016} we obtain a simple proof that random quantum circuits have a constant spectral gap in the regime $q=\Omega(t^2)$.

Our second result is a polynomial spectral gap for random quantum circuits with all-to-all interactions, which we will refer to as \textit{non-local} random quantum circuits.
In this model, the best known bound on the spectral gap is derived from the $1D$ result and scales like $\Omega(n^{-2}t^{-5-3.1/\log(q)})$.
We prove a recursion relation for the spectral gap involving the spectral gap of an auxiliary random walk. 
We use methods from~\cite{brandao_local_2016}, specifically a version of path coupling method by Bubley and Dyer~\cite{bubley1997path} due to Oliveira~\cite{oliveira2009convergence}, to bound the auxiliary walk and obtain a bound on the spectral gap of the non-local model of $\Omega(n^{-1}\log^{-1}(n)t^{-\alpha(q)})$, with an improvement in $n$ dependence, and where
\begin{equation}
	\alpha(q):=2.03\log^{-1}(q)\log\left(1-\left(1-\frac{1}{ 2.1q^2}\right)^{\frac14}\right)^{-1}.
\end{equation}
In particular, we have $\lim_{q\to\infty}\alpha(q)=4.06$, hence also slightly improving the $t$ dependence for the non-local model.
We do not require an application of the martingale method and believe that the auxiliary walk might be a useful tool towards the proof of a constant spectral gap.

Furthermore, we analytically and numerically improve on the spectral gaps for small values of $t$.
We prove an exact formula for the smallest non-trivial case, $n=3$ and $t=2$, for $1D$ RQCs with open boundary conditions.
We find that the second highest eigenvalue of the moment operator in this case is exactly 
\begin{equation}
\lambda_2=\frac12 +\frac{q}{2(q^2+1)}\,.
\end{equation}
Combined with Knabe bounds, this yields much smaller explicit constant for the generation of approximate $2$-designs.
Similarly, we numerically compute the local spectral gaps for small values of $q$ and $t$ to obtain improved constants in the design depths for $1D$ random quantum circuits. 
Lastly, we extend the results to random quantum circuits constructed from local orthogonal gates, and show that orthogonal random quantum circuits converge to approximate orthogonal $t$-designs, reproducing moments of the Haar measure on the orthogonal group.

\section{Preliminaries}
A central object of this paper is the moment superoperator, the $t$-fold channel of an operator $A$ with respect to a probability distribution $\nu$ on the unitary group $U(d)$, defined as
\begin{equation}
\Phi_\nu^{(t)}(A):=\int U^{\otimes t}A(U^{\dagger})^{\otimes t}\, \mathrm{d}\nu(U)\,.
\end{equation}
We denote the Haar-measure on the unitary group by $\mu_H$.

We can then use the vectorization isomorphism $\mathrm{vec}:\mathbb{C}^{D\times D}\to \mathbb{C}^{D^2}$ defined by $\mathrm{vec}(|i\rangle\langle j|)=|i\rangle\otimes |j\rangle$.
This isomorphism uniquely extends to a map from superoperators to matrices: $\mathrm{vec}(T)\mathrm{vec}(M):=\mathrm{vec}(T(M))$ for all $M\in \mathbb{C}^{D\times D}$ for a superoperator $T$.

A principal notion for us will be the spectral gap of moment operators:
\begin{equation}
g(\nu,t):=\big\|M(\nu,t)-M(\mu_H,t)\big\|_{\infty}\,,
\end{equation}
where the $t$-th moment operator of a probability distribution is defined as
\begin{equation}
M(\nu,t):=\mathrm{vec}\big(\Phi_\nu^{(t)}\big)=\int U^{\otimes t}\otimes \overline{U}{}^{\otimes t}\,\mathrm{d}\nu(U)
\end{equation} 
and $\|\bullet\|_{\infty}$ denotes the Schatten $\infty$-norm.
In particular, the spectral gap $g(\nu,t)$ can be easily amplified, where the $k$-fold convolution of $\nu$ has the property that
\begin{equation}\label{eq:gconvolution}
g(\nu^{*k},t)\leq g(\nu,t)^k\,.
\end{equation}
Upper bounds on this spectral gap can be used to imply an approximate version of unitary designs~\cite{brandao_local_2016}. We define approximate designs in two (inequivalent) ways, with a relative error and with an exponentially small additive error. As we will shortly see, the relation to the spectral gap turns out to be the same.
\begin{definition}[Approximate unitary designs] \label{def:approxdesign}
	\textcolor{white}{a}
\begin{enumerate}
\item A probability distribution $\nu$ on $U(d)$ is an $\varepsilon$-approximate unitary $t$-design if the $t$-fold channel obeys 
\begin{equation} 
\big\|\Phi_\nu^{(t)}-\Phi_{\mu_H}^{(t)}\big\|_\diamond \leq \frac{\varepsilon}{d^t}\,.
\label{eq:approxdesign1}
\end{equation}
\item A probability distribution $\nu$ on $U(d)$ a \emph{(relative)} $\varepsilon$-approximate unitary $t$-design if 
\begin{equation}
(1-\varepsilon)\Phi_\nu^{(t)} \preccurlyeq \Phi_{\mu_H}^{(t)} \preccurlyeq	(1+\varepsilon)\Phi_\nu^{(t)}\,,
\label{eq:approxdesign2}
\end{equation}
where here $A\preccurlyeq B$ if and only if $B-A$ is a completely positive map.
\end{enumerate}
\end{definition}

Combined with the above definition of an approximate unitary design, Lemma 4 in \cite{brandao_local_2016}, as well as the fact that $\|\Phi^{(t)}_\nu - \Phi^{(t)}_{\mu_H}\|_\diamond \leq d^t g(\nu,t)$, allow us to establish the following:
\begin{lemma}\label{lemma:gtodesign}
Let $\nu$ be a probability distribution on $U(d)$ such that $g(\nu,t)\leq \varepsilon/d^{2t}$. 
Then $\nu$ is an $\ep$-approximate unitary $t$-design and obeys both Eq.~\eqref{eq:approxdesign1} and Eq.~\eqref{eq:approxdesign2}.
\end{lemma}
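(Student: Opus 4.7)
The plan is to verify the two parts of \autoref{def:approxdesign} separately, using the hypothesis $g(\nu,t)\leq \varepsilon/d^{2t}$ in each case. The two conditions are inequivalent in general (the relative-error condition is strictly stronger), so it is worth noting that the same hypothesis covers both.

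First I would dispatch Eq.~\eqref{eq:approxdesign1}, which is essentially a one-line consequence of the stated inequality $\|\Phi_\nu^{(t)}-\Phi_{\mu_H}^{(t)}\|_\diamond \leq d^t g(\nu,t)$. Chaining this with the hypothesis gives
\begin{equation*}
\big\|\Phi_\nu^{(t)}-\Phi_{\mu_H}^{(t)}\big\|_\diamond \leq d^t g(\nu,t) \leq d^t\cdot\frac{\varepsilon}{d^{2t}}=\frac{\varepsilon}{d^t},
\end{equation*}
which is exactly Eq.~\eqref{eq:approxdesign1}. The factor of $d^t$ in the diamond-to-operator-norm inequality arises from bounding the diamond norm of a superoperator on $\mathbb{C}^{D\times D}$ with $D=d^t$ by $D=d^t$ times the $\infty$-norm of its vectorization; this is the standard slack between these two norms and is where one factor of $d^t$ is spent.

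For Eq.~\eqref{eq:approxdesign2} I would invoke Lemma 4 of Ref.~\cite{brandao_local_2016}, which is the workhorse that converts operator-norm closeness of $M(\nu,t)$ and $M(\mu_H,t)$ into completely-positive domination of the two channels. The structural ingredient is that $M(\mu_H,t)$ is an orthogonal projector onto the commutant (the span of permutation operators for $t<d$), so that writing $\Delta:=M(\nu,t)-M(\mu_H,t)$ gives $\|\Delta\|_\infty\leq g(\nu,t)$. The content of the BHH lemma is then that if $\|\Delta\|_\infty\leq \varepsilon/d^{2t}$, then the Choi operators satisfy
\begin{equation*}
(1-\varepsilon)\,J\bigl(\Phi_\nu^{(t)}\bigr)\preccurlyeq J\bigl(\Phi_{\mu_H}^{(t)}\bigr)\preccurlyeq (1+\varepsilon)\,J\bigl(\Phi_\nu^{(t)}\bigr),
\end{equation*}
which is Eq.~\eqref{eq:approxdesign2}. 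Here a second factor of $d^t$ is expended because the Haar Choi operator, after normalization to a state, has smallest nonzero eigenvalue on the order of $1/d^t$ on its support, so a perturbation of operator norm $\delta$ can be CP-dominated by the Haar channel only once $\delta \cdot d^t$ is below $\varepsilon$; combining the two $d^t$ factors yields the $d^{2t}$ appearing in the hypothesis.

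The only genuinely non-trivial step is the second one, and the main obstacle in writing it out in full detail is spelling out Lemma 4 of~\cite{brandao_local_2016}: namely, that smallness of $\Delta$ in operator norm lifts to CP-dominance of the associated channels. This requires using the explicit form of the Haar moment operator as a projector onto the permutation commutant, together with the inverse-Weingarten bound controlling the smallest eigenvalue of the Haar Choi state on its support. Once this lemma is in hand, plugging in $\|\Delta\|_\infty\leq g(\nu,t)\leq\varepsilon/d^{2t}$ finishes the proof immediately.
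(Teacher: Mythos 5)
Your proposal is correct and follows essentially the same route as the paper, which itself gives no standalone proof but simply combines the inequality $\|\Phi_\nu^{(t)}-\Phi_{\mu_H}^{(t)}\|_\diamond\leq d^t g(\nu,t)$ (yielding Eq.~\eqref{eq:approxdesign1}) with Lemma~4 of Ref.~\cite{brandao_local_2016} (yielding Eq.~\eqref{eq:approxdesign2}). Your heuristic bookkeeping of the two factors of $d^t$ in the relative-error step is informal and not exactly how that lemma's proof is organized, but since the substantive content is delegated to the cited lemma, exactly as in the paper, this is immaterial.
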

\ni Therefore, whenever we refer to an $\ep$-approximate design in this work, we mean in both the additive and relative sense in \autoref{def:approxdesign}.

In this paper we consider the following architectures of random quantum circuits comprised of 2-local unitary gates on a system of $n$ qudits with local dimension $q$:
\begin{definition}[Random quantum circuits] \label{def:RQCs}
	\textcolor{white}{a}
	\begin{enumerate}
    \item \emph{Local ($1D$) random quantum circuits:} Let $\nu_n$ denote the probability distribution on $U((\mathbb{C}^{q})^{\otimes n})$ defined by first choosing a random pair of adjacent qudits and then applying a Haar random unitary $U_{i,i+1}$ from $U(q^2)$.
    Without further clarification we assume periodic boundary conditions (pbc), i.e. we identify the qudits $1$ and $n+1$,
    else we speak of local random circuits with open boundary conditions (obc).
    \item \emph{Brickwork random quantum circuits:} Apply first a unitary $U_{1,2}\otimes U_{3,4}\otimes...$ and then a unitary $U_{2,3}\otimes U_{4,5}\otimes...$, where all $U_{i,i+1}$ are drawn Haar-randomly. For simplicity we assume in this case an even number of qudits.  
    We denote this distribution by $\nu^{\rm bw}_n$.
    \item \emph{Non-local random quantum circuits:}
    In each step of the non-local random quantum circuit we randomly draw a pair of qudits $(i,j)$ and apply a Haar-random gate from $U(q^2)$ to this subsystem.
    We denote the corresponding measure on $U(q^n)$ by $\nu^{\mathrm{non}}_n$.
	\end{enumerate}
\end{definition}

Each of these probability distributions defines a single time step for the random quantum circuit model. We will often discuss the RQC {\it depth}.
A depth $T$ random quantum circuit will refer to the evolution after $T$ time steps in the model, namely the distribution $\nu^{*T}$.
Note that in the case of brickwork RQCs, each time step consists of two layers. 

We now mention some previous results which computed the design depth for random quantum circuits. 
As we will be interested in manipulating the local dimension to improve on previous results, we present two extremes in this regard.

\begin{theorem}[Cor.\ 6 in \cite{brandao_local_2016}]
Local random quantum circuits on $n$ qubits, $q=2$, form $\varepsilon$-approximate unitary designs if the circuit depth is
\begin{equation}
    T \geq C n \lceil \log (4t)\rceil^2 t^{9.5} (2nt + \log 1/\varepsilon)\,,
\end{equation}
where the constant is taken to be $C=4\times 10^7$.
\end{theorem}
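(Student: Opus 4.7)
The plan is to bootstrap from the single-step spectral gap for local $1D$ random quantum circuits proved in~\cite{brandao_local_2016} by combining it with the two structural ingredients already recorded in the preliminaries: the convolution amplification~\eqref{eq:gconvolution} and the spectral-gap-to-design conversion of~\autoref{lemma:gtodesign}. The technical heart of~\cite{brandao_local_2016}, which I would import as a black box, is a lower bound of the form
\begin{equation*}
1 - g(\nu_n,t) \;\geq\; \frac{c}{n \lceil\log(4t)\rceil^2\, t^{\,5 + 3.1/\log q}}
\end{equation*}
for some absolute constant $c>0$. Specializing to $q=2$ rounds the exponent $5 + 3.1/\log 2 \approx 9.47$ up to $9.5$.

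From here the derivation is essentially mechanical. First I would amplify the single-step estimate via~\eqref{eq:gconvolution}:
\begin{equation*}
g(\nu_n^{*T},t) \;\leq\; g(\nu_n,t)^T \;\leq\; \Bigl(1 - \tfrac{c}{n \lceil\log(4t)\rceil^2 t^{9.5}}\Bigr)^T \;\leq\; \exp\!\Bigl(-\tfrac{cT}{n \lceil\log(4t)\rceil^2 t^{9.5}}\Bigr).
\end{equation*}
Second, I would invoke~\autoref{lemma:gtodesign} with $d = 2^n$: a sufficient condition for $\nu_n^{*T}$ to be an $\varepsilon$-approximate $t$-design is $g(\nu_n^{*T},t) \leq \varepsilon\, 2^{-2nt}$. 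Taking logarithms and feeding in the exponential bound turns this into the linear inequality
\begin{equation*}
\frac{cT}{n \lceil\log(4t)\rceil^2 t^{9.5}} \;\geq\; 2nt\log 2 + \log(1/\varepsilon),
\end{equation*}
which upon rearrangement is exactly the claimed depth $T \geq C n \lceil\log(4t)\rceil^2 t^{9.5}\bigl(2nt + \log(1/\varepsilon)\bigr)$, with the universal constant $C$ absorbing $\log(2)/c$.

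The only genuine obstacle is establishing the single-step gap itself — the corollary is simply its exponential amplification against the $d^{2t} = 2^{2nt}$ penalty in~\autoref{lemma:gtodesign}, which is why the final depth scales as the inverse single-step gap times the factor $(2nt + \log 1/\varepsilon)$. In~\cite{brandao_local_2016} that underlying gap bound rests on three nontrivial inputs: an exact Weingarten computation of the spectral gap of the two-site Haar moment operator (giving a constant-order local gap), the Nachtergaele martingale method to propagate the local gap to the full system at a cost of only a $1/n$ factor, and an approximate orthogonality estimate for the projectors associated to overlapping two-qudit blocks, which is the source of the $\lceil\log(4t)\rceil^2$ factor. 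None of these enters into the brief derivation of the corollary above, which is why the proof is comparatively short once the spectral gap has been proved.
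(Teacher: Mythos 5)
This statement is quoted in the paper as Corollary~6 of Ref.~\cite{brandao_local_2016} and is not proved here at all; it appears only as background, so there is no in-paper argument to compare against. Your reconstruction is the standard and correct derivation of that corollary from the underlying spectral gap bound: black-box the single-step estimate $g(\nu_n,t)\leq 1-\bigl(c\,n\lceil\log(4t)\rceil^2 t^{5+3.1/\log 2}\bigr)^{-1}$, amplify it with Eq.~\eqref{eq:gconvolution}, and convert to the design statement via \autoref{lemma:gtodesign} with $d=2^n$, which produces exactly the $(2nt+\log 1/\varepsilon)$ factor and the inverse-gap prefactor. The one caveat is that everything nontrivial lives in the imported gap bound, so your proposal is a correct reduction rather than a self-contained proof; also, as a minor point of attribution, the $\lceil\log(4t)\rceil^2$ factor in Ref.~\cite{brandao_local_2016} arises from the martingale (Nachtergaele) recursion over block sizes of order $\log t$ (where the $t$-independent path-coupling bound must be used before approximate orthogonality applies), rather than from the approximate orthogonality estimate itself, which instead governs the polynomial $t^{5+3.1/\log q}$ dependence.
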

At the other extreme, we have:
\begin{theorem}[\cite{NHJ19}]\label{thm:largeq}
Brickwork random quantum circuits on $n$ qudits, with large local dimension $q$, form $\varepsilon$-approximate unitary designs if the circuit depth is
\begin{equation}
T \geq 4nt + \log 1/\varepsilon\,,
\end{equation}
for some $q\geq q_0$ which depends on $t$ and the size of the circuit.
\end{theorem}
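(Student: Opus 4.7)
The plan is to reduce the design-depth claim to a spectral-gap estimate on a single brickwork time step and then invoke the amplification identity \eqref{eq:gconvolution} together with \autoref{lemma:gtodesign}. Since \autoref{lemma:gtodesign} asks for $g((\nu^{\rm bw}_n)^{*T},t)\leq \varepsilon\, q^{-2nt}$ and \eqref{eq:gconvolution} gives $g((\nu^{\rm bw}_n)^{*T},t)\leq g(\nu^{\rm bw}_n,t)^T$, it suffices to establish a single-step bound of the schematic form $g(\nu^{\rm bw}_n,t)\leq q^{-1/2}(1+o_q(1))$; taking logarithms then produces the stated depth $T\geq 4nt+\log(1/\varepsilon)$, with the $o_q(1)$ corrections absorbed into the threshold $q_0(n,t)$.

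The main task is therefore controlling the spectral gap of one brickwork layer in the large-$q$ regime. I would use the standard mapping to a permutation model: for $q\geq t$, the moment operator of a single Haar-random two-site gate is exactly the orthogonal projector onto the $t!$-dimensional commutant of $U(q^2)^{\otimes t}$, spanned by the vectorized permutation states $\{|\sigma\rangle:\sigma\in S_t\}$. These states have Gram matrix $G_{\sigma,\pi}=q^{2\#\mathrm{cyc}(\sigma^{-1}\pi)}$, whose inverse is the Weingarten function with a convergent $1/q^2$ expansion whose leading piece is $q^{-2t}\delta_{\sigma,e}$. Inserting this projector at every gate turns the brickwork moment operator into a transfer matrix on the configuration space $S_t^n$, one permutation per qudit, with nearest-neighbor interactions governed by the Weingarten weights.

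In this effective stat-mech picture, the Haar $+1$-eigenspace corresponds to the $t!$ globally-aligned configurations $|\sigma\rangle^{\otimes n}$, i.e.\ to the commutant of $U(q^n)^{\otimes t}$. Any orthogonal eigenvector must contain at least one ``domain wall'' between distinct permutations at adjacent sites, and each such wall costs at least one factor of $q^{-1}$ through the Gram matrix. A Peierls-type/transfer-matrix argument, exploiting the leading $q^{-2t}\delta_{\sigma,e}$ structure of the Weingarten function together with the tensor-product form of the even and odd brickwork sublayers, then yields $g(\nu^{\rm bw}_n,t)\leq q^{-1/2}(1+O(1/q))$, which combined with the reduction above delivers the theorem.

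The hard part, and the reason the result requires a threshold $q_0$ depending on both $n$ and $t$, is controlling the subleading Weingarten corrections uniformly over the exponentially large configuration space $S_t^n$: a naive term-by-term bound introduces combinatorial factors growing with $n$ and $t$ that must be shown to be dominated by the domain-wall suppression $q^{-1}$. I would organize these corrections as a convergent cluster expansion of the effective Ising-like transfer matrix, or alternatively bound the off-diagonal matrix elements directly using monotonicity of the Weingarten function in $q^2$; either route is what sets the quantitative threshold $q_0(n,t)$ appearing in the statement.
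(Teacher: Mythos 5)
This statement is imported verbatim from Ref.~\cite{NHJ19}; the paper under review gives no proof of it, so the only meaningful comparison is with the cited source. Your sketch follows essentially the same strategy as that reference: replace each two-site Haar average by the projector onto the span of permutation operators (valid once $q^2\geq t$), reorganize the brickwork moment operator as a transfer matrix on $S_t^{\times n}$ with Weingarten/Gram weights, identify the Haar fixed space with the globally aligned configurations, and argue that every misaligned configuration carries domain walls each suppressed by powers of $1/q$. The one genuine difference is where the contraction is measured: \cite{NHJ19} evaluates the frame potential, i.e.\ the \emph{2-norm} of the difference of the full depth-$T$ moment operators, as a partition function whose domain walls become free random walks as $q\to\infty$, whereas you propose a \emph{single-step} operator-norm bound $g(\nu^{\rm bw}_n,t)\lesssim q^{-1/2}$ followed by amplification via Eq.~\eqref{eq:gconvolution} and \autoref{lemma:gtodesign}. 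Your route is structurally cleaner (it matches how the rest of this paper handles design depths), but note that $M(\nu^{\rm bw}_n,t)$ is a product of two non-commuting layer projectors, so the single-step operator norm is a principal-angle quantity between the even- and odd-layer invariant subspaces; it is not obviously easier to control than the frame potential, and the claimed bound $q^{-1/2}(1+o_q(1))$ is asserted rather than derived.

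That assertion is the entire content of the theorem, so as a proof your proposal has a gap exactly where you say it does: the uniform control, over the $(t!)^{O(n)}$ configurations, of the subleading Weingarten corrections and of the entropy of domain-wall placements. You correctly diagnose that this is what forces the threshold $q_0(n,t)$, and the cited reference does not remove that dependence either, so the sketch is consistent with the (deliberately weak) form of the statement. To make it airtight one would either have to carry out the cluster expansion you allude to, or do what \cite{NHJ19} does and bound the 2-norm of the full circuit directly, paying the $d^t=q^{nt}$ conversion factor to the diamond norm --- which is where the $4nt$ in the depth comes from. As a reconstruction of the cited argument your proposal is faithful in outline but incomplete in the one step that matters quantitatively.
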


Part of the goal of this work is to try and close the gap between these two results. We focus on the former approach, where was observed that the circuit size $T$ required for local random quantum circuits to form an $\ep$-approximate unitary $t$-designs can be determined from the spectral gap $\Delta$ of a Hamiltonian, as described below.
Lastly, Ref.~\cite{brandao_local_2016} also showed that a lower bound on the depth needed for $1D$ random circuits to form unitary designs is $\Omega(nt/\log(nt))$, and thus the linear scaling in $n$ and $t$ cannot be further improved. 

For higher-dimensional random quantum circuits, the scaling in the number of qudits can be improved, and in Ref.~\cite{HM18} they showed that RQCs on a $D$-dimensional lattice form approximate designs in $O(n^{1/D}{\rm poly}(t))$ depth. It remains to be seen if a linear design growth holds in higher dimensions. Other (non-RQC) implementations of approximate unitary designs are also known \cite{HL09}, including some time-dependent Hamiltonian constructions \cite{Nakata16,Onorati17}. More recently, Ref.~\cite{QMhomeopathy20} took a different approach towards efficiency and proved that $O(n\, {\rm poly}(t))$ depth random Clifford circuits are approximate $t$-designs (for $t^2\leq O(n)$) with only $\tilde O(t^4)$ non-Clifford gates dispersed throughout the circuit.

We end the section by emphasizing the utility of high-degree designs. While some applications of approximate unitary designs in the literature only require control over the first few moments, higher moments are important for establishing concentration bounds \cite{LowDeviation09} and have recently been essential in proving statements about the saturation of entanglement \cite{Liu2018}, the late-time equilibration of subsystems \cite{EntFlucs20}, and the growth of quantum complexity \cite{CompGrowth19}. Specifically, Ref.~\cite{CompGrowth19} proved a linear relation between the circuit complexity of unitaries in an approximate design and the degree of the design $t$. This was established for both the standard circuit complexity of a unitary, as well as a stronger notion of complexity in terms of optimal distinguishing measurements. Consequent to this work is a relation between circuit depth and complexity growth; rigorously showing a linear design growth proves a linear growth of the quantum complexity in time.

\section{Constant spectral gap for large local dimensions from Knabe bounds}

In this section we bound spectral gaps for large local dimensions and use them to deduce the depth at which random quantum circuits form designs.
\begin{theorem}[Spectral gaps for large $q$]\label{thm:gapbound}
	Local random quantum circuits have a spectral gap that can be bounded by
	\begin{equation}
	g(\nu_n,t)\leq 1 - \frac{1}{2n}
	\end{equation}
	for all $q\geq 6t^2, n\geq 4$ and $t\geq 1$.
\end{theorem}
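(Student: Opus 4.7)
The plan is to interpret the moment operator as arising from a frustration-free local Hamiltonian and then invoke a finite-size gap criterion. Specifically, one has $M(\nu_n,t)=\tfrac{1}{n}\sum_{i=1}^n P_{i,i+1}$, where $P_{i,i+1}$ is the Haar projector of $U(q^2)^{\otimes t}$ acting on sites $(i,i+1)$ in the doubled Hilbert space, so that $H:=n\bigl(I-M(\nu_n,t)\bigr)=\sum_i h_{i,i+1}$ with $h_{i,i+1}=I-P_{i,i+1}$ is a translation-invariant, frustration-free Hamiltonian with projector interactions whose ground space is exactly the range of $M(\mu_H,t)$. Under this dictionary, the conclusion $g(\nu_n,t)\leq 1-1/(2n)$ is equivalent to the spectral gap estimate $\Delta(H)\geq 1/2$.

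Next I would apply a Knabe-type finite-size criterion, in the sharpened Gosset--Mozgunov form~\cite{GM15}, to the chain $H$. Bounds of this form say that if the restriction $H_L$ of $H$ to every window of $L$ consecutive $2$-local terms has a gap $\gamma_L$ exceeding the threshold $1/(L-1)$, then the global gap obeys $\Delta(H)\geq \tfrac{L}{L-1}\bigl(\gamma_L-\tfrac{1}{L-1}\bigr)$, with an even sharper constant in the Gosset--Mozgunov version. It therefore suffices to lower bound $\gamma_L$ on a window of constant size, which is where the hypothesis $n\geq 4$ enters: the window must actually fit inside the chain under periodic boundary conditions.

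The estimate on $\gamma_L$ is provided by the approximate orthogonality lemma of~\cite{brandao_local_2016}, which controls the norm $\|P_{i,i+1}P_{i+1,i+2}-P_\infty\|_\infty$ by a small function of $t$ and $q$. A standard two-projector / Jordan's lemma argument then converts this near-orthogonality of neighbouring ground spaces into a lower bound of the form $\gamma_L\geq 1-\eta(t,q)$, where $\eta$ decays like $t^2/q$. The regime $q\geq 6t^2$ is precisely the threshold that pushes $\gamma_L$ above the Knabe cutoff by the margin demanded by $\Delta(H)\geq 1/2$, from which the claimed bound on $g(\nu_n,t)$ follows immediately.

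The main technical obstacle is tuning the two inputs together. The approximate orthogonality estimate from~\cite{brandao_local_2016} must be stated with explicit constants in $t$ and $q$, and the smallest feasible window size $L$ must be used in the finite-size criterion, because any slack in either step forces $q$ to scale as a strictly larger power of $t$. Carrying this out with the sharp Gosset--Mozgunov constants and the cleanest form of the orthogonality estimate is what allows the simple-looking condition $q\geq 6t^2$ to suffice, and it is also what makes $L$ a small absolute constant so that the resulting requirement $n\geq 4$ is as weak as stated.
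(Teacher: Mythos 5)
Your proposal follows essentially the same route as the paper: rewrite $g(\nu_n,t)=1-\Delta(H_{n,t})/n$ for the frustration-free chain of local Haar projectors, apply the $m=3$ Knabe-type finite-size criterion, and push the three-site bulk gap above the threshold $1/2$ using the approximate orthogonality of permutation operators from Ref.~\cite{brandao_local_2016}, with $q\geq 6t^2$ supplying exactly the needed margin. The only cosmetic differences are that the paper bounds the second eigenvalue of the average $\tfrac12(P_{12}+P_{23})$ directly via a synthesis-operator computation rather than bounding the product $P_{12}P_{23}$ and invoking Jordan's lemma, and that to cover $n\geq 4$ (rather than $n\geq 7$) one must fall back on Knabe's original criterion, which at $m=3$ happens to give the same constants as the Gosset--Mozgunov refinement you cite.
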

As explained in the preliminaries, this implies the following:
\begin{corollary}[Unitary designs for large $q$]\label{cor:RQCdesigns}
	Assume that $q\geq 6t^2$ and $n\geq 4$. Then the following statements hold:
	\begin{enumerate}
	\item  Local random quantum circuits of depth $2n(2nt\log(q)+\log(1/\varepsilon))$ are $\ep$-approximate unitary $t$-designs.
	\item Brickwork random quantum circuits of depth $18(2nt\log(q)+\log(1/\varepsilon))$ are $\ep$-approximate unitary $t$-designs.
	\end{enumerate}
\end{corollary}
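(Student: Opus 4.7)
The plan is to combine \autoref{thm:gapbound} with the convolution inequality Eq.~\eqref{eq:gconvolution} and the gap-to-design implication \autoref{lemma:gtodesign}. Since a depth-$T$ RQC samples from the $T$-fold convolution $\nu^{*T}$, we have $g(\nu^{*T},t)\leq g(\nu,t)^T$, and the task reduces to choosing $T$ so that this falls below the design threshold $\varepsilon/q^{2nt}$ required by \autoref{lemma:gtodesign}.

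For part 1, starting from $g(\nu_n,t)\leq 1-\tfrac{1}{2n}$ and using $1-x\leq e^{-x}$, one gets $g(\nu_n^{*T},t)\leq \exp(-T/(2n))$. Imposing $\exp(-T/(2n))\leq \varepsilon\,q^{-2nt}$ and taking logarithms yields exactly $T\geq 2n(2nt\log q+\log(1/\varepsilon))$, after which \autoref{lemma:gtodesign} delivers the $\varepsilon$-approximate $t$-design property in both the additive and relative senses of \autoref{def:approxdesign}.

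For part 2 the skeleton is identical but I first need a constant (in $n$) lower bound on the brickwork gap. Each single-gate moment operator $M_i=\int U_{i,i+1}^{\otimes t}\otimes \bar U_{i,i+1}^{\otimes t}\,\mathrm{d}\mu_H$ is a Hermitian projector (the Haar measure is self-convolving and inversion-invariant), so $M(\nu_n^{\mathrm{bw}},t)=P_{\mathrm{odd}}P_{\mathrm{even}}$ where $P_{\mathrm{odd}}=\prod_{i\,\mathrm{odd}}M_i$ and $P_{\mathrm{even}}=\prod_{i\,\mathrm{even}}M_i$ are themselves projectors (each a product of commuting projectors on disjoint pairs). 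Writing $\Pi$ for the projector onto the joint Haar-invariant subspace and using $P_{\mathrm{odd}}\Pi=\Pi P_{\mathrm{odd}}=\Pi$ (and similarly for $P_{\mathrm{even}}$), the identity $M(\nu_n^{\mathrm{bw}},t)-\Pi=(P_{\mathrm{odd}}-\Pi)(P_{\mathrm{even}}-\Pi)$ reduces the gap to a two-projector problem governed by the Friedrichs angle between $\mathrm{range}(P_{\mathrm{odd}})$ and $\mathrm{range}(P_{\mathrm{even}})$. I would control this angle by the same Knabe-type finite-size analysis that underlies \autoref{thm:gapbound}, now applied to the frustration-free brickwork Hamiltonian $H_{\mathrm{bw}}=(I-P_{\mathrm{odd}})+(I-P_{\mathrm{even}})$, to extract a constant bound of the form $g(\nu_n^{\mathrm{bw}},t)\leq 1-1/18$. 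The same exponentiate-and-invert step as in part 1 then gives the advertised depth $T\geq 18(2nt\log q+\log(1/\varepsilon))$.

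The main obstacle is the brickwork gap itself. The local part is essentially a one-line manipulation once \autoref{thm:gapbound} is in hand, but extracting the specific constant $1/18$ requires rerunning the finite-size argument on the two-layer parallel structure and carefully tracking numerical factors. In particular, the naive submultiplicative bound $\|(P_{\mathrm{odd}}-\Pi)(P_{\mathrm{even}}-\Pi)\|_\infty\leq 1$ is useless because each factor is a projector of norm one on $\mathrm{range}(\Pi)^\perp$, so a genuine Knabe/Friedrichs-angle analysis of $H_{\mathrm{bw}}$ is needed rather than a direct reduction to the local spectral gap.
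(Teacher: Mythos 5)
Part 1 of your proposal is correct and is exactly the paper's argument: \autoref{thm:gapbound} gives $g(\nu_n,t)\leq 1-\tfrac{1}{2n}$, amplification via Eq.~\eqref{eq:gconvolution} and the bound $1-x\leq e^{-x}$ give $g(\nu_n^{*T},t)\leq e^{-T/2n}$, and \autoref{lemma:gtodesign} with $d=q^n$ yields the stated depth.

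For part 2 there is a genuine gap. You correctly identify that $M(\nu_n^{\rm bw},t)-\Pi=(P_{\rm odd}-\Pi)(P_{\rm even}-\Pi)$ and that the problem is to bound the Friedrichs angle between the two ranges, but you then defer the entire quantitative content to ``the same Knabe-type finite-size analysis \ldots applied to $H_{\rm bw}=(I-P_{\rm odd})+(I-P_{\rm even})$.'' That step does not go through as stated: Knabe bounds (\autoref{lemma:knabe}) are finite-size criteria for translation-invariant chains of \emph{local} nearest-neighbour projectors, whereas $H_{\rm bw}$ consists of two global projectors, so there is no subsystem Hamiltonian to which the criterion applies, and the gap of $H_{\rm bw}$ (equivalently $1-\cos\theta_F$) is precisely the unknown quantity you are trying to bound. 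The missing ingredient is the detectability lemma (\autoref{lemma:detect}): applied to $H_{n,t}=\sum_i P_{i,i+1}$, where each local projector fails to commute with $g=2$ neighbours, it converts the already-established local gap $\Delta(H_{n,t})\geq \tfrac12$ into
\begin{equation}
\big\|M(\nu_n^{\rm bw},t)-M(\mu_H,t)\big\|_\infty\leq \frac{1}{\sqrt{\Delta(H_{n,t})/4+1}}\leq\sqrt{8/9}\leq 1-\frac{1}{18}\,,
\end{equation}
which is where the constant $1/18$ actually comes from. Without this lemma (or an equivalent quantitative relation between $\Delta(\sum_i Q_i)$ and $\|\prod_i(\iden-Q_i)\|$ on the orthogonal complement of the ground space), your part 2 asserts the conclusion rather than proving it.
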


Notice that for $q=\Omega(t^2)$, we have an ultimate scaling of $O(nt\log(t))$.
We can insert $O(t\log(t))$ into the main result of Ref.~\cite{CompGrowth19} to show that the complexity of the vast majority of instances has almost linear complexity growth at least until $t\sim\sqrt{q}$.
This provides further evidence for the long-time linear growth of quantum complexity.

As in Ref.~\cite[Lem.~16]{brandao_local_2016}, it proves useful to reformulate the difference in operator norm in terms of a one-dimensional local Hamiltonian:
\begin{equation}\label{eq:gHspecgap}
g(\nu_n,t)=1-\frac{\Delta(H_{n,t})}{n}\,,
\end{equation}
where
\begin{equation}
H_{n,t}=\sum_{i=1}^nP_{i,i+1}, \qquad\text{and}\qquad P_{i,i+1} := \iden - \iden_{[1,i-1]} \otimes P_H^{(2)} \otimes \iden_{[i+2,n]}\,,
\end{equation}
where we introduced the shorthand notation
\begin{equation}
P^{(m)}_H:=M(\mu_H,t)
\end{equation}
on $m$ qudits.
The local Hamiltonian $H_{n,t}$ is frustration-free, i.e. it has a ground space with eigenvalue $0$. 
In fact this ground space can be characterized as the space spanned by permutations.
Denote by $r(\pi)$ the standard representation of a permutation $\pi\in S_t$:
\begin{equation}
r(\pi)|j_1\rangle\otimes...\otimes|j_t\rangle:=|j_{\pi(1)}\rangle\otimes...\otimes|j_{\pi(t)}\rangle.
\end{equation}
Then the ground space of $H_{n,t}$ is spanned by the vectors $\ket{\psi_\pi}^{\otimes n}$, where
\begin{equation}
|\psi_{\pi}\rangle:= q^{-t/2}\mathrm{vec}(r(\pi))=(\iden\otimes r(\pi))|\Omega\rangle,\qquad |\Omega\rangle:=q^{-t/2}\sum_{i=1}^{q^{t}}|i,i\rangle\,,
\end{equation}
where $|i\rangle$ denotes an orthonormal basis of $(\mathbb{C}^{q})^{\otimes t}$.
Notice that $|\psi_{\pi}\rangle$ is normalized with respect to the Frobenius norm.
Moreover, we denote the density matrix of the states as
\begin{equation}
\psi_{\pi}:=|\psi_{\pi}\rangle\langle\psi_{\pi}|\,.
\end{equation}
To bound the gap of the Hamiltonian, we use the following finite-size criteria from Ref.~\cite{knabe1988energy,GM15}:
\begin{lemma}[Knabe bound]\label{lemma:knabe}
	Consider a frustration-free translation-invariant Hamiltonian $H_n=\sum_{i=1}^n P_{i,i+1}$ with projectors $P_{i,i+1}$.
	Define the bulk Hamiltonian $H_m^{\rm bulk}:=\sum_{i=1}^{m-1}P_{i,i+1}$.
	Let $m>2$ and $n>2m$. Then
	\begin{equation}
	\Delta(H_n)\geq \frac{5}{6}\frac{m^2+m}{m^2-4}\left(\Delta(H^{\rm bulk}_m)-\frac{6}{m(m+1)}\right)\,.
		\end{equation}
\end{lemma}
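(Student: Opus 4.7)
The strategy, following Knabe \cite{knabe1988energy} and the sharpening by Gosset and Mozgunov \cite{GM15}, is to deduce the lemma from an operator inequality
\begin{equation}
H_n^2 \;\succeq\; \lambda_{m,n}\, H_n,
\end{equation}
where $\lambda_{m,n}$ equals the right-hand side of the claim. For any frustration-free positive operator $H$ such an inequality is equivalent to $\Delta(H) \geq \lambda_{m,n}$: both sides vanish on the ground space, and on its orthogonal complement the inequality reads $\lambda \geq \lambda_{m,n}$ for every nonzero eigenvalue $\lambda$ of $H$.

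To produce this inequality I would introduce the shifted bulk Hamiltonians $h_i := \sum_{j=i}^{i+m-2} P_{j,j+1}$ for $i = 1,\dots, n$. Each $h_i$ is, by translation invariance, unitarily equivalent to $H_m^{\mathrm{bulk}}$, hence frustration-free with the same gap, so the local quadratic bound $h_i^2 \succeq \Delta(H_m^{\mathrm{bulk}})\, h_i$ holds. Summing over $i$ and using that every projector $P_{j,j+1}$ lies in exactly $m-1$ bulk windows (the hypothesis $n > 2m$ prevents a window from wrapping onto itself on the cycle) gives the global bound
\begin{equation}
\sum_{i=1}^n h_i^2 \;\succeq\; (m-1)\,\Delta(H_m^{\mathrm{bulk}})\, H_n .
\end{equation}

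The heart of the argument is the complementary comparison with $H_n^2$. Expanding the squares one obtains
\begin{equation}
\sum_i h_i^2 = (m-1) H_n + \sum_{d=1}^{m-2}(m-1-d)\sum_j \bigl\{ P_{j,j+1}, P_{j+d,j+d+1} \bigr\},
\end{equation}
while $H_n^2 = H_n + \sum_{d\geq 1}\sum_j \{ P_{j,j+1}, P_{j+d,j+d+1}\}$. Projectors at separation $d \geq 2$ commute, so the corresponding anticommutators equal $2\, P_{j,j+1} P_{j+d,j+d+1} \succeq 0$ and can be matched or dominated between the two expansions by positivity. The only delicate contribution is the overlapping term $d = 1$, which appears with weight $m-2$ in $\sum_i h_i^2$ and weight $1$ in $H_n^2$; controlling it requires an operator identity expressing a symmetric combination of triple-overlap anticommutators as a positive multiple of $H_n$, plus a shift of order $1/(m(m+1))$.

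Combining the bulk bound with the comparison of squares and rearranging yields the desired $H_n^2 \succeq \lambda_{m,n}\, H_n$ in the advertised form. The principal obstacle is exactly this last step: the naive Knabe counting produces a weaker prefactor with no $1/(m(m+1))$-scale correction, and upgrading it to $\tfrac{5}{6}\tfrac{m^2+m}{m^2-4}$ together with the $6/(m(m+1))$ shift relies on the Gosset–Mozgunov operator identity, which I would import directly from \cite{GM15} rather than rederive. The remaining work is then a short combinatorial verification that the weights assemble into $\lambda_{m,n}$.
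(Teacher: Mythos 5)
The paper does not prove this lemma at all: it is imported verbatim from Knabe~\cite{knabe1988energy} and Gosset--Mozgunov~\cite{GM15}, and Appendix~A of the paper likewise just restates the three finite-size criteria as cited theorems. So there is no in-paper proof to compare against, and your decision to treat the key identity as an import is consistent with how the paper itself handles the statement.

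That said, read as a standalone proof your sketch has a gap exactly where you flag it. The unweighted window construction you describe --- $h_i^2 \succeq \Delta(H_m^{\rm bulk})\, h_i$, sum over $i$, count multiplicities, and absorb the $d\geq 2$ anticommutators by positivity --- is Knabe's original argument and, carried to completion, yields only $\Delta(H_n)\geq \frac{m-1}{m-2}\bigl(\Delta(H_m^{\rm bulk})-\frac{1}{m-1}\bigr)$, i.e.\ Theorem~7 of the appendix, with threshold $1/(m-1)$. The statement you are asked to prove is the Gosset--Mozgunov sharpening, whose entire content is the improvement of that threshold to $6/(m(m+1))$ (which scales as $1/m^2$ and is what makes the criterion useful for detecting $1/n^2$-gapped systems). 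That improvement does not follow from the uniform counting you set up; it requires Gosset and Mozgunov's \emph{weighted} sum of squares $\bigl(\sum_j c_j P_{j,j+1}\bigr)^2$ with nonconstant coefficients, together with their operator lemma controlling $\{P_j,P_{j+1}\}$ in terms of $P_j+P_{j+1}$, and the hypothesis $n>2m$ enters through that construction rather than through the wrap-around issue you mention. Since you explicitly state you would import this step from \cite{GM15} rather than rederive it, the proposal is an honest reduction to the cited result rather than a proof; to make it self-contained you would need to reproduce the weighted construction, not just the combinatorial bookkeeping.
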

\ni In particular, we need the bound for $m=3$:
\begin{equation}\label{eq:knabem=3}
\Delta(H_n)\geq 2\left(\Delta(H^{\rm bulk}_3)-\frac{1}{2}\right).
\end{equation}
We review Knabe bounds of this type and some generalizations in \autorefapp{app:knabe}.

\vspace*{4pt}
We proceed by defining the \textit{frame operator} of the basis $\{|\psi_{\pi}\rangle\}$  as:
\begin{equation}
S:=\sum_{\pi\in S_t}\psi_{\pi}.
\end{equation}
The following lemma was proven in Ref.~\cite{brandao_local_2016}.
\begin{lemma}[Approximate orthogonality of permutations]\label{lemma:BHHinequality}
Consider the Haar-projector $P_H$ on the unitary group $U(D)$.
Assume that $D>t^2$. Then, the following bound holds:
 \begin{equation}
 \|P_H-S\|_{\infty}\leq \frac{t^2}{D}.
 \end{equation}
\end{lemma}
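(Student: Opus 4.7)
The plan is to reduce $\|P_H-S\|_\infty$ to the operator norm of a concrete combinatorial matrix, namely the Gram matrix of the permutation vectors, and then estimate that matrix using an elementary cycle-counting identity.

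First, I would introduce the linear map $V$ that sends an orthonormal basis $\{|\pi\rangle\}_{\pi\in S_t}$ of an auxiliary $t!$-dimensional space to the vectors $|\psi_\pi\rangle$. By construction $S=VV^\dagger$, while the Gram matrix $G:=V^\dagger V$ has entries
\begin{equation}
G_{\pi,\sigma}=\langle\psi_\pi|\psi_\sigma\rangle=D^{-t}\,\Tr\bigl(r(\pi)^\dagger r(\sigma)\bigr)=D^{\,c(\pi^{-1}\sigma)-t},
\end{equation}
where $c(\tau)$ is the number of cycles of $\tau$. Under the hypothesis $D>t^2\geq t$, the operators $\{r(\pi)\}$ are linearly independent (a classical Schur--Weyl fact), so $G$ is invertible and $P_H=VG^{-1}V^\dagger$ is the orthogonal projector onto $\mathrm{range}(V)$. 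Taking an SVD $V=U\Sigma W^\dagger$ with $U$ an isometry yields $P_H-S=U(I-\Sigma^2)U^\dagger$ and $G-I=W(\Sigma^2-I)W^\dagger$, so that
\begin{equation}
\|P_H-S\|_\infty=\|I-G\|_\infty.
\end{equation}

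Next, since $G$ is Hermitian with unit diagonal and positive off-diagonal entries, I would apply the Gershgorin/row-sum bound: $\|I-G\|_\infty\leq\max_\pi\sum_{\sigma\neq\pi}D^{c(\pi^{-1}\sigma)-t}$. By left-invariance of $S_t$ the row sum is independent of $\pi$ and equals $\sum_{\tau\neq e}D^{c(\tau)-t}$. The classical identity $\sum_{\tau\in S_t}z^{c(\tau)}=z(z+1)(z+2)\cdots(z+t-1)$ then gives
\begin{equation}
\sum_{\tau\in S_t}D^{c(\tau)-t}=\prod_{j=0}^{t-1}\bigl(1+j/D\bigr),
\end{equation}
so it remains to show $\prod_{j=1}^{t-1}(1+j/D)-1\leq t^2/D$.

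For this last step I would invoke the elementary inequality $\prod_j(1+x_j)\leq 1/(1-\sum_jx_j)$, valid whenever $x_j\geq 0$ and $\sum_jx_j<1$ (easy induction: at each step, multiplying out reduces the claim to the trivially true $b(1+\sum_{j\leq k}x_j+b)\geq 0$). Setting $x_j=j/D$ gives $s:=\sum_j x_j=t(t-1)/(2D)$, and the hypothesis $D>t^2$ forces $s<1/2$. Hence $\prod-1\leq s/(1-s)<2s=t(t-1)/D\leq t^2/D$, as claimed. The genuinely subtle step in this plan is the very first one: the SVD identification of $\|P_H-S\|_\infty$ with $\|I-G\|_\infty$ hinges on the invertibility of $G$, i.e.\ on the linear independence of the permutations in the regime $D\geq t$. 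Once this identification is in place, the remaining estimate is elementary combinatorics and, as the computation shows, leaves a bit of slack in the stated constant.
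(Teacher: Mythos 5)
Your proof is correct and takes essentially the same route as the paper's (which defers to Lemma 17 of Brand\~{a}o--Harrow--Horodecki and reproduces the identical argument for the orthogonal analogue in its appendix): reduce $\|P_H-S\|_\infty$ to $\|I-G\|_\infty$ for the Gram matrix $G$ via the synthesis-operator/SVD identification, bound this by the off-diagonal row sum, and evaluate the row sum with the cycle-counting identity $\sum_{\tau\in S_t}D^{c(\tau)}=D(D+1)\cdots(D+t-1)$. The only difference is your explicit elementary product inequality in the last step, which is a cosmetic variation on the standard estimate.
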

The proof of \autoref{lemma:BHHinequality} carries over from Ref.~\cite[Lem.~16]{brandao_local_2016} without further modifications.
\begin{proof}[Proof of \autoref{thm:gapbound}]
	Consider the probability distribution that applies a Haar-random unitary from $U(q^2)$ to a random pair of qudits $(i,i+1)$ with $1\leq i\leq m-1$.
	Denote this probability distribution by $\nu^{\rm bulk}_m$.
Then, we have the difference of moment operators
\begin{equation}
M\big(\nu^{\rm bulk}_3,t\big)-M(\mu_H,t) = \frac12\left(P^{(2)}_{H}\otimes \iden+\iden\otimes P^{(2)}_H\right)-P_H^{(3)}\,.
\end{equation}
This expression is a positive semidefinite operator.
We can thus apply \autoref{lemma:BHHinequality} and obtain
\begin{align}
\begin{split}
 \big\|M\big(\nu^{\rm bulk}_3,t\big)-M(\mu_H,t)\big\|_{\infty} &\leq \left\|\frac12\left(S^{(2)}\otimes \iden + \iden\otimes S^{(2)}\right)-S^{(3)}\right\|_{\infty}+\frac{t^2}{q^2}+\frac{t^2}{q^3}\\
& \leq \frac{1}{2} \bigg\|\sum_{\pi}\left(\psi_{\pi}^{\otimes 2}\otimes \iden + \iden \otimes \psi_{\pi}^{\otimes 2}\right)-2\psi_{\pi}^{\otimes 3}\bigg\|_\infty + \frac{t^2}{q^2}+\frac{t^2}{q^3}\,.
\end{split}
\end{align}
Consider an orthonormal basis $\{|\pi\rangle\}$ for $\spann\{|\psi_{\pi}\rangle\}$ and define the \textit{synthesis operator}
\begin{equation}
B:=\sum_{\pi\in S_t}|\pi\rangle\langle\psi_{\pi}|\,.
\end{equation}
Notice that $B^{\dagger}B=S$.
Then, we have
\begin{align}
&\big\|M\big(\nu^{\rm bulk}_3,t\big)-M(\mu_H,t)\big\|_\infty\\
&\leq \frac12 \bigg\|(\iden \otimes B^\dagger \otimes \iden) \bigg(\!\sum_{\pi}\psi_{\pi} \otimes\ketbra{\pi}\otimes \iden + \iden \otimes \ketbra{\pi}\otimes \psi_{\pi} -2\psi_{\pi}\otimes \ketbra{\pi}\otimes \psi_{\pi}\!\bigg)(\iden\otimes B\otimes \iden)\bigg\|_{\infty}\nn
&\quad +\frac{t^2}{q^2}+\frac{t^2}{q^3}\,. \nonumber
\end{align}
Hence, we can upper bound as follows
\begin{align}
\begin{split}
&\big\| M\big(\nu^{\rm bulk}_3,t\big)-M(\mu_H,t)\big\|_\infty\\
&\quad \leq \frac12 \|B^{\dagger}B\|_\infty \max_{\pi}\big\|\psi_{\pi}\otimes \iden+\iden\otimes \psi_{\pi}-2\psi_{\pi}\otimes \psi_{\pi}\big\|_{\infty}+\frac{t^2}{q^2}+\frac{t^2}{q^3}\\
&\quad \leq \frac12\left(1+\frac{t^2}{q}\right)\max_{\pi}\Big\|\psi_{\pi}\otimes \psi_{\pi}^{\perp}+\psi_{\pi}^{\perp}\otimes \psi_{\pi}\Big\|_{\infty}+\frac{t^2}{q^2}+\frac{t^2}{q^3}\\
&\quad = \frac12\left(1+\frac{t^2}{q}\right)+\frac{t^2}{q^2}+\frac{t^2}{q^3}\,.
\end{split}
\end{align}
Therefore, choosing $q\geq 6t^2$, we have
\begin{equation}
\big\|M\big(\nu^{\rm bulk}_3,t\big)-M(\mu_H,t)\big\|_{\infty}\leq \frac58.
\end{equation}
By block diagonalization, this immediately implies the operator inequality
\begin{equation}
\iden^{\otimes 3}-M\big(\nu^{\rm bulk}_3,t\big) = \iden^{\otimes 3} - \frac12 \left(P^{(2)}_{H}\otimes \iden + \iden \otimes P^{(2)}_H\right)\geq \frac38 \left(\iden-P^{(3)}_H\right)\,,
\end{equation}
and in turn
\begin{equation}
\Delta\big(H^{\rm bulk}_{3,t}\big) = \Delta\Big(\iden^{\otimes 3}-P^{(2)}_{H} \otimes \iden + \iden^{\otimes 3} - \iden \otimes P^{(2)}_H\Big) \geq \frac68\,.
\end{equation}
Plugging this bound into Eq.~\eqref{eq:knabem=3}, we end up with
\begin{equation}
\Delta(H_{n,t})\geq 2\left(\frac68-\frac12\right)=\frac12\,.
\end{equation}
\end{proof}
We can proceed to prove \autoref{cor:RQCdesigns}.
\begin{proof}[Proof of \autoref{cor:RQCdesigns}]
	
\autoref{thm:gapbound} immediately implies the first item of \autoref{cor:RQCdesigns} using Eq.~\eqref{eq:gconvolution} combined with \autoref{lemma:gtodesign}.
To obtain the second design depth for brickwork random quantum circuits we apply the generalized version~\cite{anshu2016simple} of the detectability lemma~\cite{aharonov2009detectability}:
\begin{lemma}[Detectability lemma] \label{lemma:detect}
Let $H=\sum_{i=1}^m Q_i$ be a frustration-free Hamiltonian with $\{Q_1,...,Q_m\}$ a set of orthogonal projectors.
Assume that each $Q_i$ commutes with all but $g$ of the projectors.
Then, for any state $|\psi^{\perp}\rangle$ orthogonal to the ground space of $H$,
\begin{equation}
\left\|\prod_{i=1}^{m} (\iden-Q_i)|\psi^{\perp}\rangle\right\|_2^2\leq \frac{1}{\Delta(H)/g^2+1}\,.
\end{equation}
\end{lemma}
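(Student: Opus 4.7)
The plan is to reduce the claim to an operator inequality comparing $\iden - A^\dagger A$ to $A^\dagger H A$. First I would set $P_i := \iden - Q_i$ so that $A := \prod_{i=1}^m P_i = P_1 P_2\cdots P_m$ in some fixed ordering. Because $H$ is frustration-free and each $Q_i$ is a projector, $Q_i$ annihilates the ground space $\ker H$, so $P_i\Pi_0 = \Pi_0$, where $\Pi_0$ is the projector onto $\ker H$. It follows that $A\Pi_0 = \Pi_0$, and in particular $A$ leaves the orthogonal complement $\Pi^\perp := \iden - \Pi_0$ invariant, so that $\Pi^\perp A^\dagger A\Pi^\perp = (A\Pi^\perp)^\dagger(A\Pi^\perp)$ and $\Pi^\perp A^\dagger \Pi^\perp A\Pi^\perp = \Pi^\perp A^\dagger A\Pi^\perp$. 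The desired bound $\|A|\psi^\perp\rangle\|_2^2 \le 1/(1+\Delta(H)/g^2)$ for all unit $|\psi^\perp\rangle$ in the range of $\Pi^\perp$ is, by direct rearrangement, equivalent to
\begin{equation}
\Pi^\perp\bigl(\iden - A^\dagger A\bigr)\Pi^\perp \;\succcurlyeq\; \frac{\Delta(H)}{g^2}\,\Pi^\perp A^\dagger A\,\Pi^\perp\,.
\label{eq:dl-reduced}
\end{equation}

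Next, I would try to establish the stronger, cleaner operator inequality
\begin{equation}
\iden - A^\dagger A \;\succcurlyeq\; \frac{1}{g^2}\,A^\dagger H A\,,
\label{eq:dl-key}
\end{equation}
valid on the full Hilbert space. Indeed, sandwiching \eqref{eq:dl-key} with $\Pi^\perp$ and invoking $H \succcurlyeq \Delta(H)\Pi^\perp$ together with the two identities noted above yields exactly \eqref{eq:dl-reduced}. To prove \eqref{eq:dl-key} I would expand $\iden - A$ by iterating the identity $\iden - P_jR = Q_j + P_j(\iden - R)$, giving the telescoping $\iden - A = \sum_{i=1}^m P_1\cdots P_{i-1}Q_i$; combined with $\iden - A^\dagger A = (\iden - A^\dagger) + A^\dagger(\iden - A)$, this writes the left-hand side as a double sum whose diagonal terms essentially recover $\sum_i A^\dagger Q_i A = A^\dagger H A$ (up to commutation), and whose off-diagonal cross-terms must be bounded below. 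Here the commutation graph enters: since each $Q_i$ commutes with all but $g$ of the $Q_j$, it also commutes with the corresponding $P_j$ and, using $Q_i P_i = 0$, can be pulled through most factors of $A$ to annihilate, so that only $O(g)$ terms per index survive.

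The main obstacle is recovering the sharp prefactor $1/g^2$ in \eqref{eq:dl-key} rather than $1/g$ or worse. A naive Cauchy--Schwarz on the surviving cross-terms controls each non-commuting pair separately and loses a factor of $g$ per application, whereas one needs to exploit the max-degree-$g$ structure simultaneously in both index slots of the double sum. Following the strategy of Anshu--Arad--Vidick \cite{anshu2016simple}, I would introduce for each $i$ a localized auxiliary operator $A_i$ built from only those $P_j$ with $j$ in the commutation-neighborhood of $i$, and prove a two-step chain $\iden - A^\dagger A \succcurlyeq (1/g)\sum_i A_i^\dagger Q_i A_i \succcurlyeq (1/g^2) A^\dagger H A$ via a matrix-weighted Cauchy--Schwarz (equivalently, an SDP-style rounding) that treats the commutation graph globally rather than pair-by-pair. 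Once \eqref{eq:dl-key} is in hand, the lemma follows by evaluating the inequality on $|\psi^\perp\rangle$ and rearranging.
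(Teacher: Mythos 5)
The paper does not actually prove this lemma: it is quoted verbatim as an external result, citing the original detectability lemma of Aharonov et al.\ and the simplified proof of Anshu--Arad--Vidick \cite{anshu2016simple}, and is then only \emph{applied} (to the brickwork moment operator, where the two layers give $g=2$). So there is no in-paper proof to compare against; your proposal has to stand on its own as a reconstruction of the AAV argument.

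Your reduction is correct and complete: writing $A=\prod_i(\iden-Q_i)$, noting $A\Pi_0=\Pi_0 A=\Pi_0$ so that $A$ preserves ${\rm ran}(\Pi^\perp)$, and combining $H\succcurlyeq \Delta(H)\Pi^\perp$ with the operator inequality $\iden-A^\dagger A\succcurlyeq g^{-2}A^\dagger HA$ does yield the claimed bound after rearrangement. The gap is that this operator inequality \emph{is} the detectability lemma --- all of the content is there --- and your proposal only gestures at it. Two specific issues: (i) the decomposition $\iden-A^\dagger A=(\iden-A^\dagger)+A^\dagger(\iden-A)$ together with the telescoping of $\iden-A$ produces a sum of non-Hermitian terms and is not the right starting point for an operator inequality; the useful identity is the symmetric telescoping $\iden-A^\dagger A=\sum_{k}(P_{k+1}\cdots P_m)^\dagger Q_k(P_{k+1}\cdots P_m)$, which is a sum of positive terms and must then be compared term-by-term with $\sum_k A^\dagger Q_kA$. (ii) The step where the commutation structure enters --- pushing each $Q_k$ through the commuting factors of $A$ and controlling the at most $g$ obstructions without losing more than a factor $g^2$ overall --- is exactly the nontrivial Cauchy--Schwarz bookkeeping of \cite{anshu2016simple}, and "introduce localized operators $A_i$ and prove a two-step chain" names the strategy without executing it. As a citation-plus-reduction this is fine (and mirrors what the paper itself does); as a self-contained proof it is incomplete precisely at the point where the lemma is hard.
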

This can be directly applied to the moment operator $M(\nu^{\rm bw}_n,t)$.
Similar to the application in \cite{haferkamp2019closing}, we obtain the following bound on the brickwork spectral gap
\begin{equation}
g(\nu^{\rm bw}_n,t)= \big\| M(\nu^{\rm bw}_n,t)-M(\mu_H,t) \big\|_{\infty}\leq \frac{1}{\sqrt{\Delta(H_{n,t})/4+1}}\leq 1-\frac{1}{18}\,.
\end{equation}
Therefore, we can again apply Eq.~\eqref{eq:gconvolution} and \autoref{lemma:gtodesign}, which completes the proof of \autoref{cor:RQCdesigns}.
\end{proof}

\section{A spectral gap for non-local random quantum circuits}
In this section we consider \textit{non-local} random quantum circuits. 
Non-local is meant in a geometric sense, where the circuit architecture is defined on a complete graph of the qudits, as described in \autoref{def:RQCs}

A polynomial spectral gap can be deduced from the $1D$ result in Ref.~\cite{brandao_local_2016} since the non-local random quantum circuits contain have overlapping support with the $1D$ circuits.
More precisely, a randomly drawn pair of qudits is nearest neighbour with probability $\sim 1/n$.
This yields a spectral gap of $\Omega(n^{-2}\mathrm{poly}^{-1}(t))$ of the moment operator for non-local circuits.

Here, we prove a recursion relation for the non-local circuits that allows us to prove a scaling of $\Omega(n^{-1}\log^{-1}(n)\mathrm{poly}^{-1}(t))$.
Our proof does not require an application of the Nachtergaele method~\cite{nachtergaele1996spectral}.
Moreover, we obtain slightly improved exponents in $t$ for large local dimensions.
More precisely, we show the following result:
\begin{theorem}[Spectral gap for non-local random quantum circuit]\label{thm:spectralnonlocal}
	Let $n\geq \max\{\lceil 2.03\log_q(t)\rceil,6000\}$, then there is a constant $c(q)$ such that 
	\begin{equation}
    \big\|M(\nu_n^{\mathrm{non}},t)-M(\mu_H,t)\big\|_{\infty}\leq 1-c(q)n^{-1}\log^{-1}(n)\log(t)t^{-\alpha(q)},
	\end{equation}
	with
    \begin{equation}\label{eq:alpha}
	\alpha(q):=2.03\log^{-1}(q)\log\left(1-\left(1-\frac{1}{ 2.1q^2}\right)^{\frac14}\right)^{-1}.
	\end{equation}
\end{theorem}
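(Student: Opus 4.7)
The plan is to follow the general philosophy of \cite{brandao_local_2016}: reduce the operator-norm bound on $M(\nu_n^{\mathrm{non}},t)-M(\mu_H,t)$ to a spectral-gap analysis of a classical Markov chain on $S_t^n$, and then bound the gap of that chain by a direct coupling argument, namely Oliveira's version of Bubley--Dyer path coupling. The essential difference from the one-dimensional case is that the auxiliary chain now updates a uniformly random \emph{pair} of sites per step, which should let us bypass the Nachtergaele martingale method and instead derive a self-contained recursion for the gap.

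First I would rewrite the object of interest as a frustration-free Hamiltonian $H_n^{\mathrm{non}} := \binom{n}{2}\big(\iden - M(\nu_n^{\mathrm{non}},t)\big)$, so that bounding $\|M(\nu_n^{\mathrm{non}},t)-M(\mu_H,t)\|_\infty$ away from $1$ is equivalent to a lower bound $\Delta(H_n^{\mathrm{non}})\gtrsim c(q)\,\log(t)\,n\log^{-1}(n)\,t^{-\alpha(q)}$. Using \autoref{lemma:BHHinequality}, each two-site Haar projector $P_H^{(2)}$ can be replaced by the frame operator $S^{(2)}=\sum_{\pi}\psi_\pi^{(2)}$ at a cost of an additive error of order $t^2/q^2$ per pair. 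Block-diagonalising with respect to the almost-ground space spanned by the product permutation states $|\psi_\pi\rangle^{\otimes n}$ then reduces the problem to a spectral-gap estimate for a genuine Markov chain on $S_t^n$, whose generator, at each step, picks a uniform pair $(i,j)$ and resamples $(X_i,X_j)$ according to the explicit two-site Weingarten kernel.

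To bound the gap $\Delta_{\mathrm{aux}}$ of this auxiliary chain I would apply Oliveira's path-coupling theorem: it suffices to exhibit, for two configurations $X,Y\in S_t^n$ differing in exactly one coordinate, a coupling of the respective one-step updates whose expected Hamming distance contracts by a factor $1-\eta$ with $\eta\gtrsim n^{-1}t^{-\alpha(q)}$. The $n^{-1}$ comes from the probability that the updated pair touches the discordant site, while $t^{-\alpha(q)}$ is exactly the per-gate contraction rate for two discordant permutations computed via the overlap $\langle\psi_\pi|\psi_\sigma\rangle$ --- this is where the explicit form of $\alpha(q)$ in Eq.~\eqref{eq:alpha} appears naturally. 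Oliveira's bound then yields $\Delta_{\mathrm{aux}}\gtrsim \eta/\log(n)$, which supplies the $n^{-1}\log^{-1}(n)$ factor. Iterating a recursion of the form $\Delta(H_n^{\mathrm{non}})\geq f\big(\Delta(H_{n-1}^{\mathrm{non}}),\Delta_{\mathrm{aux}}\big)$ lifts the bound back to the full moment operator; the number of recursive levels at which the approximate-orthogonality error remains controllable is what produces the $\log(t)$ in the numerator and forces the hypothesis $n\geq \lceil 2.03\log_q t\rceil$.

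The main obstacle will be the path-coupling step: constructing a coupling whose contraction rate actually realises the exponent $\alpha(q)$ rather than some weaker one, and then tracking, level by level of the recursion, that the cumulative $O(t^2/q^2)$ error from \autoref{lemma:BHHinequality} does not swamp the main bound. The relatively large hypothesis $n\geq 6000$ indicates that several constants inside the path-coupling estimate and inside Oliveira's bound must be balanced simultaneously, so keeping the $q$ dependence tight throughout --- in particular the Weingarten-level estimate for the one-gate overlap that enters as $1-(1-1/(2.1q^2))^{1/4}$ --- is likely the most delicate part of the argument.
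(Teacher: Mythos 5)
Your high-level skeleton (a recursion for the gap, path coupling \`a la Oliveira, and the approximate orthogonality of permutations) matches the paper's strategy, but the object at the heart of the argument is different from the paper's and, as you describe it, the recursion does not close. The paper's auxiliary walk $\nu^{\mathrm{aux}}_n$ is \emph{not} a chain that resamples a random pair of sites --- that is just the non-local walk itself and yields no recursion. Instead, each step of the auxiliary walk picks a random qudit $i$ and applies a Haar unitary to the \emph{complement} of $i$, i.e.\ to $n-1$ qudits at once. Writing $Q_i$ for the corresponding Haar projector, the identity $\sum_{i<j\le n-1}\langle\psi|P_{ij}|\psi\rangle=\binom{n-1}{2}\langle\psi|Q_n|\psi\rangle+\sum_{i<j\le n-1}\langle\psi|P_{ij}Q_n^\perp|\psi\rangle$ is what produces \autoref{lemma:recursionauxiliarywalk}, $\Delta_n\le\gamma_n+\Delta_{n-1}(1-\gamma_n)$, relating the non-local gap to the $(n-1)$-qudit non-local gap via the auxiliary gap $\gamma_n$. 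Your proposed reduction to a classical Markov chain on $S_t^n$ is also not justified: the second-largest eigenvector of $M(\nu^{\mathrm{non}}_n,t)$ lives in the orthogonal complement of $\spann\{|\psi_\pi\rangle^{\otimes n}\}$, so projecting onto the (non-orthogonal) permutation basis and invoking \autoref{lemma:BHHinequality} does not convert the operator-norm problem into a gap problem for a stochastic kernel on $S_t^n$; the paper never performs such a reduction.

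The provenance of $\alpha(q)$ is also different from what you claim. It is not a per-gate contraction rate between two discordant permutations. In the paper, path coupling is applied \emph{on the unitary group} (Frobenius/Riemannian metric, coupling two steps of the auxiliary walk) to obtain the $t$-independent bound $\gamma_n\le(1-1/q^2+2/n)^{1/4}$ of \autoref{lemma:auxgap2}; this is converted to an operator-norm bound via $\|M(\nu,t)-M(\mu_H,t)\|_\infty\le 2t\,W_{\mathrm{Rie},2}(\nu,\mu_H)$ and a $k\to\infty$ amplification, not via a $\Delta\gtrsim\eta/\log(n)$ mixing-time-to-gap conversion. The exponent $\alpha(q)$ then arises purely from the \emph{base case} of the induction: one iterates the recursion $n_0=\max\{\lceil 2.03\log_q t\rceil,6000\}$ times with the $t$-independent bound, so $1-\Delta_{n_0}\gtrsim\bigl(1-(1-\tfrac{1}{2.1q^2})^{1/4}\bigr)^{n_0}=t^{-\alpha(q)}$, and the factors $n^{-1}\log^{-1}(n)$ and $\log(t)$ come from the inductive step using the large-$n$ bound $\gamma_n\le 1/n+2nt\,q^{-(n/2-1)}$ (\autoref{lemma:spectralgapauxiliarylargen}) together with $c'=(1-\Delta_{n_0})\,n_0\log_q(n_0)$. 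Without the anti-local auxiliary walk and the two-regime treatment of $\gamma_n$, your outline cannot produce either the recursion or the stated exponent.
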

This implies the following result about unitary designs.

\begin{corollary}\label{cor:non-localdesign}
	Let $n\geq  \max\{\lceil2.03\log_q(t)\rceil,6000\}$, then there is a constant $C(q)$ such that $(\nu^{\mathrm{non}}_n)^{*T}$ is an $\varepsilon$-approximate unitary $t$-design for $T\geq C(q)n\log(n)\log^{-1}(t) t^{\alpha(q)}(2nt\log(q)+\log(1/\varepsilon))$, with $\alpha(q)$ as in Eq.~\eqref{eq:alpha}.
\end{corollary}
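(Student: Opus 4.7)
The plan is to derive \autoref{cor:non-localdesign} as an immediate consequence of \autoref{thm:spectralnonlocal}, following exactly the spectral-gap-to-design-depth template that was used to obtain the first item of \autoref{cor:RQCdesigns} from \autoref{thm:gapbound}. All the necessary machinery is already in the preliminaries: the submultiplicativity of the spectral gap under convolution, Eq.~\eqref{eq:gconvolution}, together with \autoref{lemma:gtodesign}, which converts a sufficiently small spectral gap into both the additive and relative approximate-design properties of \autoref{def:approxdesign}.

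First I would abbreviate $\delta := c(q)\, n^{-1} \log^{-1}(n) \log(t)\, t^{-\alpha(q)}$ so that \autoref{thm:spectralnonlocal} reads $g(\nu_n^{\mathrm{non}}, t) \leq 1 - \delta$. Applying Eq.~\eqref{eq:gconvolution} to the $T$-fold convolution yields
\begin{equation*}
g\bigl((\nu_n^{\mathrm{non}})^{*T}, t\bigr) \leq (1 - \delta)^T.
\end{equation*}
Since the ambient dimension is $d = q^n$, \autoref{lemma:gtodesign} guarantees an $\varepsilon$-approximate $t$-design as soon as $(1 - \delta)^T \leq \varepsilon / q^{2nt}$. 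Taking logarithms and using the elementary estimate $\log(1-x) \leq -x$ for $x \in (0,1)$, a sufficient condition is
\begin{equation*}
T \delta \;\geq\; 2nt \log(q) + \log(1/\varepsilon),
\end{equation*}
which rearranges to $T \geq c(q)^{-1}\, n \log(n) \log^{-1}(t)\, t^{\alpha(q)} \bigl(2nt \log(q) + \log(1/\varepsilon)\bigr)$. Setting $C(q) := 1/c(q)$ matches the claim on the nose.

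There is essentially no obstacle to carrying this out: the corollary is a routine packaging of \autoref{thm:spectralnonlocal}, and the entire derivation mirrors the analogous step in the proof of \autoref{cor:RQCdesigns}. The only minor points of care are (i) remembering that \autoref{lemma:gtodesign} requires $g \leq \varepsilon/d^{2t}$ rather than $\varepsilon/d^t$, so that the factor $2nt \log q$ (not $nt \log q$) appears, and (ii) verifying that the regime $n \geq \max\{\lceil 2.03 \log_q(t)\rceil, 6000\}$ inherited from \autoref{thm:spectralnonlocal} is exactly the hypothesis stated in the corollary, so no additional restriction is introduced by the conversion.
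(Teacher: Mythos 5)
Your proposal is correct and is exactly the route the paper intends: the corollary is stated as an immediate consequence of \autoref{thm:spectralnonlocal} via Eq.~\eqref{eq:gconvolution} and \autoref{lemma:gtodesign}, mirroring the derivation of the first item of \autoref{cor:RQCdesigns}. The bookkeeping with $d=q^n$, the estimate $(1-\delta)^T\le e^{-T\delta}$, and the choice $C(q)=1/c(q)$ all match.
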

Most notably, $\lim_{q\to \infty}\alpha(q)=4.06$. 
\begin{remark}
The prefactor in Eq.~\eqref{eq:alpha} of $2.03$ can be pushed down arbitrarily close to $2$ by imposing the condition that $n$ is larger than some constant.
For simplicity we have chosen a specific example of this trade-off by imposing that $n\geq 6000$.
Moreover, at the expense of having a higher exponent in $n$, we could also obtain a prefactor of $2$ and hence a limiting exponent of $t^4$ for large local dimensions.
\end{remark}
To prove \autoref{cor:non-localdesign}, we show a recursion relation for the spectral gap of the moment operators $M(\nu^{\mathrm{non}}_n,t)$ to the spectral gap of an auxiliary random walk, which is in a sense anti-local. 
This technique is reminiscent of a method used by Maslen in Ref.~\cite{maslen2003eigenvalues} to compute the spectral gaps of Kac's random walk~\cite{kac1947random} on $SO(N)$.
We then combine this recursion relation with a bound on the spectral gap of the auxiliary walk in two different regimes using techniques from Ref.~\cite{brandao_local_2016} and the path coupling method of Bubley and Dyer~\cite{bubley1997path} on the unitary group~\cite{oliveira2009convergence}.
Moreover, the application of the path coupling technique is slightly simplified as we only require two steps of the auxiliary walk as opposed to $n$ steps.
Overall, the structure of the argument resembles the proof in Ref.~\cite{brandao_local_2016}:
We first solve the auxiliary spectral gap problem in the regime $n\geq O(\log(t))$ for which we obtain the desired result due to the approximate orthogonality of the permutation operators and combine this with a general bound independent of $t$ but exponential in $n$.

\begin{figure}[h]
\definecolor{bblue}{HTML}{56A5EC}
\begin{tikzpicture}[scale=0.68,thick]
\foreach \y in {0,1,2,3,4,5}{
\draw (0,\y) -- (12,\y);}
\foreach \x/\y in {1/0,3.5/1,11/3}
{\draw[fill=bblue,rounded corners] (\x-0.5,0-0.42) -- ++(0,\y+0.92) -- ++(1,0) -- ++(0,-\y-0.92);}
\foreach \x/\y in {1/3,3.5/2,11/0}
{\draw[fill=bblue,rounded corners] (\x-0.5,5+0.42) -- ++(0,-\y-0.92) -- ++(1,0) -- ++(0,\y+0.92);}
\foreach \x/\y in {6/4}
{\draw[fill=bblue,rounded corners] (\x-0.5,5+0.42) -- ++(0,-\y-0.92) -- ++(1,0) -- ++(0,\y+0.92) -- cycle;}
\foreach \x/\y in {8.5/4}
{\draw[fill=bblue,rounded corners] (\x-0.5,0-0.42) -- ++(0,\y+0.92) -- ++(1,0) -- ++(0,-\y-0.92) -- cycle;}
\foreach[count=\i] \x/\y in {1/3.5,3.5/4,6/3,8.5/2,11/1.5}
{\node at (\x+0.03,\y) {$U_\i$};}
\end{tikzpicture}
\caption{An instance of the auxiliary walk described by $\nu^{\mathrm{aux}}_n$ for $n=6$.}
\end{figure}
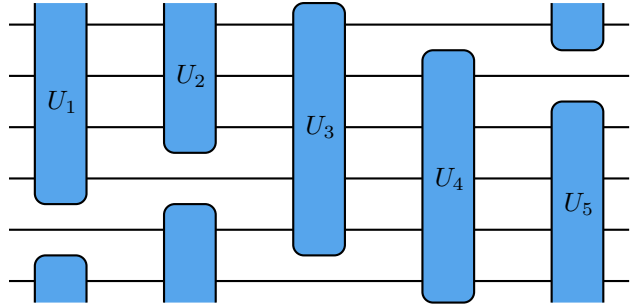

We believe that the auxiliary walk is more approachable than regular random quantum circuits and it might be a useful tool towards further improvements on the spectral gap.

We start by describing the auxiliary walk:
\begin{definition}[Auxiliary walk]
	In each step of the walk draw a random qudit $i$ and apply a Haar random unitary from $U(q^{n-1})$ to the subsystem consisting of all but the $i$-the qudit.
	We denote the corresponding probability measure on $U(q^n)$ by $\nu^{\mathrm{aux}}_n$.
\end{definition}

Moreover, we will use the following notations for the spectral gaps
\begin{equation}
\Delta_n:=\big\|M(\nu_n^{\mathrm{non}},t)-M(\mu_H,t)\big\|_{\infty},\qquad\quad \gamma_n:=\big\|M(\nu_n^{\mathrm{aux}},t)-M(\mu_H,t)\big\|_{\infty}.
\end{equation}
The key to our approach is the following recursion relation:
\begin{lemma}[Recursion relation for non-local gap] \label{lemma:recursionauxiliarywalk}
	For all $n>2$ it holds that 
	\begin{equation}
	\Delta_n\leq \gamma_n+\Delta_{n-1}(1-\gamma_n).
	\end{equation}
\end{lemma}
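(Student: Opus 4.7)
My plan is to derive the recursion from an operator-level decomposition tying the non-local moment operator on $n$ qudits to the auxiliary walk plus non-local moment operators on $n-1$ qudits. Write $M_n^{\mathrm{non}} := M(\nu_n^{\mathrm{non}},t)$, $M_n^{\mathrm{aux}} := M(\nu_n^{\mathrm{aux}},t)$, and $P_H := M(\mu_H,t)$; for each $k$ let $P_{H,\bar k}$ denote the Haar projector on the $n-1$ qudits $\bar k := \{1,\ldots,n\}\setminus\{k\}$ (extended by identity on qudit $k$), and $M_{n-1}^{\mathrm{non},k}$ the corresponding non-local moment operator. A straightforward double-counting (every pair $\{i,j\}$ appears on the right-hand side exactly $n-2$ times) yields
\[
M_n^{\mathrm{non}} = \frac{1}{n}\sum_{k=1}^n M_{n-1}^{\mathrm{non},k}.
\]
Setting $A_k := M_{n-1}^{\mathrm{non},k} - P_{H,\bar k}$, the definition of the spectral gap gives $\|A_k\|_\infty = \Delta_{n-1}$, and the absorption identity $P_{H,\bar k} M_{n-1}^{\mathrm{non},k} = M_{n-1}^{\mathrm{non},k} P_{H,\bar k} = P_{H,\bar k}$ shows that $A_k$ vanishes on $\mathrm{range}(P_{H,\bar k})$. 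Since $\frac{1}{n}\sum_k P_{H,\bar k} = M_n^{\mathrm{aux}}$ by definition, averaging over $k$ and subtracting $P_H$ produces the key identity
\[
M_n^{\mathrm{non}} - P_H = (M_n^{\mathrm{aux}} - P_H) + \frac{1}{n}\sum_{k=1}^n A_k.
\]

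From here the recursion follows from a variational bound. The left-hand side is Hermitian and annihilates $\mathrm{range}(P_H)$, so its operator norm equals $\sup_v |\langle v | M_n^{\mathrm{non}} - P_H | v\rangle|$ over unit vectors $v$ orthogonal to $\mathrm{range}(P_H)$. Fix such a $v$ and set $a := \langle v | M_n^{\mathrm{aux}} | v\rangle$. Because $M_n^{\mathrm{aux}}$ is positive semidefinite (an average of orthogonal projectors), $a \geq 0$; and because $P_H v = 0$, the bound $\|M_n^{\mathrm{aux}} - P_H\|_\infty = \gamma_n$ forces $a = \langle v | M_n^{\mathrm{aux}} - P_H | v\rangle \leq \gamma_n$. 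For the defect piece, using $A_k = Q_k A_k Q_k$ with $Q_k := \iden - P_{H,\bar k}$ and $\|A_k\|_\infty = \Delta_{n-1}$ gives $|\langle v | A_k | v\rangle| \leq \Delta_{n-1}\|Q_k v\|^2 = \Delta_{n-1}(1 - \langle v | P_{H,\bar k} | v\rangle)$. Averaging in $k$ then yields $\big|\frac{1}{n}\sum_k \langle v | A_k | v\rangle\big| \leq \Delta_{n-1}(1-a)$. Combining via the triangle inequality and using $\Delta_{n-1}\leq 1$ (so that $a + \Delta_{n-1}(1-a) = \Delta_{n-1} + a(1-\Delta_{n-1})$ is non-decreasing in $a \in [0,\gamma_n]$), I conclude
\[
|\langle v | M_n^{\mathrm{non}} - P_H | v\rangle| \leq \Delta_{n-1} + \gamma_n(1-\Delta_{n-1}) = \gamma_n + \Delta_{n-1}(1-\gamma_n).
\]

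The argument is fairly soft, so the main technical checks are the operator identities among the Haar projectors — specifically $P_{H,\bar k} M_{n-1}^{\mathrm{non},k} = P_{H,\bar k}$ (left-invariance of the Haar measure on $U(q^{n-1})$) and $P_H P_{H,\bar k} = P_H$ (since the $U(q^n)$-fixed subspace embeds in the $U(q^{n-1})$-fixed one) — together with positive semidefiniteness of $M_n^{\mathrm{aux}}$. The conceptual point behind the improvement over the naive triangle bound $\Delta_n \leq \gamma_n + \Delta_{n-1}$ is that each $A_k$ is supported on $\mathrm{range}(Q_k)$, so its contribution in the state $v$ is weighted by $\|Q_k v\|^2 = 1 - \langle v|P_{H,\bar k}|v\rangle$; the $k$-average of these weights is exactly $1 - a$, producing a trade-off with the auxiliary contribution $a \leq \gamma_n$ that saturates to yield the product term $(1-\gamma_n)\Delta_{n-1}$.
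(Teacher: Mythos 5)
Your proposal is correct and is essentially the paper's own argument in slightly cleaner operator-level dress: the identity $M_n^{\mathrm{non}}=\frac1n\sum_k M_{n-1}^{\mathrm{non},k}$ is the paper's double-counting of pairs, the bound $|\langle v|A_k|v\rangle|\le\Delta_{n-1}(1-\langle v|P_{H,\bar k}|v\rangle)$ is the paper's estimate $\langle\psi|P_{ij}Q_n^{\perp}|\psi\rangle\le\Delta_{n-1}\langle\psi|Q_n^{\perp}|\psi\rangle$, and both conclude by averaging over the cut-out qudit and using $\frac1n\sum_k\langle v|Q_k|v\rangle\le\gamma_n$ together with monotonicity in that quantity (which needs $\Delta_{n-1}\le 1$, as you note). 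No substantive difference.
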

\begin{proof}
Let $S$ be a subset of qudits. 
We denote with $\mu_{S,H}$ the Haar measure on $U(q^{|S|})$ acting on the subsystem consisting of the qudits in $S$.
We further denote 
\begin{equation}
P_{ij}:=M(\mu_{ij,H},t)\qquad\quad Q_i:=M(\mu_{[1,i-1]\cup [i+1,n],H},t).
\end{equation}
We use repeatedly the characterization
\begin{equation}
\big\|M(\nu,t)-M(\mu_H,t)\big\|_{\infty}=\max_{\substack{\psi\in\spann\{|\psi_{\pi}\rangle,\pi\in S_t\}^{\perp}\\ \|\psi\|_2=1}}\langle\psi|M(\nu,t)|\psi\rangle.
\end{equation}
In the following let $|\psi\rangle$ denote a state in $\spann\{|\psi_{\pi}\rangle,\pi\in S_t\}^{\perp}$.
We obtain
\begin{align}
\begin{split}
\Bigg\langle \psi\Bigg|\sum_{1\leq i<j\leq n-1}P_{ij}\Bigg|\psi\Bigg\rangle&
 =\frac{(n-1)(n-2)}{2}\left\langle \psi|Q_n|\psi\right\rangle+\sum_{1\leq i<j\leq n-1}\langle\psi|P_{ij}Q_n^{\perp}|\psi\rangle\\
&\leq \frac{(n-1)(n-2)}{2}\left(\langle \psi|Q_n|\psi\rangle+\Delta_{n-1}\langle \psi|Q_n^{\perp}|\psi\rangle\right).
\end{split}
\end{align}
Here, we cut out the $n$-th qudit. 
The same calculation works for every qudit $i$. 
Summing over all the resulting inequalities yields
\begin{align}
\begin{split}
(n-2)\Bigg\langle \psi\Bigg|\sum_{1\leq i<j\leq n}P_{ij}\Bigg|\psi\Bigg\rangle&\leq \frac{(n-1)(n-2)}{2}\left((1-\Delta_{n-1})\sum_i\langle \psi|Q_i|\psi\rangle+n\Delta_{n-1}\right)\\
&\leq \frac{(n-1)(n-2)}{2}\left((1-\Delta_{n-1})n\gamma_n+n\Delta_{n-1}\right)\\
&\leq \frac{n(n-1)(n-2)}{2}\left((1-\Delta_{n-1})\gamma_n+\Delta_{n-1}\right).
\end{split}
\end{align}
Dividing the inequality by $\frac{n(n-1)(n-2)}{2}$ yields the result.
\end{proof}
In order to apply this recursion relation we prove two bounds on $\gamma_n$ that we will used in the large $n$ and small $n$ regimes, respectively.
The first bound only holds for $n$ large compared to $\log(t)$ but the second bound is independent of $t$ and holds for all $n$. 
\begin{lemma}[Gap bound for large $n$]\label{lemma:spectralgapauxiliarylargen}
	For $n-2\log(n)\geq 2\log(t)$ we have:
	\begin{equation}
	\gamma_n\leq \frac{1}{n}+2\frac{nt}{q^{n/2-1}}.
	\end{equation}
\end{lemma}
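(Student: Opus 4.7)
The plan is to first replace each Haar projector $Q_i$ by its frame-operator approximation $S_i := \sum_\pi (|\psi_\pi\rangle\langle\psi_\pi|)^{\otimes(n-1)}_{\bar i} \otimes \iden_i$, invoking \autoref{lemma:BHHinequality} applied to the $(n-1)$-qudit subsystem with $D = q^{n-1}$. This incurs an additive cost of $t^2/q^{n-1}$ in operator norm per term, so averaging over $i$ reduces the task to bounding $\|\widetilde M_n - M(\mu_H,t)\|_\infty$ with $\widetilde M_n := \frac{1}{n}\sum_i S_i$, up to the same additive error.

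The key structural observation is that the range of $S_i$ lies in $\bigl(\bigotimes_{k \neq i} P_k\bigr) \otimes (\mathbb{C}^q)^{\otimes 2t}_i$, where $P_k := \spann\{|\psi_\pi\rangle_k : \pi \in S_t\}$ denotes the single-qudit permutation subspace. I would therefore decompose $\mathcal{H} = \bigoplus_{I \subseteq [n]} \mathcal{H}_I$ into sectors with permutation character on the qudits in $I$ and orthogonal character outside $I$; each $S_i$ acts nontrivially only on $\mathcal{H}_{[n]}$ and $\mathcal{H}_{[n]\setminus\{i\}}$. For the sectors with $I \neq [n]$, writing $|\phi\rangle = \sum_I |\phi_I\rangle$, the inequality $\sum_i \|\phi_{[n]\setminus\{i\}}\|^2 \leq \|\phi\|^2 = 1$ combined with the crude bound $\|S_i\|_\infty \leq 1 + O(t^2/q^{n-1})$ immediately yields the desired $\frac{1}{n} + O(t^2/q^{n-1})$ on $\langle \phi | \widetilde M_n | \phi \rangle$.

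The remaining case $|\phi\rangle \in \mathcal{H}_{[n]} \ominus G$ is where the main work lies. I would expand in the product basis $|\psi_{\vec\pi}\rangle := \bigotimes_k |\psi_{\pi_k}\rangle$ for $\vec\pi \in S_t^n$. In the idealized limit where the $|\psi_\pi\rangle$ are exactly orthonormal, $S_i$ acts as the projector onto the ``single-defect at site $i$'' subspace $\mathcal{U}_i := \spann\{|\psi_{\vec\pi}\rangle : \pi_j = \pi_{j'}\ \forall j, j' \neq i\}$. Because the intersection $\mathcal{U}_i \cap \mathcal{U}_j$ for $i \neq j$ collapses to $G$ (it forces every coordinate to be equal), the largest eigenvalue of $\frac{1}{n}\sum_i \Pi_{\mathcal{U}_i}$ on $G^{\perp} \cap \mathcal{H}_{[n]}$ is precisely $1/n$, saturated by single-defect configurations.

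The main obstacle will be quantifying the error in this leading-order picture, since the $|\psi_\pi\rangle$ are only approximately orthogonal: the single-site overlap $\langle\psi_\pi|\psi_\sigma\rangle = q^{\mathrm{cyc}(\pi^{-1}\sigma)-t}$ is bounded by $q^{-1}$ for $\pi \neq \sigma$, so off-diagonal entries of the Gram matrix $g^{\otimes n}$ in the product basis scale as $q^{-|D|}$ with $|D|$ the number of differing sites. Propagating these corrections through each $S_i$, together with the $t$-dependent multiplicity coming from the permutation sum and the factor of $n$ from averaging, should produce the stated $2nt/q^{n/2-1}$; the $q^{-(n/2-1)}$ rate rather than the naive $q^{-(n-1)}$ presumably arises from a norm-square manipulation $\|A\|_\infty = \|A^{\dagger}A\|_\infty^{1/2}$ applied to the deviation $\widetilde M_n - M(\mu_H,t)$. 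The hypothesis $n - 2\log(n) \geq 2\log(t)$ is precisely what guarantees that this error stays strictly smaller than the leading $1/n$ contribution.
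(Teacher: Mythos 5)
Your leading-order picture is essentially the right one and mirrors the paper's strategy in disguise: the paper writes $\gamma_n^2=\|(\tfrac1n\sum_iQ_i-P_H^{(n)})^2\|_\infty$, the diagonal terms $Q_i^2=Q_i$ reproduce your ``each defect vector lies in exactly one $\mathcal U_i$'' contribution of $\gamma_n/n$, and the cross terms $Q_iQ_j-P_H^{(n)}$ are exactly the deviation of your idealized statement $\Pi_{\mathcal U_i}\Pi_{\mathcal U_j}=\Pi_G$ from reality. The problem is that the entire content of the lemma is the quantitative control of that deviation, and your proposal defers it (``should produce'', ``presumably arises''). Worse, the route you sketch --- propagating off-diagonal Gram-matrix entries of size $q^{-|D|}$ through each $S_i$ --- cannot give the stated rate. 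Take $|\psi_e\rangle^{\otimes(n-1)}_{\bar 1}\otimes|\psi_\tau\rangle_1\in\mathcal U_1$ and $|\psi_e\rangle^{\otimes(n-1)}_{\bar n}\otimes|\psi_\tau\rangle_n\in\mathcal U_n$ with $\tau$ a transposition: their overlap is $|\langle\psi_e|\psi_\tau\rangle|^2=q^{-2}$, independent of $n$. Summing such terms over permutations and site pairs gives an error of order ${\rm poly}(t)/q^2$, which for large $n$ swamps the leading $1/n$ and is nowhere near $q^{-(n/2-1)}$. The exponential-in-$n$ smallness only appears after a cancellation against the ground-space components of these defect vectors (each has a $G$-component of size $q^{-1}$ along $|\psi_e\rangle^{\otimes n}$, and it is the overlap \emph{after} projecting out $G$ that is exponentially small). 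Your sketch neither identifies this cancellation nor provides a mechanism for it, so the error analysis as planned would fail.

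The paper extracts the cancellation automatically by bounding the cross term as $\|Q_1Q_n-P_H^{(n)}\|_\infty^2=\|Q_1Q_nQ_1-Q_1P_H^{(n)}Q_1\|_\infty$, replacing $Q_n$ and $P_H^{(n)}$ by frame operators via \autoref{lemma:BHHinequality} so that the difference becomes the manifestly positive operator $\sum_\pi\psi_\pi\otimes(P_H^{(n-1)}\,\psi_\pi^{\otimes n-2}\otimes\psi_\pi^{\perp}\,P_H^{(n-1)})$, dropping the first tensor factor, and applying the orthogonality lemma once more to get $4t^2/q^{n-2}$; this yields $\gamma_n^2\le\gamma_n/n+2t/q^{n/2-1}$, which is then solved by the case distinction $\gamma_n<1/n$ versus $\gamma_n\ge1/n$. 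If you want to salvage your sector-decomposition framing, you must carry out an analogue of this step: show that the restriction of $S_iS_j$ (or $Q_iQ_j$) to $G^\perp$ has norm $O(t/q^{n/2-1})$, which requires exploiting positivity and the $G$-cancellation rather than term-by-term Gram estimates. Also note that the norm-square trick must be applied to the product $Q_iQ_j$ itself, not merely to $\widetilde M_n-M(\mu_H,t)$ as you suggest, since the latter only converts one factor of the error rate.
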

\begin{proof}
	We compute:
	\begin{align}
	\begin{split}
	\gamma^2_n&=\left\|\frac{1}{n}\sum_{i=1}^nQ_i-P^{(n)}_H\right\|^2_{\infty}\\
	&=\Bigg\|\frac{1}{n^2}\sum_{i,j=1}^nQ_iQ_j-P^{(n)}_H\Bigg\|_{\infty}\\
	&\leq \frac{1}{n}\left\|\frac{1}{n}\sum_{i=1}^nQ_i-P^{(n)}_H\right\|_{\infty}+\frac{1}{n^2}\sum_{i\neq j}\left\|Q_iQ_j-P^{(n)}_H\right\|_{\infty}\\
	&\leq \frac{1}{n}\left\|\frac{1}{n}\sum_{i=1}^nQ_i-P^{(n)}_H\right\|_{\infty}+\left\|Q_1Q_n-P^{(n)}_H\right\|_{\infty}.
	\end{split}
	\end{align}
   Using \autoref{lemma:BHHinequality}, we start to bound the second term.
   In the following, $S_i$ denotes the frame operator on $n-1$ qudits acting on the subsystem of all qudits except $i$ (analogous to the definition of $Q_i$).
   \begin{align}
   \begin{split}
  \left\|Q_1 Q_n - P^{(n)}_H\right\|^2_{\infty}&=\left\|Q_1 Q_n Q_1 - P^{(n)}_H \right\|_{\infty}\\
  &=\left\|Q_1 Q_n Q_1 - Q_1 P^{(n)}_H Q_1\right\|_{\infty}\\
  &\leq \big\|Q_1 S_n Q_1 - Q_1 S Q_1\big\|_{\infty}+2\frac{t^2}{q^{n-1}}\\
  &=\left\|\sum_{\pi\in S_t}Q_1\psi_{\pi}^{\otimes n-1}\otimes \psi_{\pi}^{\perp}Q_1\right\|_{\infty}+2\frac{t^2}{q^{n-1}}\\
  &=\left\| \sum_{\pi\in S_t}\psi_{\pi}\otimes \left(P_H^{(n-1)}\psi_{\pi}^{\otimes n-2}\otimes \psi_{\pi}^{\perp}P_H^{(n-1)}\right)\right\|_{\infty}+2\frac{t^2}{q^{n-1}}.
\end{split}
   \end{align}
   Notice that the argument of $\|\bullet\|_{\infty}$ is a sum of positive operators:
   Indeed, $\psi_{\pi}^{\otimes n-2}\otimes \psi_{\pi}^{\perp}$ is an orthonormal projector and therefore
   \begin{multline}
   P_H^{(n-1)}\psi_{\pi}^{\otimes n-2}\otimes \psi_{\pi}^{\perp}P_H^{(n-1)}\\
   =P_H^{(n-1)}(\psi_{\pi}^{\otimes n-2}\otimes \psi_{\pi}^{\perp})^2P_H^{(n-1)}
   =P_H^{(n-1)}\psi_{\pi}^{\otimes n-2}\otimes \psi_{\pi}^{\perp}(P_H^{(n-1)}\psi_{\pi}^{\otimes n-2}\otimes \psi_{\pi}^{\perp})^{\dagger}.
   \end{multline}
   Hence, we have the operator inequality
   \begin{equation}
   \sum_{\pi\in S_t}\psi_{\pi}\otimes \left(P_H^{(n-1)}\psi_{\pi}^{\otimes n-2}\otimes \psi_{\pi}^{\perp}P_H^{(n-1)}\right)\leq \iden\otimes\sum_{\pi\in S_t} P_H^{(n-1)}\psi_{\pi}^{\otimes n-2}\otimes \psi_{\pi}^{\perp}P_H^{(n-1)}.
   \end{equation}
   In particular, for the largest eigenvalue we have 
   \begin{align}
   \begin{split}
   \|Q_1Q_n-P_H\|^2_{\infty}&\leq \left\|\sum_{\pi\in S_t} P_H^{(n-1)}\psi_{\pi}^{\otimes n-2}\otimes \psi_{\pi}^{\perp}P_H^{(n-1)}\right\|_{\infty}+2\frac{t^2}{q^{n-1}}\\
   &= \left\| P_H^{(n-1)}(S^{(n-2)}\otimes \iden) P_H^{(n-1)}-P_H^{(n-1)}S^{(n-1)}P_H^{(n-1)}\right\|_{\infty}+2\frac{t^2}{q^{n-1}}\\
   &\leq \left\| P_H^{(n-1)}(P_H^{(n-2)}\otimes \iden) P_H^{(n-1)}-P_H^{(n-1)}P_H^{(n-1)}P_H^{(n-1)}\right\|_{\infty}+4\frac{t^2}{q^{n-2}}\\
   &=4\frac{t^2}{q^{n-2}},
\end{split}
   \end{align}
   where we have again used \autoref{lemma:BHHinequality} in the second inequality.
   Combined we have the inequality 
   \begin{equation}
   \gamma_n^2\leq \frac{1}{n}\gamma_n+2\frac{t}{q^{n/2-1}}.
   \end{equation}
   If $\gamma_n< 1/n$, we are already done.
   If $\gamma_n\geq 1/n$, we obtain
   \begin{equation}
   \gamma_n\leq \frac{1}{n}+2\frac{tn}{q^{n/2-1}},
   \end{equation}
   which is the claimed bound.
\end{proof}
\begin{lemma}[General gap bound]\label{lemma:auxgap2}
	For all $n\geq 3$ we have 
	\begin{equation}
	\gamma_n\leq \left(1-\frac{1}{q^2}+\frac{2}{n}\right)^{\frac14}.
	\end{equation}
\end{lemma}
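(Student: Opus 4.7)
The plan is to apply the path coupling technique of Bubley--Dyer, adapted to the unitary group by Oliveira~\cite{oliveira2009convergence}, to \emph{two consecutive} steps of the auxiliary walk. Each step of $\nu^{\mathrm{aux}}_n$ applies a Haar random unitary to $n-1$ out of $n$ qudits, so two consecutive steps that exclude different qudits jointly act on every site; this is the simplification alluded to in the introduction and replaces the more delicate per-edge path coupling used for $1D$ circuits in Ref.~\cite{brandao_local_2016}.

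Since $M(\nu^{\mathrm{aux}}_n,t)$ is self-adjoint and commutes with the Haar projector, one has $g((\nu^{\mathrm{aux}}_n)^{*k},t)=\gamma_n^k$ for every $k$, so it suffices to prove
\begin{equation*}
g\bigl((\nu^{\mathrm{aux}}_n)^{*2},t\bigr)\leq \left(1-\frac{1}{q^2}+\frac{2}{n}\right)^{1/2}.
\end{equation*}
The key estimate underlying path coupling in this setting is an $L^2$ contraction: if one can exhibit a coupling $(U_2,V_2)$ of two instances of the two-step walk, started from arbitrary $(U_0,V_0)$, satisfying
\begin{equation*}
\BE\,\|U_2-V_2\|_2^2 \leq \eta\,\|U_0-V_0\|_2^2
\end{equation*}
(with $\|\cdot\|_2$ the Frobenius distance), then $\|M((\nu^{\mathrm{aux}}_n)^{*2},t)-M(\mu_H,t)\|_\infty \leq \sqrt{\eta}$.

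To build the coupling, I let the two chains sample the same pair of excluded qudits $(i_1,i_2)\in[n]^2$ uniformly at random and couple the associated pairs of Haar random unitaries on the subgroups $U(q^{n-1})$ optimally. The analysis splits on whether the indices collide. In the collision case $i_1=i_2$, occurring with probability $1/n$, the two steps degenerate to a single Haar average and I use the trivial factor $1$. In the non-collision case $i_1\neq i_2$, the two Haar measures act on overlapping $(n-1)$-qudit subsystems whose union is all of $[n]$; a direct second-moment computation---essentially a Haar integral over $U(q^{n-1})^{\otimes 2}$ against the initial deviation---shows that the optimal coupling contracts the squared Frobenius distance by at most $1-1/q^2$. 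Combining the two cases via a union bound and absorbing the order-$1/n$ slack yields $\eta\leq 1-1/q^2+2/n$, whence the claim.

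The main obstacle is establishing the sharp $L^2$ contraction $1-1/q^2$ in the non-collision case: one must construct the explicit optimal coupling between the Haar averages on the two overlapping $(n-1)$-qudit subsystems and carry out the resulting second-moment integral, pinning down the precise constant $1/q^2$. This step is considerably cleaner than the analogous one in the BHH argument, since the two steps of the auxiliary walk on the complete graph already touch every site---no iteration over edges is needed.
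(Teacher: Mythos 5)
Your overall strategy is the same as the paper's: couple two consecutive steps of the auxiliary walk, split on whether the two excluded indices collide, extract a contraction factor $1-1/q^2$ from the non-collision case plus an $O(1/n)$ penalty from the collision case, and convert the resulting Wasserstein contraction into an operator-norm bound via Kantorovich duality and the $k\to\infty$ amplification trick. However, one step as you describe it would not go through: you claim a \emph{global} $L^2$ contraction $\BE\|U_2-V_2\|_2^2\leq\eta\|U_0-V_0\|_2^2$ for arbitrary starting points via ``a direct second-moment computation.'' After optimizing over the correction unitary $V$, the quantity to be controlled is $2\bigl(\tr(\iden)-\BE\|\Tr_1(U\,XY^{\dagger}U^{\dagger})\|_1\bigr)$, and the trace norm of a partial trace is not a polynomial in the matrix entries, so it is not accessible to a Haar second-moment integral for general $X,Y$. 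The paper only establishes the contraction \emph{locally}: writing $XY^{\dagger}=e^{\ii\varepsilon H}$ and Taylor-expanding to second order in $\varepsilon$, the trace norm linearizes and the swap-operator identity of \cite{abeyesinghe2009mother} yields the coefficient $1/q^2-q^{1-2n}$; Oliveira's path coupling lemma (\cite[Lem.~24]{brandao_local_2016}) is then what globalizes this infinitesimal estimate along geodesics. So the ``path'' part of path coupling is not dispensable here, and your identification of the sharp constant as the ``main obstacle'' is right, but its resolution is perturbative rather than exact. With that correction, the rest of your argument (including $g((\nu^{\mathrm{aux}}_n)^{*2},t)=\gamma_n^2$ and the collision bookkeeping giving $\eta\leq 1-1/q^2+2/n$) matches the paper.
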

    The proof is based on a bound on the Wasserstein distance between the auxilliary random walk and the Haar-measure. 
	We will use the path coupling method for the unitary group as developed in~\cite{oliveira2009convergence}.
	In fact, the application directly generalizes the application in~\cite{brandao_local_2016}.
	Here, we only need to couple two steps of the random walk which simplifies the argument.
	
	For probability measures $\nu_1,\nu_2$ we call $(X,Y)$ a coupling if $X$ and $Y$ have marginal measures $\nu_1$ and $\nu_2$.
	The $L^p$-Wasserstein distance with respect to a metric $d$ is
	\begin{equation}
	W_{d,p}(\nu_1,\nu_2):=\inf\{\mathbb{E}[d(X,Y)^{p}]^{1/p}: (X,Y)\;\text{is a coupling for}\;\nu_1,\nu_2\}.
	\end{equation}
	We prove the following bound
	\begin{lemma}
		For every integer $k\geq 1$ we have:
	\begin{equation}
	W_{\mathrm{Rie},2}((\nu^{\mathrm{aux}}_n)^{*2k},\mu_{H})\leq \left(1-\frac{1}{q^2}+\frac{2}{n}\right)^{k/2}\sqrt{2}q^{3/2}.
	\end{equation}
	\end{lemma}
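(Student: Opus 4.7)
The plan is to apply the path coupling method of Bubley--Dyer as extended to the unitary group by Oliveira. The central task is to construct, for any pair of infinitesimally nearby starting points $U$ and $V = U e^{\ii\varepsilon H}$ with $\|H\|_2 = 1$, an explicit coupling of two consecutive steps of the auxiliary walk whose expected squared Riemannian displacement is at most $(1 - 1/q^2 + 2/n)\varepsilon^2$. Iterating this two-step contraction $k$ times, together with a direct initial bound on $W_{\mathrm{Rie},2}(\delta_{\iden},\mu_H)$, yields the stated inequality.

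First I would invoke Oliveira's path coupling lemma for the unitary group to reduce the global Wasserstein bound to the infinitesimal contraction check on neighboring configurations. Since $\mu_H$ is invariant under convolution with $\nu^{\mathrm{aux}}_n$, each two-step block shrinks $W_{\mathrm{Rie},2}(\,\cdot\,,\mu_H)$ by $(1 - 1/q^2 + 2/n)^{1/2}$, so $2k$ steps shrink it by $(1 - 1/q^2 + 2/n)^{k/2}$. The prefactor $\sqrt{2}q^{3/2}$ then arises as the initial Wasserstein distance from $\delta_{\iden}$ to $\mu_H$, computed directly using Haar second moments in the normalization of $d_{\mathrm{Rie}}$ being used.

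Next I would construct the coupling. For both chains, use the same uniformly random qudit indices $(i_1,i_2)\in[n]^2$. Conditioned on $(i_1,i_2)$, couple the two Haar-random unitaries step by step in the BHH/Oliveira fashion: the first chain applies $W_j$ on subsystem $\overline{i_j}$, while the second applies $W_j R_j$, where $R_j\in U(q^{n-1})$ is a deterministic correction depending on the current residual perturbation, chosen so as to cancel the component of that perturbation supported on $\overline{i_j}$. By left-invariance, the marginal of $W_j R_j$ is still Haar, so the coupling is valid. The contraction analysis then splits on the two cases for $(i_1,i_2)$. On the good event $i_1\neq i_2$, of probability $1-1/n$, the two Haar averages together act on every qudit, and a 2-design style moment calculation on each of the $(n-1)$-qudit factors forces the residual perturbation into the commutant of the full tensor action, producing a squared-norm contraction by $1-1/q^2$. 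On the bad event $i_1=i_2$, of probability $1/n$, only one effective averaging occurs and the residual is bounded trivially by $\|H\|_2^2$. Combining gives $(1-1/n)(1-1/q^2)+1/n\leq 1-1/q^2+2/n$, as required.

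The hard part will be the coupling design and the Haar moment calculation underlying the good-event contraction factor. In particular, showing that the contraction on $\{i_1\neq i_2\}$ is precisely $1-1/q^2$ requires carefully tracking how $H$ decomposes against the commutants of the two $U(q^{n-1})$-actions and verifying that their intersection, after the two-step averaging, has the expected codimension scaling as $q^2$. This bookkeeping parallels, at the Wasserstein level, the moment-operator computation carried out in the proof of \autoref{lemma:spectralgapauxiliarylargen}, with the additional subtlety that the residual perturbations must be tracked in the tangent space of the group rather than as vectors acted on by the moment superoperator.
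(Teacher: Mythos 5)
Your proposal follows essentially the same route as the paper: Oliveira's path-coupling reduction to an infinitesimal two-step contraction, a coupling that inserts a compensating unitary supported on the complement of one of the chosen qudits, a good/bad split on $i_1\neq i_2$ yielding $(1-1/n)(1-1/q^2)+1/n$, and an initial-distance/diameter bound supplying the prefactor $\sqrt{2}q^{3/2}$. One caution on the step you defer as "the hard part": the factor $1-1/q^2$ on the good event is not a codimension statement about the intersection of the two commutants of the $U(q^{n-1})$-actions — that intersection is trivial, and a generic perturbation $H$ has exponentially small overlap with either commutant, so deterministic cancellation at each step yields essentially no contraction. What the paper actually does is use a single correction $V^{i,j}$ (inserted between the two steps, with $V^{i,i}=\iden$), minimize over it to reduce the residual to $2\big(\Tr(\iden)-\|\Tr_{i_1}(U\,e^{\ii\varepsilon H}U^{\dagger})\|_1\big)$ with $U$ the first-step Haar unitary on the complement of $i_2$, and then evaluate the expectation of this quantity to second order in $\varepsilon$ via the explicit formula for $\mathbb{E}\big[U^{\dagger\otimes2}(\mathbb{F}\otimes\iden_{i_1})U^{\otimes2}\big]$; the swap coefficient $\approx 1/q$ there, combined with the $1/q$ normalization in the expansion of the trace norm of the partial trace, is what produces the $1/q^2$.
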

\begin{proof}
	The proof is a straightforward generalization of to the one in Ref.~\cite[Lem.~25]{brandao_local_2016}.
	In the following we show the parts of the argument that need to be adjusted.
	
	We consider two steps of the random walk $\nu^{\mathrm{aux}}_n$.
	In order to apply the path coupling method, we need to show that 
	\begin{equation}
     \limsup_{\varepsilon\to 0}\sup_{X,Y}\left\{\frac{W_{\mathrm{Rie},2}((\nu_n^{\mathrm{aux}})^{*2}*\delta_X,(\nu_n^{\mathrm{aux}})^{*2}*\delta_Y)}{d_{\mathrm{Rie}}(X,Y)}:d_{\mathrm{Rie}}(X,Y)\leq \varepsilon\right\}\leq \left(1-\frac{1}{q^2}+\frac{2}{n}\right)^{\frac12}
	\end{equation}
	and apply~\cite[Lem.~24]{brandao_local_2016}.
	Instead of the Riemannian distance, we consider Frobenius distance and then use that they are the same up to first order for small points~\cite{brandao_local_2016}.
	
	The second step of the random walk applied to the fixed unitary $X$ on $U(q^n)$ yields
    \begin{equation}
    X\to \left\{\tilde{U}_{[1,i-1]\cup [i+1,n]}U_{[1,j-1]\cup [j+1,n]}X\right\}_{i,j}\,,
    \end{equation}
    each with probability $1/n^2$.
    The same transformation is undergone by $Y$.
    We introduce the following transformation
    \begin{equation}
    X'\to\left\{\tilde{U}_{[1,i-1]\cup [i+1,n]}V^{i,j}_{[1,i-1]\cup [i+1,n]}U_{[1,j-1]\cup [j+1,n]}X\right\}_{i,j},
    \end{equation}
    where $V^{i,j}_{[1,i-1]\cup [i+1,n]}$ can depedend on $U_{[1,j-1]\cup [j+1,n]}$ and $V^{i,i}=\iden$.
    $Y$ is left invariant under the transformation.
    $X', Y'$ is a random coupling for $((\nu_n^{\mathrm{aux}})^{*2}*\delta_X,(\nu_n^{\mathrm{aux}})^{*2}*\delta_Y)$.
    
   We then bound
   \begin{align}
   \begin{split}
   &\mathbb{E}\big[\|X'-Y'\|_2^2\big]\\
   &=\frac{1}{n^2}\sum_{i,j}\big\|\tilde{U}_{[1,i-1]\cup [i+1,n]}V^{i,j}_{[1,i-1]\cup [i+1,n]}U_{[1,j-1]\cup [j+1,n]}X-\tilde{U}_{[1,i-1]\cup [i+1,n]}U_{[1,j-1]\cup [j+1,n]}Y\big\|_2^2.
   \end{split}
   \end{align}
   W.l.o.g. it suffices to bound the special case $i=1$ and $j=n$:
    \begin{equation}
    \mathbb{E}\bigg[\min_{V^{1,n}_{[2,n]}}\left\|V^{1,n}_{[2,n]}U_{[1,n-1]}X-U_{[1,n-1]}Y\right\|_2^2\bigg] = 2\left(\tr(\iden)-\mathbb{E}\left\|\Tr_1(U_{[1,n-1]}XY^{\dagger}U^{\dagger}_{[1,n-1]}\right\|_1\right).
    \end{equation}
    With 
    \begin{equation}
    R:=XY^{\dagger}=e^{\ii\varepsilon H}=\iden+\ii\varepsilon-\frac{\varepsilon^2}{2}H^2+O(\varepsilon^3)
    \end{equation}
    we obtain as in~\cite{brandao_local_2016}
    \begin{equation}
    \left\|\Tr_1\left(U_{[1,n-1]}RU^{\dagger}_{[1,n-1]}\right)\right\|_1=\Tr(\iden)+\frac{\varepsilon^2}{2}\frac{1}{q}\left(\Tr\left(\Tr_1(U_{[1,n-1]}RU^{\dagger}_{[1,n-1]}\right)^2\right)-\frac{\varepsilon^2}{2}\Tr(H^2)+O(\varepsilon^3).
    \end{equation}
    This yields
    \begin{multline}
    \mathbb{E}\bigg[\min_{V^{1,n}_{[2,n]}}\left\|V^{1,n}_{[2,n]}U_{[1,n-1]}X-U_{[1,n-1]}Y\right\|_2^2\bigg]\\
    =\varepsilon^2\left(\Tr(H^2)-\frac{1}{q}\mathbb{E}\left(\Tr\left(\Tr_1\left(U_{[1,n-1]}RU^{\dagger}_{[1,n-1]}\right)^2\right)\right)\right)+O(\varepsilon^3).
    \end{multline}
    It can be shown that~\cite{brandao_local_2016}:
    \begin{multline}
   \mathbb{E} \Tr\left(\Tr_1\left(U_{[1,n-1]}RU^{\dagger}_{[1,n-1]}\right)^2\right)\\
   =\mathbb{E}\Tr\left(H\otimes H\left(U^{\dagger}_{[1,n-1]}\otimes U^{\dagger}_{[1,n-1]}\right)\left(\mathbb{F}_{[2,n],[2,n]}\otimes \iden_{1,1}\right)U_{[1,n-1]}\otimes U_{[1,n-1]}\right).
    \end{multline}
    Using~\cite[Lem.~IV.3]{abeyesinghe2009mother}, we have:
    \begin{align}
    \begin{split}
    U^{\dagger}_{[1,n-1]}\otimes U^{\dagger}_{[1,n-1]}&(\mathbb{F}_{[2,n],[2,n]}\otimes \iden_{1,1})U_{[1,n-1]}\otimes U_{[1,n-1]}\\
    &=\frac{q+q^{n-1}}{q^n+1}\frac12(\iden+\mathbb{F})+\frac{q-q^{n-1}}{q^n-1}\frac12(\iden-\mathbb{F})\\
    &=\frac12\left(\frac{q+q^{n-1}}{q^n+1}+\frac{q-q^{n-1}}{q^n-1}\right)\iden+\frac12\left(\frac{q+q^{n-1}}{q^n+1}-\frac{q-q^{n-1}}{q^n-1}\right)\mathbb{F}\\
    &=\left(\frac{q-q^{-1}}{q^n-q^{-n}}\right)\iden+\left(\frac{q^{2n-1}-q}{q^{2n}-1}\right)\mathbb{F}.
    \end{split}
    \end{align}
    This yields
    \begin{equation}
    \mathbb{E}\Big[\Tr\Big(\Tr_1(U_{[1,n-1]}RU^{\dagger}_{[1,n-1]})^2\Big)\Big]\geq \frac{q^{2n-1}-q}{q^{2n}-1}\Tr(H^2)\geq \left(\frac{1}{q}-q^{1-2n}\right)\Tr(H^2).
    \end{equation}
    This implies
    \begin{align}
    \begin{split}
    \mathbb{E}\big[\|X'-Y'\|_2^2\big]&\leq \left(1-\left(\frac{1}{q^2}+q^{1-2n}\right)\left(1-\frac{1}{n}\right)\right)\|X-Y\|^2_2\\
    &\leq \left(1-\frac{1}{q^2}+\frac{2}{n}\right)\|X-Y\|^2_2,
    \end{split}
    \end{align}
    where we used that $\|X-Y\|^2_2=\varepsilon^2 \Tr(H^2)+O(\varepsilon^3)$.
    The result follows as in Ref.~\cite[Lem.~25]{brandao_local_2016}, mutatis mutandis.
\end{proof}

\begin{proof}[Proof of \autoref{lemma:spectralgapauxiliarylargen}]
The bound on the spectral gap now follows by the following inequality for all probability measures $\nu$:
	\begin{equation}
	\big\|M(\nu,t)-M(\mu_H,t)\big\|_{\infty}\leq 2t W_{\mathrm{Rie},2}(\nu,\mu_H).
	\end{equation}
	This is proven in Ref.~\cite{brandao_local_2016} using the Kantorovich duality for the Wasserstein distance.
Therefore, we have 
\begin{equation}
\big\|M(\nu^{\mathrm{aux}},t)-M(\mu_H,t)\big\|^{2k}_{\infty}\leq 2t\left(1-\frac{1}{q^2}+\frac{2}{n}\right)^{k/2}\sqrt{2}q^{3/2}.
\end{equation}
The result follows from taking the $k$-th square root and the limit $k\to\infty$ on both sides of the inequality.
\end{proof}
Finally, we can prove \autoref{thm:spectralnonlocal} by evaluating the recursion relation with the bounds in \autoref{lemma:spectralgapauxiliarylargen} and \autoref{lemma:auxgap2}.
\begin{proof}[Proof of \autoref{thm:spectralnonlocal}]
	We have
	\begin{equation}
	\gamma_n\leq \frac{1}{n} + 2\frac{tn}{q^{n/2-1}}.
	\end{equation}
	by \autoref{thm:gapbound} for all $n-\log_q(n)\geq 2\log_q(t)$.
	The latter condition is always true if $(n\geq 2.03\log_q(t))\vee (n\geq 6000)$.
	By induction we can solve the recursion relation in Lemma~\ref{lemma:recursionauxiliarywalk} and obtain that
	\begin{equation}\label{eq:recursionsolution2}
	\Delta_n\leq 1-\frac{c'}{n\log(n)}
	\end{equation}
	for $n-5\log_q(n)\geq 2\log_q(t)$ where $c'$ can be fixed by the induction beginning. 
	Indeed, assume that Eq.~\eqref{eq:recursionsolution2} holds up to some $n\geq n\geq 2\log(t)+4\log(n+1)+2\log_q(2)+1$ and $n\geq 10$.
	Then, using \autoref{lemma:recursionauxiliarywalk}, we can compute
	\begin{align}
	\begin{split}
	\Delta_{n+1}&\leq \gamma_{n+1}+\Delta_{n}(1-\gamma_{n+1})\\
	&\leq \frac{1}{n+1}+2\frac{t(n+1)}{q^{(n+1)/2-1}}+\left( 1-\frac{c'}{n\log(n)}\right)\left(1-\frac{1}{n+1}-2\frac{t(n+1)}{q^{(n+1)/2-1}}\right)\\
	&=1-c'\left(\frac{1}{(n+1)\log(n)}-2\frac{t(n+1)}{n\log(n)q^{(n+1)/2-1}}\right)\\
	&\leq 1-\frac{c'}{(n+1)\log(n+1)}-c'\left(\frac{\log(1+1/n)}{(n+1)\log(n)\log(n+1)}-2\frac{t(n+1)}{n\log(n)q^{(n+1)/2-1}}\right).
	\end{split}
	\end{align}
	For the induction to be completed, we only need to show that the third summand is negative. 
	We use that $\log(1+x)>x/2$ for $x\leq 0.1$:
	\begin{align}
	\begin{split}
	\frac{\log(1+1/n)}{(n+1)\log(n)\log(n+1)}&\geq	\frac{1}{2n(n+1)\log(n)}\\
		&\geq 2\frac{t(n+1)}{n\log(n)q^{(n+1)/2-1}}.
	\end{split}
	\end{align}
	By taking the logarithm it is easy to see that the last inequality is equivalent to $n\geq 2\log(t)+4\log_q(n+1)+2\log_q(2)+1$.
	We can easily find $c'$ from the condition
	\begin{equation}\label{eq:deltan0}
	\Delta_{n_0}= 1-\frac{c'}{n_0\log_q(n_0)}\implies c'= (1-\Delta_{n_0})n_0\log_q(n_0).
	\end{equation}
	In particular, we choose $ n_0=\max\{\lceil 2.03\log_q(t)\rceil,6000\}$.
	This is to ensure that $n\geq 2\log_q(t)+4\log_q(n+1)+2\log_q(2)+0.5$:
	 For $n\geq 6000$ we have $0.01n\geq 4\log_q(n+1)+2\log_q(2)+0.5$ which leaves us with $0.99n\geq 2\log_q(t)\impliedby 2.03\log_q(t)$.
	
    In the regime $n\leq O(\log(t))$, we can apply \autoref{lemma:auxgap2}.
    In order to do this, observe that with complete induction it is easy to show that any function $f(n)$ satisfying the recursion relation
	\begin{equation}
	f(n)\leq a+(1-a)f(n-1) \quad \forall n\geq n_0
	\end{equation}
	with $a<1$ also satisfies that there is a constant $c$ such that 
	\begin{equation}\label{eq:recursionsolution1}
	f(n)\leq 1-c(1-a)^{n} \quad \forall n\geq n_0.
	\end{equation}
	Next we choose $n\geq 4q^2$. Then, we have from \autoref{lemma:auxgap2}:
	\begin{equation}
	\gamma_n\leq \left(1-\frac{1}{2q^2}\right)^{\frac14}.
	\end{equation}
	Combined we have
	\begin{equation}
	\Delta_{n_0}\leq 1-c(q)\left(1-\left(1-\frac{1}{2q^2}\right)^{\frac14}\right)^{n_0},
	\end{equation}
	Plugging this into Eq.~\eqref{eq:deltan0} completes the proof.
\end{proof}

\section{Exact solution for the case \ensuremath{t=2} and \ensuremath{n=3}}
In this section we solve the simplest non-trivial case $n=3$ and $t=2$ with open boundary conditions as a function of $q$. 
Via Knabe bounds this yields strong bounds for $t=2$.
More precisely, we prove the following formula:
\begin{theorem}\label{theorem:exactt=2}
The spectral gap of the second moment operator for $\nu^{\rm bulk}_3$ is
\begin{equation}\label{eq:exactt=2}
\left\|M\big(\nu_3^{\mathrm{bulk}},2\big)-M(\mu_H,2)\right\|_{\infty}=\frac12+\frac{q}{2(q^2+1)}.
\end{equation}
\end{theorem}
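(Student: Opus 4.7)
The plan is to compute the spectrum of $A := M(\nu_3^{\mathrm{bulk}},2) - M(\mu_H,2) = \tfrac12 \big(P_H^{(2)}\otimes \iden + \iden \otimes P_H^{(2)}\big) - P_H^{(3)}$ directly, by exploiting the per-site structure together with two commuting $\mathbb{Z}_2$ symmetries. Set $S := \spann\{|\psi_e\rangle,|\psi_\tau\rangle\} \subset (\mathbb{C}^q)^{\otimes 4}$. Using $\langle\psi_\pi|v\rangle = 0$ for $|v\rangle \in S^\perp$, I first observe that $P_H^{(2)}$ annihilates any two-site vector having at least one leg in $S^\perp$; combined with the analogous statement for $P_H^{(3)}$, this means that $A$ is block-diagonal across the sector decomposition $(\mathbb{C}^q)^{\otimes 12} = \bigoplus_{(a_1,a_2,a_3)\in\{S,S^\perp\}^3} a_1 \otimes a_2 \otimes a_3$. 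In every sector except $(S,S,S)$, $(S^\perp,S,S)$ and $(S,S,S^\perp)$ the operator $A$ vanishes identically, while on the latter two sectors $A$ reduces to $\tfrac12 P_H^{(2)}$ acting on the adjacent $S\otimes S$ pair, whose spectrum is contained in $\{0,\tfrac12\}$; since the claimed value $\lambda_2 = \tfrac12 + \tfrac{q}{2(q^2+1)}$ exceeds $\tfrac12$ for $q\geq 2$, these contributions lie strictly below the target.

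On the remaining $8$-dimensional subspace $S^{\otimes 3}$ I introduce the per-site orthonormal basis $|\pm\rangle := (|\psi_e\rangle\pm|\psi_\tau\rangle)/\sqrt{2(1\pm 1/q)}$ and employ two commuting involutions: the reflection $R$ exchanging sites $1$ and $3$, and the $\mathbb{Z}_2$ symmetry $|\psi_e\rangle \leftrightarrow |\psi_\tau\rangle$ (equivalently $|-\rangle \mapsto -|-\rangle$), which provides a parity grading by the number of $|-\rangle$ factors. Both commute with $A$ and decompose $S^{\otimes 3}$ into an $R$-invariant even block (dim $3$), an $R$-invariant odd block (dim $3$), an $R$-antiinvariant even block (dim $1$), and an $R$-antiinvariant odd block (dim $1$). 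With $a^2 := (q+1)/(2q)$ and $b^2 := (q-1)/(2q)$ the subsequent calculations are driven by the explicit formulas $P_H^{(2)}|{\pm\pm}\rangle = \tfrac{a^4}{a^4+b^4}|{\pm\pm}\rangle + \tfrac{a^2 b^2}{a^4+b^4}|{\mp\mp}\rangle$ and $P_H^{(2)}|{\pm\mp}\rangle = \tfrac12(|+-\rangle + |-+\rangle)$.

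For the lower bound, applying $P_{12}+P_{23}$ to $w' := \tfrac{1}{\sqrt 2}(|-++\rangle - |++-\rangle)$ yields $(P_{12}+P_{23})w' = (\tfrac12 + \alpha)w'$ with $\alpha := (q+1)^2/(2(q^2+1))$, after the intermediate terms on $|+-+\rangle$ and $|---\rangle$ cancel between the two summands. Since $w'$ is $R$-antiinvariant while the image of $P_H^{(3)}$ is $R$-invariant, $P_H^{(3)}w' = 0$, so
\begin{equation*}
A w' = \frac{1 + 2\alpha}{4}\,w' = \frac{q^2+q+1}{2(q^2+1)}\,w' = \left(\frac12 + \frac{q}{2(q^2+1)}\right) w',
\end{equation*}
establishing $\|A\|_\infty \geq \lambda_2$.

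For the matching upper bound I use that $A \succeq 0$ (as observed in the proof of \autoref{thm:gapbound}), so that on any invariant subspace the largest eigenvalue of $A$ is bounded by its trace there. A direct trace computation --- in which the crucial cancellation comes from the identity $(q+1)^2 - (q^2+q+1) = q$ --- gives $\mathrm{tr}(A|_{R\text{-inv even}}) = \tfrac{q^2+q+1}{2(q^2+1)} = \lambda_2$ and $\mathrm{tr}(A|_{R\text{-inv odd}}) = \tfrac{q^2-q+1}{2(q^2+1)}$; the $R$-antiinvariant even block contributes the single eigenvalue $\tfrac{q^2-q+1}{2(q^2+1)}$ by the same cancellation argument as for $w'$ (now with $\alpha$ replaced by $\beta := (q-1)^2/(2(q^2+1))$). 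All of these are $\leq \lambda_2$, and combined with the earlier $\leq \tfrac12$ bound outside $S^{\otimes 3}$ this matches the lower bound and completes the proof. The main technical obstacle is the trace identity for the $R$-invariant even block, where the algebraic cancellation to $\lambda_2$ occurs; the rest of the argument is essentially bookkeeping.
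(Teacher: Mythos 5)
Your proof is correct, but it takes a genuinely different route from the paper's. The paper first replaces the identity factors by single-site Haar projectors $P_H^{(1)}$ (citing an observation from earlier work, which amounts to the restriction to $S^{\otimes 3}$ that you carry out explicitly via your sector decomposition), then identifies the ``excited'' part of the operator as a rank-$4$ sum of vectorized projector combinations $A_{12},B_{12},A_{21},B_{21}$ built from symmetric/antisymmetric projectors, computes their Gram matrix (overlaps $\pm q/(q^2+1)$), and reads off all four nonzero eigenvalues $\tfrac12\pm\tfrac{q}{2(q^2+1)}$ from the resulting $4\times4$ eigenvalue equations. You instead work in the full $8$-dimensional space $S^{\otimes3}$ with the orthonormal per-site basis $|\pm\rangle$, block-diagonalize using the reflection and swap-parity symmetries, exhibit an explicit maximal eigenvector $w'$ in the one-dimensional reflection-antisymmetric odd sector, and close the upper bound with positivity of $A$ plus trace bounds on the invariant blocks rather than full diagonalization. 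I have checked your key computations: $(P_{12}+P_{23})w'=(\tfrac12+\alpha)w'$, $P_H^{(3)}w'=0$ by antisymmetry, the trace values $\tfrac{q^2\pm q+1}{2(q^2+1)}$ on the reflection-invariant blocks (whose sum with the two singlet eigenvalues correctly reproduces $\tr A|_{S^{\otimes3}}=2$), and the handling of the sectors outside $S^{\otimes3}$, where the spectrum is contained in $\{0,\tfrac12\}$. Your approach has the advantage of being elementary and of producing an explicit optimal eigenvector, while the paper's Gram-matrix route yields the complete nonzero spectrum and connects more directly to the representation-theoretic generalizations discussed in the outlook. One small notational slip: your uniform-sign formula for $P_H^{(2)}|{\pm\pm}\rangle$ is wrong for the lower branch, since $\langle{--}|P_H^{(2)}|{--}\rangle = b^4/(a^4+b^4)=\beta$, not $a^4/(a^4+b^4)$; your final trace values show you used the correct coefficient, so this is only a typo in the write-up, but it should be fixed.
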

It follows that $\Delta(H^{\rm bulk}_{n=3,t=2}) = 1-q/(q^2+1)$, which is in agreement with the $q=2$ gap computed in Ref.~\cite{brandao2010exponential}. Then, via \autoref{lemma:knabe}, this implies
\begin{equation}
	\Delta(H_{n,2})\geq 1-\frac{2q}{q^2+1}.
\end{equation}
which then yields the following corollary for convergence to approximate 2-designs in very short depth:
\begin{corollary}\label{cor:exactgap}
    Local random quantum circuits on $n$ qudits of local dimension $q$ are $\ep$-approximate unitary $2$-designs if the circuit depth is
	\begin{equation}
	T\geq n\left(1-\frac{2q}{q^2+1}\right)^{-1}\left(4n+\log_q\left(1/\varepsilon\right)\right).
	\end{equation}
\end{corollary}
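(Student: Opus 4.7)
The plan is to combine Theorem~\ref{theorem:exactt=2} with the Knabe bound (Lemma~\ref{lemma:knabe}) and the standard spectral-gap-to-design conversion, all of which have already been set up in the preceding sections, so the argument is assembly rather than new analysis.

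First I would translate the exact bulk second-moment gap in Theorem~\ref{theorem:exactt=2} into a statement about the bulk Hamiltonian. Since $M(\nu_3^{\mathrm{bulk}},2)=\tfrac{1}{2}(P_H^{(2)}\otimes\iden+\iden\otimes P_H^{(2)})$, we have $H^{\mathrm{bulk}}_{3,2}=2(\iden-M(\nu_3^{\mathrm{bulk}},2))$, so the theorem immediately gives
\[
\Delta(H^{\mathrm{bulk}}_{3,2})=2\left(1-\tfrac12-\tfrac{q}{2(q^2+1)}\right)=1-\tfrac{q}{q^2+1}.
\]
Feeding this into the $m=3$ Knabe bound, Eq.~\eqref{eq:knabem=3}, yields the bound $\Delta(H_{n,2})\geq 1-\tfrac{2q}{q^2+1}$ already quoted in the discussion preceding the corollary, valid whenever the hypothesis $n>2m=6$ of Lemma~\ref{lemma:knabe} holds.

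Next, I would convert back to the moment-operator spectral gap via Eq.~\eqref{eq:gHspecgap}, which gives
\[
g(\nu_n,2)\leq 1-\tfrac{1}{n}\left(1-\tfrac{2q}{q^2+1}\right),
\]
then amplify using the convolution inequality Eq.~\eqref{eq:gconvolution} and demand the threshold of Lemma~\ref{lemma:gtodesign}, namely $g(\nu_n^{*T},2)\leq \varepsilon/q^{4n}$ (since $d^{2t}=q^{4n}$ when $t=2$). Using $(1-x)^T\leq e^{-Tx}$, this becomes the inequality $T(1-2q/(q^2+1))/n\geq 4n\log q+\log(1/\varepsilon)$, which rearranges to the stated depth bound $T\geq n(1-2q/(q^2+1))^{-1}(4n+\log_q(1/\varepsilon))$ up to the usual choice of logarithm base.

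There is no genuine obstacle in this proof; it is a chain of previously established implications. The only care required is checking the hypotheses: the Knabe step needs translation invariance, which matches the periodic boundary conditions assumed by default in \autoref{def:RQCs}, and the lemma requires $n>6$, which is automatically satisfied in the asymptotic regime of interest. All the analytic work is hidden inside Theorem~\ref{theorem:exactt=2} itself.
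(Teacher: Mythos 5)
Your proposal is correct and follows exactly the route the paper takes (implicitly, in the text surrounding the corollary): translate the exact norm of Theorem~\ref{theorem:exactt=2} into $\Delta(H^{\rm bulk}_{3,2})=1-q/(q^2+1)$, apply the $m=3$ Knabe bound to get $\Delta(H_{n,2})\geq 1-2q/(q^2+1)$, and feed this into Eq.~\eqref{eq:gHspecgap}, Eq.~\eqref{eq:gconvolution} and Lemma~\ref{lemma:gtodesign}, i.e.\ the design-depth formula Eq.~\eqref{eq:designdepth} with $t=2$. Your explicit checks of the Knabe hypotheses ($n>6$, translation invariance under pbc) and your remark about the logarithm base are both appropriate; the stated corollary indeed differs from Eq.~\eqref{eq:designdepth} only by that overall $\log q$ normalization.
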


\begin{proof}[Proof of \autoref{theorem:exactt=2}]
	As was observed in Ref.~\cite{brandao2010exponential}, we have
	\begin{equation}
	\big\|M\big(\nu_3^{\mathrm{bulk}},2\big)-M(\mu_H,2)\big\|_{\infty} = \frac12 \left\|\left(P^{(1)}_H\otimes P^{(2)}_H-P^{(3)}_H\right) + \left(P^{(2)}_{H}\otimes P_H^{(1)}-P^{(3)}_H\right)\right\|_{\infty}.
	\end{equation}
	Hence, we need to consider 
	\begin{align}
	\begin{split}
	\mathrm{im}(P^{(1)}_H\otimes P^{(2)}_H)&=\spann \{\mathrm{vec}~\iden,\mathrm{vec}~\mathbb{F}\}\otimes \spann \{\mathrm{vec}~\iden^{\otimes 2},\mathrm{vec}~\mathbb{F}^{\otimes 2}\}\\
	&\stackrel{\sim}{=}\spann \{P_+^{(1)}, P_-^{(1)}\}\otimes \spann \{P_+^{(2)}, P_-^{(2)}\},
	\end{split}
	\end{align}
	where
	\begin{equation}
	P^{(m)}_{\pm}:=\frac12(\iden^{\otimes m}\pm \mathbb{F}^{\otimes m})
	\end{equation}
	are the projectors onto the symmetric and antisymmetric subspace, respectively.
	We can find the orthogonal complement of $\spann \{\iden^{\otimes 3},\mathbb{F}^{\otimes 3}\}$ by imposing the necessary and sufficient conditions 
	\begin{equation}
	\tr(V\iden)=0,\qquad \text{and} \qquad \tr(V\mathbb{F})=0
	\end{equation}
	for a general $V\in\spann \{P_+^{(1)}, P_-^{(1)}\}\otimes \spann \{P_+^{(2)}, P_-^{(2)}\}$.
	It can be easily checked that the following yields an orthonormal basis for the orthogonal complement:
	\begin{align}
	\begin{split}
	A_{12}&:=\left(\frac{2}{q^3(q^3+1)}\right)^{\frac12}\left(\left(\frac{(q-1)(q^2-1)}{(q+1)(q^2+1)}\right)^{\frac12}P^{(1)}_+\otimes P^{(2)}_+-\left(\frac{(q+1)(q^2+1)}{(q-1)(q^2-1)}\right)^{\frac12}P^{(1)}_-\otimes P^{(2)}_-\right),\\
	B_{12}&:=\left(\frac{2}{q^3(q^3-1)}\right)^{\frac12}\left(\left(\frac{(q-1)(q^2+1)}{(q+1)(q^2-1)}\right)^{\frac12}P^{(1)}_+\otimes P^{(2)}_- -\left(\frac{(q+1)(q^2-1)}{(q-1)(q^2+1)}\right)^{\frac12}P^{(1)}_-\otimes P^{(2)}_+\right).
	\end{split}
	\end{align}
Analogously, we find an orthonormal basis $\{A_{21}, B_{21}\}$ for $\spann \{P_+^{(2)}, P_-^{(2)}\}\otimes \spann \{P_+^{(1)}, P_-^{(1)}\}$.
In particular, we have 
\begin{equation}\label{eq:analyticoverlapAB}
\tr[A_{12}B_{12}]=\tr[B_{21}A_{21}]=0\,.
\end{equation}
Moreover, using the general characterizations
\begin{equation}\label{eq:decompositionsymmetricprojector}
P_{+}^{(m+n)}=P_{+}^{(m)}\otimes P_{+}^{(n)}+P_{-}^{(m)}\otimes P_{-}^{(n)},\qquad
 P_{-}^{(m+n)}=P_{+}^{(m)}\otimes P_{-}^{(n)}+P_{-}^{(m)}\otimes P_{+}^{(n)}
\end{equation}
repeatedly, we also find that
\begin{equation}
\tr[B_{12}A_{21}]=\tr[A_{12}B_{21}]=0\,.
\end{equation}
Again using Eq.~\eqref{eq:decompositionsymmetricprojector}, we can compute
\begin{align}
\begin{split}
\tr[A_{12}A_{21}]=\frac{2}{q^3(q^3+1)}
&\left(\frac{(q-1)(q^2-1)}{(q+1)(q^2+1)}\tr\left[P_+^{(1)}\right]^3-2\tr\left[P_-^{(1)}\right]^2\tr\left[P_+^{(1)}\right]\right.\\
&\quad\left.+\frac{(q+1)(q^2+1)}{(q-1)(q^2-1)}\tr\left[P_-^{(1)}\right]^2\tr\left[P_+^{(1)}\right]\right)\,.
\end{split}
\end{align}
Inserting $\tr[P_+^{(1)}]=q(q+1)/2$ and $\tr[P_-^{(1)}]=q(q-1)/2$ gives us
\begin{equation}
\label{eq:analyticoverlapA}
\tr[A_{12}A_{21}]
=\frac{q}{q^2+1}\,.
\end{equation}
Proceeding similarly for the overlap of the $B$ operators, we find
\begin{equation}\label{eq:analyticoverlapB}
\tr[B_{12}B_{21}]=-\frac{q}{q^2+1}\,.
\end{equation}
We can now use these overlaps to compute the eigenvalues of the rank $4$ matrix
\begin{multline}
\left(P^{(1)}_H\otimes P^{(2)}_H-P^{(3)}_H\right)+\left(P^{(2)}_{H}\otimes P_H^{(1)}-P^{(3)}_H\right)
\\=\left(\mathrm{vec}A_{12}(\mathrm{vec}A_{12})^{\dagger}+\mathrm{vec}B_{12}(\mathrm{vec}B_{12})^{\dagger}\right)
+\left(\mathrm{vec}A_{21}(\mathrm{vec}A_{21})^{\dagger}+\mathrm{vec}B_{21}(\mathrm{vec}B_{21})^{\dagger}\right),
\end{multline}
by applying this operator to a general state
\begin{equation}
|\phi\rangle=a_{12}\mathrm{vec} A_{12}+b_{12}\mathrm{vec} B_{12}+a_{21}\mathrm{vec} A_{21}+b_{21}\mathrm{vec} B_{21}\,.
\end{equation}
From a comparison of the coefficients for the eigenvalue equation 
\begin{equation}
\left(\left(P^{(1)}_H\otimes P^{(2)}_H-P^{(3)}_H\right)+\left(P^{(2)}_{H}\otimes P_H^{(1)}-P^{(3)}_H\right)\right)|\phi\rangle=\lambda |\phi\rangle 
\end{equation}
we obtain the following system of equations:
\begin{align}
a_{12}+\tr[A_{12}A_{21}]a_{21}&=\lambda a_{12}\label{eq:eigenvalueI}\\
a_{21}+\tr[A_{12}A_{21}]a_{12}&=\lambda a_{21}\label{eq:eigenvalueII}\\
b_{12}+\tr[B_{12}B_{21}]b_{21}&=\lambda b_{12}\label{eq:eigenvalueIII}\\
b_{21}+\tr[B_{21}B_{12}]b_{12}&=\lambda b_{21}\label{eq:eigenvalueIV}\,.
\end{align}
Combining Eqs.~\eqref{eq:eigenvalueI} and \eqref{eq:eigenvalueII}, we obtain
\begin{equation}
\frac{a_{21}}{a_{12}}=\frac{a_{12}}{a_{21}}=\frac{(\lambda-1)}{\tr[A_{12}A_{21}]}\,.
\end{equation}
As $\lambda$ and $\tr[A_{12}A_{21}]$ are real, this implies
\begin{equation}
\frac{(\lambda-1)}{\tr[A_{12}A_{21}]}=\pm 1\,.
\end{equation}
The same calculation can be done for Eqs.~\eqref{eq:eigenvalueIII} and \eqref{eq:eigenvalueIV}.
Finally, this leaves us with the following four eigenvalues
\begin{equation}\label{eq:alleigenvalues}
\lambda_{a,\pm}=1\pm\tr[A_{12}A_{21}]\,,\qquad\quad \lambda_{b,\pm}=1\pm \tr[B_{12}B_{21}]\,.
\end{equation}
Combined with Eqs.~\eqref{eq:analyticoverlapA} and \eqref{eq:analyticoverlapB}, this completes the proof of \autoref{theorem:exactt=2}.
\end{proof}
We have seen in the proof of \autoref{cor:RQCdesigns} that the eigenvalues of the moment operator become $1$, $\frac12$ and $0$ for large $q$.
This is consistent with Eq.~\eqref{eq:alleigenvalues}.

An obvious question is whether this calculation can be generalized for higher $t$.
Already for $t=3$ this gets complicated by the fact that the projectors onto irreducible representations do not in general span the full group algebra of $S_t$.
In fact, they span the center of this algebra, which coincides with the full algebra only for $t=2$.

\section{Improved constants from numerical results}

Having bounded the spectral gaps for local random quantum circuits in the case of large local dimension, and further explicitly computing the gaps for the second moment, we now turn to a numerical approach. The goal is to provide improved constants for the RQC design depth for a number of different random circuit architectures for the first few moments. As we discussed in the introduction, unitary designs are prevalent across essentially all sub-fields in quantum information. Higher moments are vital for concentration bounds and are intimately related to post-equilibration behavior and complexity growth, but, nevertheless, some applications only leverage the first few moments. To this end, we give improved constants for the design depth and note that the constants given in \cite{brandao_local_2016} are large and could exceed what is required for practical applications. 

As we reviewed, the circuit size $T$ required for local random quantum circuits to form approximate designs, as in \autoref{def:approxdesign}, can be determined from the spectral gap $\Delta(H_{n,t})$ of a frustration-free Hamiltonian. Combining Eq.~\eqref{eq:gHspecgap} with gap amplificiation in Eq.~\eqref{eq:gconvolution}, the depth at which local RQCs form $\ep$-approximate unitary $t$-designs is
\begin{equation}\label{eq:designdepth}
T \geq \frac{n}{\Delta(H_{n,t})} (2nt \log q + \log 1/\varepsilon)\,.
\end{equation}
Furthermore, the gap for local RQCs can be extended to brickwork RQCs using \autoref{lemma:detect}. 

Numerically computing the Hamiltonian gaps for small system sizes, we can then use the Knabe bounds, reviewed in \autorefapp{app:knabe}, to establish design depths for both local and brickwork RQCs, with open and periodic boundary conditions. 
To numerically compute the gaps, we use the Weingarten formalism to construct the local moment operator and numerically diagonalize the resulting Hamiltonian. Details on this procedure are provided in \autorefapp{app:numerics}.
We note that the spectral gaps were investigated numerically using a different method in Ref.~\cite{cwiklinski2013local}, and for all concurrent gaps computed, the results agree.

\subsection*{Explicit low design depths}
We simply give explicit expressions for the design depths for local and brickwork random circuits with open and periodic boundary conditions on $n$ qubits (with $q=2$). The LRQC results are computed from the spectral gaps and the 2-design PRQC results are computed from an exact calculation of the frame potential. 

\begin{table}[h]
\normalsize
\setlength{\tabcolsep}{6pt}
\begin{tabular}{l c c c}
Circuit architecture& 2-designs& 4-designs& 5-designs \\
\hline
local RQCs w/ pbc& $5n(4n+\log1/\varepsilon)$& $3.5n(8n+\log1/\varepsilon)$& $25n(10n+\log1/\varepsilon)$\\
local RQCs w/ obc& $5n(4n+\log1/\varepsilon)$& $4.5n(8n+\log1/\varepsilon)$& $162n(10n+\log1/\varepsilon)$\\
brickwork RQCs w/ pbc& $3.2(2n+\log n+ \log1/\varepsilon)$& $30(8n+\log1/\varepsilon)$& $200n(10n+\log1/\varepsilon)$\\
brickwork RQCs w/ obc& $6.4(2n+\log n+ \log1/\varepsilon)$& $38(8n+\log1/\varepsilon)$& $1288n(10n+\log1/\varepsilon)$
\end{tabular}
\caption{The depths at which local and brickwork random quantum circuits on $n$ qubits ($q=2$) form $\ep$-approximate unitary $t$-designs, for the first few moments, with the best constants taken from analytic and numerical determinations of the spectral gaps (or 2-norm bound).}
\end{table}

For local RQCs, the exact bulk Hamiltonian gap of the second moment for $n=3$ on local qubits is $\Delta(H^{\rm bulk}_{n=3,t=2})=3/5$, as computed in \autoref{theorem:exactt=2} and in agreement with the result in \cite{brandao2010exponential}. For both periodic and open boundary conditions the Knabe bounds (all three Lemmas in \autorefapp{app:knabe}) have the same threshold for subsystem size $n=3$ and give that $\Delta(H_{n,2})\geq 1/5$. The spectral gap of the second moment operator is then $g(\nu_n,2)\leq 1-\frac{1}{5n}$ for both open and periodic local RQCs. 

The 2-design depth for brickwork RQCs above is taken from Ref.~\cite{NHJ19}. In that work, an exact expression is given for the 2-norm of the difference in moment operators $\|M(\nu^{\rm bw}_n,2)-M(\nu_H,2)\|_2$. Converting their result to the strong definition of approximate design in \autoref{def:approxdesign}, and considering both periodic and open boundary conditions yields the above constants. 

We neglect reporting the approximate 3-design depths for different random circuit models because the third moment spectral gaps we computed were in exact agreement with the second moment gaps of the same $q$ and $n$. Moreover, the bounds given by the spectral gaps for the fourth moment are actually stronger and the 4-design depth determined by the Knabe bound is shorter.

For the fourth moment, we must rely on numerical determination of the spectral gaps. Interestingly, the fourth moment Hamiltonian gap is $\Delta(H^{\rm bulk}_{n=3,t=4})=0.5$, up to numerical precision, whereas the Knabe threshold for $n=3$ is $1/2$, and thus we must proceed to larger subsystems. Increasing the subsystem size exceeds the Knabe threshold and gives stronger constants. We employ \autoref{thm:GM} and \autoref{thm:LM} to account for both boundaries conditions, and use the detectability lemma to extend to brickwork RQCs. For the fifth moment, the $n=4$ bulk gap exceeds the threshold for the stronger finite-size criteria in \autoref{thm:GM} and \autoref{thm:LM}, but not for \autoref{thm:knabe}, which gives the above constants. 

One point of interest, the smallest second moment gap $\Delta(H^{\rm bulk}_{n=3,t=2})$ appears to give an asymptotically optimal bound on the gap. For $n=3$, $\Delta(H^{\rm bulk}_{n=3,t=2})=3/5$ and Knabe then gives $\Delta(H_{n,t=2})\geq 1/5$. We can compute the gaps $\Delta(H^{\rm bulk}_{n,t=2})$ for increasing $n$ (in fact, Ref.~\cite{cwiklinski2013local} computed up to $n=21$), which decay as we increase $n$. Fitting the gaps as a function of $n$ suggests that asymptotically $\Delta(H_{n,t=2})\sim 1/5$, which is precisely the lower bound that Knabe gives for the $n=3$ gap.

\vspace*{8pt}
\begin{figure}[h]
    \centering
    \includegraphics[width=0.4\linewidth]{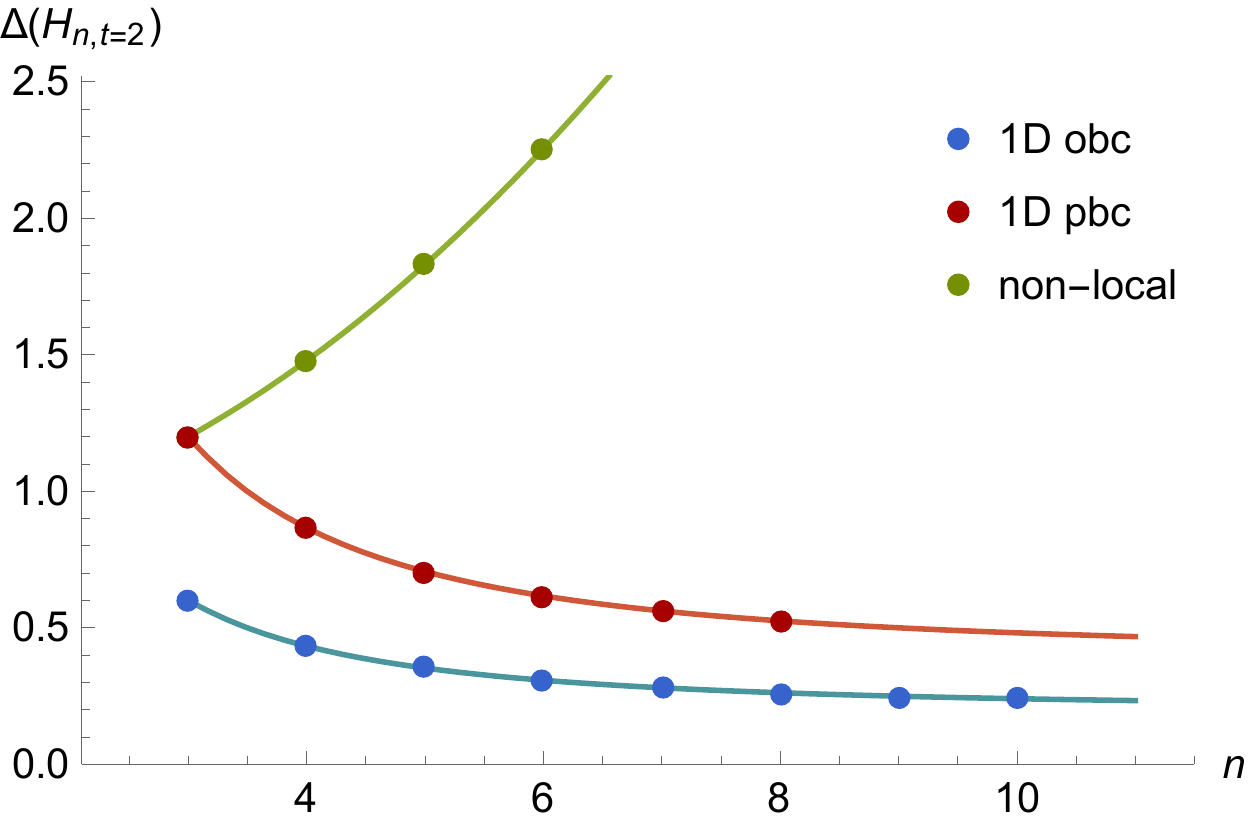}
    \caption{Numerically computed Hamiltonian gaps for the second moment, $t=2$, of obc/pbc $1D$ and non-local RQCs. The best fit scaling suggests a $1/n^2$ decay for the $1D$ spectral gaps, and a (slightly) sub-quadratic growth in the spectral gaps for the non-local Hamiltonian.}
    \label{fig:t2gaps}
\end{figure}
\vspace*{-8pt}

\subsection*{Gap scaling for $1D$ and non-local RQCs}
We conclude with a brief discussion of the gap scaling of the Hamiltonian corresponding to $1D$ RQCs, $H_{n,t} = \sum_i P_{i,i+1}$ with either open or periodic boundary conditions, and that of non-local RQCs, where the Hamiltonian is $H^{\rm non}_{n,t} = \sum_{i<j} P_{i,j}$. Finite-size criteria use gaps of subsystem Hamiltonian with obc to prove lower bounds on the gaps for all $n$; computing the $1D$ pbc and non-local system gaps cannot establish bounds for arbitrary system sizes. Nevertheless, we can still compute the first few nontrivial values to provide evidence for behavior of the spectral gaps in the different random circuit models. 

In \autoref{fig:t2gaps} we show the numerically computed spectral gaps of the second moment, $t=2$, of both the obc and pbc $1D$ Hamiltonians, as well as the non-local case, for increasing system size. A best fit of the obc gaps suggests a $1/n^2$ decay with an asymptotic value of $\sim 0.2$, which is the lower bound given by the $n=3$ Knabe bound. Similarly, a best fit of the $1D$ pbc gaps suggests a $1/n^2$ decay to $\sim 0.4$. For all computed values, the pbc gaps are precisely twice the value of the obc gaps up to numerical error. On the other hand, the non-local Hamiltonian gaps increase as we increase the system size. A polynomial best fit suggests a slightly sub-quadratic scaling of the non-local spectral gaps (the best fit scaling of the first few numerically computed non-local gaps was $n^{1.93}$). If such a scaling exists for higher moments, then this would have implications for the true design depth for non-local RQCs. Whereas the $n$-dependence in the design depth for $1D$ (and higher $D$) RQCs is tight, the $O(n^2 \,{\rm poly}(t))$ design depth we established in \autoref{cor:non-localdesign} could conceivably be improved to $O(n \log(n) \,{\rm poly}(t))$ for non-local RQCs. In fact, a $\Delta(H^{\rm non}_{n,t}) = \Omega(n^2/\log(n))$ scaling of the non-local gaps would be sufficient to prove this, and is consistent with our numerics for the second moment.

\section{Outlook}

A key conjecture made in Ref.~\cite{brandao_local_2016} is whether random quantum circuits in any architecture have a spectral gap that scales as $\Omega(1/\mathrm{poly}(n))$ independent of $t$. 
This would imply approximate unitary designs in depth $O(t \, {\rm poly}(n))$ and, via the results of Ref.~\cite{CompGrowth19}, the Brown-Susskind conjecture~\cite{brown2018second} that local random quantum circuits have a quantum complexity that grows linearly in time for an exponentially long time.

The recursion relation in the proof of \autoref{cor:non-localdesign}, Knabe bounds and also the third condition in the Nachtergaele method~\cite{nachtergaele1996spectral,brandao_local_2016} all rely on an overlap of two projectors acting on different subsets of qudits. 
The fact that for fixed $t$, the overlap becomes small in $q$ suggests to understand the behaviour of $\|(P^{(2)}_{t}\otimes P_{t}^{(1)})(P^{(2)}_{t}\otimes P^{(2)}_{t})-P_{t}^{(3)}\|_{\infty}$ as a function of $q$. 
This can be reformulated as a question about angles between invariant subspaces of subgroups in the unitary group: 
\begin{equation}
\max_t\left\|(P^{(2)}_{t}\otimes P_{t}^{(1)})(P^{(2)}_{t}\otimes P^{(2)}_{t})-P_{t}^{(3)}\right\|_{\infty}\leq \max_{(\pi,V_{\pi})}\cos\left\{V_{\pi}^{SU(q)\otimes SU(q^{2})},V_{\pi}^{SU(q^{2})\otimes SU(q)}\right\},
\end{equation}
where the maximum on the right side is over  all irreducible representations $\pi$ with representation space $V_{\pi}$ except the trivial one.
All eigenvalues of the above operator as well as the eigenvalues of $M(\nu^{\rm bulk}_m)$ are rational functions in $q$ generalizing the solution in \autoref{theorem:exactt=2}. 
This can be seen from the fact that these eigenvalues are solutions to linear systems of equatons with rational functions in $q$ as coefficients.
Unfortunately, characterizing the subspaces $V_{\pi}^{SU(q)\otimes SU(q^{2})}$ in a basis and therefore finding said rational functions seems to be highly non-trivial.
The bound independent of $t$ we obtained in \autoref{lemma:auxgap2} from a convergence result is not sufficient as it converges to $1$ for large $q$.

A possible way to bound the overlap might be via the "dimension trick" in harmonic analysis that was communicated to us by P. Varj\'{u}.
By the Peter-Weyl theorem, all irreps of $SU(D)$ are contained isometrically in the regular representation:
\begin{equation}
L^2(SU(D))\stackrel{\sim}{=}\bigoplus_{\pi} V_{\pi}^{\oplus \dim V_{\pi}}.
\end{equation}
It can be shown that random quantum circuits become absolutely continuous eventually, i.e. there is a density function $\eta\in L^1(SU(q^3))$ such that $\mathrm{d}(\nu^{\rm bw,obc}_n)^{*k_0}=\eta\mathrm{d}\mu_{H}$.
We use the following notation:
\begin{equation}
T_{\rho,\nu}:=\int \rho(U)\mathrm{d}\nu(U)
\end{equation}
for a representation $\rho$ and $T_{\nu}$ for the regular representation.
If one could prove that $\eta\in L^2(SU(q^3))$, then 
\begin{equation}
\tr\left(T_{\nu^{\rm bw,obc}_n}^{2k_0}\right)=\sum_{\pi}\tr\left(T_{\pi,\nu}^{2k_0}\right)\dim(V_{\pi})=\int \eta^2(U)\mathrm{d}\mu_H(U)=:C_{q}<\infty.
\end{equation}
That would imply
\begin{equation}
\|T_{\pi,\nu}\|_{\infty}\leq \left(\frac{C_{q}}{\dim V_{\pi}}\right)^{\frac{1}{2k_0}}.
\end{equation}
For large highest weigths, the dimensions $V_{\pi}$ become arbitrarily small. 
Unfortunately, we do not have any bound on the $L^2$ norm of $\eta$.

\section*{Acknowledgments}
We want to thank Anurag Anshu, Jens Eisert, Sepehr Nezami, Micha{\l} Oszmaniec, and especially P\'{e}ter Varj\'{u} for helpful discussions, as well as Markus Heinrich, Felipe Montealegre-Mora and Ingo Roth for comments on the manuscript.
JH is funded by the Deutsche Forschungsgemeinschaft (DFG, EI 519/14-1).
Research at Perimeter Institute is supported by the Government of Canada through the Department of Innovation, Science and Industry Canada and by the Province of Ontario through the Ministry of Colleges and Universities.

\appendix
\section{Knabe bounds on spectral gaps}\label{app:knabe}
For one-dimensional translational-invariant frustration-free Hamiltonians, we can bound the spectral gap of the system at arbitrary system size using a finite-size criteria, namely that the spectral gap of a small subsystem exceeds a threshold. We refer to these as Knabe bounds. 

In the following, let $n$ denote the global system size and $m$ the local system size. 
We consider $1D$ translationally-invariant Hamiltonians with periodic boundary conditions $H^p_{n}=\sum_{i=1}^n P_{i,i+1}$ and open boundary conditions $H^o_{n}=\sum_{i=1}^{n-1} P_{i,i+1}$. The first criteria relates the gap of an finite-size open Hamiltonian to that of periodic Hamiltonian:
\begin{theorem}[Knabe \cite{knabe1988energy}]\label{thm:knabe}
Let $m\geq 3$ and $n>m$. For a $1D$ frustration-free Hamiltonian with periodic boundary conditions we have that
\begin{equation}
\Delta(H^p_{n}) \geq \frac{m-1}{m-2} \left(\Delta(H^o_{m})-\frac{1}{m-1}\right)\,.
\end{equation}
\end{theorem}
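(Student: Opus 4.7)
The plan is to run Knabe's original sum-of-squares argument, which converts a local window gap into a global gap by exploiting frustration-freeness together with commutativity of disjointly-supported edge projectors.

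First I would invoke the elementary fact that any frustration-free $H\geq 0$ with gap $\Delta(H)$ above a zero-energy ground space satisfies the operator inequality $H^2\geq\Delta(H)\,H$ (every eigenvalue $\lambda$ is either $0$ or at least $\Delta(H)$, so $\lambda^2\geq\Delta(H)\lambda$). I would apply this windowwise: for $i=1,\dots,n$ set $A_i:=\sum_{j=i}^{i+m-2}P_{j,j+1}$, with indices taken modulo $n$. Each $A_i$ acts nontrivially only on $m$ consecutive sites and is unitarily equivalent to $H^o_m$ tensored with the identity on the complement, so it is frustration-free with gap $\Delta(H^o_m)$, giving $A_i^2\geq\Delta(H^o_m)\,A_i$. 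Each edge projector lies in exactly $m-1$ of the cyclic windows, so $\sum_i A_i=(m-1)\,H^p_n$, and summing the local inequalities produces
\begin{equation}\label{eq:kn-lower}
\sum_{i=1}^n A_i^2\;\geq\;(m-1)\,\Delta(H^o_m)\,H^p_n.
\end{equation}

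Next I would compare $\sum_i A_i^2$ to $(m-2)(H^p_n)^2$. Expanding both and counting: a pair of edges at cyclic distance $d\in\{1,\dots,m-2\}$ is counted $(m-1-d)$ times in the window sum and once in $(H^p_n)^2$. A term-by-term comparison yields the identity
\begin{equation}\label{eq:kn-identity}
(m-2)(H^p_n)^2-\sum_{i=1}^n A_i^2\;=\;-H^p_n+\sum_{d\geq 2}c_d\sum_{j}\bigl(P_{j,j+1}P_{j+d,j+d+1}+P_{j+d,j+d+1}P_{j,j+1}\bigr),
\end{equation}
with nonnegative coefficients $c_d$. The decisive design choice is the prefactor $(m-2)$: it is tuned so that the $d=1$ nearest-neighbor cross terms drop out entirely (their total coefficient is $0$). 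For every $d\geq 2$ the projectors $P_{j,j+1}$ and $P_{j+d,j+d+1}$ have disjoint supports, hence commute, so each product is itself a projector --- in particular positive semidefinite --- and the entire right-hand sum is $\geq 0$. Therefore $(m-2)(H^p_n)^2\geq\sum_i A_i^2-H^p_n$.

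Chaining this with \eqref{eq:kn-lower} I obtain $(m-2)(H^p_n)^2\geq\bigl((m-1)\Delta(H^o_m)-1\bigr)H^p_n$. Since $H^p_n$ is itself frustration-free with $\ker H^p_n$ equal to its ground space, restricting this operator inequality to the orthogonal complement of the ground space and evaluating on any eigenvector with eigenvalue $\lambda>0$ gives $\lambda\geq\frac{(m-1)\Delta(H^o_m)-1}{m-2}$; taking $\lambda=\Delta(H^p_n)$ rearranges to the claimed bound. The main obstacle --- and the entire crux of Knabe's trick --- is engineering the $d=1$ cancellation: the non-commuting nearest-neighbor products $P_{j,j+1}P_{j+1,j+2}+P_{j+1,j+2}P_{j,j+1}$ have no definite sign, so any residual coefficient on them would destroy the comparison. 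Once that cancellation is secured by the precise prefactor $(m-2)$, the remainder of the proof is linear bookkeeping plus the trivial frustration-free inequality $H^2\geq\Delta(H)H$.
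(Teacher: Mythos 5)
The paper does not prove this statement --- it is imported verbatim from Knabe's 1988 paper as a cited finite-size criterion --- so there is no in-paper argument to compare against; your reconstruction is correct and is essentially Knabe's original sum-of-squares proof (window operators $A_i$, the frustration-free inequality $A_i^2\geq\Delta(H^o_m)A_i$, and the tuned prefactor $m-2$ that kills the non-commuting nearest-neighbor cross terms while leaving only commuting, hence positive, products). One small bookkeeping caveat: the count ``a pair at cyclic distance $d$ appears $m-1-d$ times in $\sum_i A_i^2$'' is the generic value; for $m<n<2m-2$ a pair can also be captured by a window wrapping the other way around the ring, contributing an extra $\max(m-1-(n-d),0)$. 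This only increases the window count, and one checks that for $n>m$ the total never exceeds $m-2$ (and equals exactly $m-2$ for $d=1$), so all $c_d$ remain nonnegative, the $d=1$ cancellation is exact, and your conclusion stands unchanged.
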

\ni This bound on the spectral gap was later improved to show that:
\begin{theorem}[Gosset-Mozgunov \cite{GM15}]\label{thm:GM}
Let $m\geq 3$ and $n>2m$.
For a $1D$ frustration-free Hamiltonian with periodic boundary conditions we have that
\begin{equation}
\Delta(H^p_{n}) \geq \frac{5}{6}\frac{m^2+m}{m^2-4} \left(\Delta(H^o_{m})-\frac{6}{m(m+1)}\right)\,.
\end{equation}
\end{theorem}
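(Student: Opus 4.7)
The plan is to prove $\Delta(H^p_n) \geq c$ by establishing the operator inequality $(H^p_n)^2 \geq c\, H^p_n$, which is equivalent to the spectral gap bound by frustration-freeness of $H^p_n$. The whole argument is local: the finite-size gap $\Delta(H^o_m)$ on $m$-site windows propagates to a bound on the global gap, and the condition $n > 2m$ is required so that two displaced windows can remain disjoint within the cyclic system.

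First I would set up the translated window Hamiltonians $h_j := \sum_{i=j}^{j+m-2} P_{i,i+1}$ for $j = 1,\ldots,n$ (cyclic indexing). Each is a translate of $H^o_m$, so by frustration-freeness at the window scale, $h_j^2 \geq \Delta(H^o_m)\, h_j$. Summing cyclically and using that each edge lies in exactly $m-1$ windows,
\begin{equation*}
\sum_{j=1}^n h_j^2 \;\geq\; (m-1)\,\Delta(H^o_m)\, H^p_n,\qquad \sum_{j=1}^n h_j \;=\; (m-1)\, H^p_n.
\end{equation*}
Expanding the left side by hand gives $\sum_j h_j^2 = (m-1) H^p_n + \sum_{d=1}^{m-2}(m-1-d)\sum_i \{P_{i,i+1},P_{i+d,i+d+1}\}$, while $(H^p_n)^2 = H^p_n + \sum_{d\geq 1}\sum_i \{P_{i,i+1},P_{i+d,i+d+1}\}$. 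Comparing these expansions already yields the classical Knabe bound: the operator $(m-2)(H^p_n)^2 + H^p_n - \sum_j h_j^2$ is a linear combination of the cross-terms $\{P_{i,i+1},P_{i+d,i+d+1}\}$ with nonnegative coefficients $(d-1)$ for $2 \leq d \leq m-2$, $(m-2)$ for $d \geq m-1$, and coefficient $0$ for $d=1$ (where the cross term is generically indefinite). Hence $\sum_j h_j^2 \leq (m-2)(H^p_n)^2 + H^p_n$, which, combined with the frustration-free lower bound, gives $\Delta(H^p_n) \geq \frac{m-1}{m-2}\bigl(\Delta(H^o_m)-\frac{1}{m-1}\bigr)$.

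To obtain the Gosset--Mozgunov improvement I would sharpen this upper bound by mixing in a second window-level inequality. A natural choice is a convex combination of $\sum_j h_j^2$ together with cross-window products $\sum_j h_j h_{j'}$ for appropriately displaced $j'$ (which stay within commuting supports precisely when $n > 2m$), with free coefficients chosen so that (i) the indefinite short-range cross-term $\{P_{i,i+1},P_{i+1,i+2}\}$ still receives coefficient zero, (ii) all remaining commuting cross-terms receive nonnegative weight, and (iii) the ratio of the $(H^p_n)^2$-coefficient to the $H^p_n$-coefficient is maximized. Solving this small linear program gives the refined constants $\frac{5}{6}\frac{m^2+m}{m^2-4}$ and threshold $\frac{6}{m(m+1)}$; frustration-freeness then converts $(H^p_n)^2 \geq c\, H^p_n$ into $\Delta(H^p_n) \geq c$.

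The main obstacle is the optimization producing the numerical constants $\frac{5}{6}$ and $\frac{6}{m(m+1)}$. This requires careful control of the one truly non-trivial piece---the anticommutator $\{P_{i,i+1},P_{i+1,i+2}\}$ of overlapping nearest-neighbor projectors---which is indefinite for generic non-commuting projectors. All other cross-terms involve commuting projectors on disjoint supports and are trivially nonnegative, so the cleverness of the argument lies entirely in choosing the window-weighting so that the indefinite contributions cancel identically, leaving a verifiable inequality between commuting positive operators that can be read off term by term.
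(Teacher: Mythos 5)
First, a remark on scope: the paper does not prove this statement --- it is quoted from Gosset and Mozgunov \cite{GM15} in \autorefapp{app:knabe} as a known finite-size criterion --- so there is no in-paper proof to compare against, and your proposal has to stand on its own. Its first half does: the window operators $h_j$, the identities $\sum_j h_j=(m-1)H^p_n$ and $\sum_j h_j^2=(m-1)H^p_n+\sum_{d=1}^{m-2}(m-1-d)\sum_i\{P_{i,i+1},P_{i+d,i+d+1}\}$, the observation that $(m-2)(H^p_n)^2+H^p_n-\sum_j h_j^2$ assigns coefficient zero to the indefinite $d=1$ anticommutators and nonnegative coefficients to the commuting $d\geq 2$ terms, and the conclusion $\Delta(H^p_n)\geq\frac{m-1}{m-2}\bigl(\Delta(H^o_m)-\frac{1}{m-1}\bigr)$ are all correct. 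But that is precisely \autoref{thm:knabe}, not \autoref{thm:GM}.

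The step that would produce the improved constants is missing, and the mechanism you propose for it cannot work. The only gap-dependent inequality available at the window level is $h_j^2\geq\Delta(H^o_m)\,h_j$; for two disjointly supported (hence commuting) windows one only has $h_jh_{j'}\geq 0$, which contributes nothing proportional to $\Delta(H^o_m)H^p_n$ on the lower-bound side. Adding such cross-products with positive weight therefore strictly worsens the achievable ratio --- the left-hand side of your sandwich grows while the gap-dependent part of its lower bound does not --- and adding them with negative weight destroys the lower bound altogether; the ``small linear program'' you invoke has no feasible point beating Knabe's constants, so the numbers $\frac{5}{6}\frac{m^2+m}{m^2-4}$ and $\frac{6}{m(m+1)}$ cannot emerge from it. The actual improvement in \cite{GM15} rests on a genuinely different ingredient: one combines the squares of subchain Hamiltonians of \emph{all} intermediate lengths $s\leq m$ (equivalently, replaces the uniform windows by windows with a non-uniform, tent-shaped weighting of the individual projectors), each of which still admits a gap-dependent lower bound because the finite-size gaps do not increase with subchain length. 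The resulting sums over window lengths are what generate the $m(m+1)/6$-type combinatorial factors in the threshold and prefactor. Without identifying this deformation, your argument proves Knabe's original bound but not the Gosset--Mozgunov refinement.
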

\ni The finite-size criteria were generalized to bound the gap of Hamiltonians with open boundary conditions:
\begin{theorem}[Lemm-Mozgunov \cite{LM18}]\label{thm:LM}
Let $m\geq 3$ and $n\geq 2m$. 
For a $1D$ frustration-free Hamiltonian with open boundary conditions we have that
\begin{equation}
\Delta(H^o_{n}) \geq F(m) \left(\min_{3\leq m'\leq m} \Delta(H^o_{m'})-\frac{G(m)}{m^{3/2}}\right)\,,
\end{equation}
where $F(m)$ and $G(m)$ are known functions of the subsystem size $m$, defined explicitly in \cite{LM18}.
\end{theorem}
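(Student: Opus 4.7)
The approach will follow the Knabe-style strategy of bounding the spectral gap by deriving an operator inequality of the form $(H^o_n)^2 \geq c_1(m)\,H^o_n$ on the excited subspace, adapted from the periodic case (as in \autoref{thm:knabe} and \autoref{thm:GM}) to open boundary conditions. By frustration-freeness, the ground space of $H^o_n$ has eigenvalue $0$, so any excited eigenvalue $E>0$ satisfies $E^2 \geq c_1(m)\,E$, i.e.\ $E \geq c_1(m)$, which will give the claimed lower bound on $\Delta(H^o_n)$ once $c_1(m)$ is identified with $F(m)\bigl(\min_{m'\leq m}\Delta(H^o_{m'}) - G(m)/m^{3/2}\bigr)$.

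First, for each window position $k \in \{1,\ldots,n-m+1\}$ and each local size $m'\leq m$, I would introduce the sliding subsystem Hamiltonian $A_k^{(m')} := \sum_{i=k}^{k+m'-2} P_{i,i+1}$, which by translation invariance has the same spectrum as $H^o_{m'}$ and hence satisfies the operator inequality $(A_k^{(m')})^2 \geq \Delta(H^o_{m'})\,A_k^{(m')}$. I would then form a non-negative weighted combination $\sum_{k,m'} w_k^{(m')} (A_k^{(m')})^2 \geq \min_{m'} \Delta(H^o_{m'}) \sum_{k,m'} w_k^{(m')} A_k^{(m')}$, where the weights $w_k^{(m')}$ are chosen so that the right-hand sum is proportional to $H^o_n$ itself. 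This is precisely where the open-boundary nature intervenes: the bonds near the boundary of the chain are covered by strictly fewer sliding windows than the bulk bonds, and the weights must be tuned (including over the local size $m'$) to recover a uniform Hamiltonian, which is why the bound is phrased in terms of the minimum over $m'\leq m$ rather than a single $m'$.

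The crucial step is to bound the left-hand side from above in terms of $(H^o_n)^2 = H^o_n + \sum_{i\neq j} P_{i,i+1} P_{j,j+1}$. Matching cross-term coefficients reduces this to an operator-valued combinatorial estimate on nearby products $P_{i,i+1} P_{j,j+1}$ with $|i-j|\leq m-2$. A direct Cauchy-Schwarz argument as in the original Knabe bound gives only an $O(1/m)$ correction; obtaining the sharper $O(m^{-3/2})$ correction in the statement requires the Gosset-Mozgunov-type refinement, using a quadratic profile for $w_k^{(m')}$ and optimizing over its shape. The explicit form of the functions $F(m)$ and $G(m)$ emerges from this optimization together with the bookkeeping of boundary weights.

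The main obstacle is precisely the $m^{-3/2}$ scaling: a naive sliding-window argument produces $O(1/m)$, and the improvement requires the correct quadratic weighting together with a careful treatment of boundary cross terms that cannot be absorbed using translation invariance, since that symmetry is broken by the open boundary. Once the correct weight profile is identified and the resulting combinatorial estimate is carried out, the remaining steps are algebraic and produce the explicit functions $F(m)$ and $G(m)$ quoted from \cite{LM18}.
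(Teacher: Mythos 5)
The paper does not prove this statement at all: \autoref{thm:LM} is quoted verbatim from Lemm--Mozgunov \cite{LM18} as a known finite-size criterion, so there is no in-paper proof to compare against. Measured against the actual proof in \cite{LM18}, your outline correctly identifies the general Knabe-type strategy --- deriving $(H^o_n)^2 \geq c_1(m)\, H^o_n$ via squares of weighted sliding subsystem Hamiltonians, using frustration-freeness to convert this into a gap bound, and explaining the $\min_{3\leq m'\leq m}$ as a consequence of truncated windows near the open boundary --- and it correctly flags that the $m^{-3/2}$ threshold requires the Gosset--Mozgunov-type nonuniform weight profile rather than the original Knabe argument.

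However, as a proof the proposal has a genuine gap: everything that makes the theorem the theorem is deferred. The statement's entire content lies in the specific functions $F(m)$ and $G(m)$ and the $m^{-3/2}$ scaling, and your sketch says only that these ``emerge from the optimization'' of an unspecified weight profile together with an unexecuted ``operator-valued combinatorial estimate'' on the cross terms. In particular, you never exhibit the weights $w_k^{(m')}$, never verify that the weighted linear terms actually reproduce a multiple of $H^o_n$ at the boundary (this is the step where the open-boundary bookkeeping is genuinely delicate and where \cite{LM18} departs from the periodic argument), and never carry out the cross-term bound --- note also that the only non-commuting, potentially negative cross terms are those with $|i-j|=1$, not all $|i-j|\leq m-2$ as your phrasing suggests; the window size enters through the weight profile, not through which products are problematic. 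Without these steps the claimed inequality with its specific threshold is asserted rather than proved.
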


In all three bounds, for the case of subsystem size $m=3$, the threshold becomes $1/2$ and the bound on the spectral gap is $\Delta(H_n) \geq 2(\Delta(H^o_3)-1/2)$. In \autoref{thm:LM}, the two functions of $m$ are defined in Ref.~\cite{LM18}, and asymptote to $G(m) \sim 2\sqrt{6}$ and $F(m)\sim \frac{5}{\sqrt{6m}}$.

\section{Orthogonal random circuits and designs for \ensuremath{O(d)}}\label{app:ORQCs}

In this appendix, we briefly generalize some of out results to prove that random quantum circuits constructed out of 2-local gates drawn randomly with respect to the Haar measure on the orthogonal group $\orth(q^2)$, form approximate orthogonal designs. 
The quantum information literature on orthogonal designs and orthogonal RQCs is somewhat sparse. Ref.~\cite{realrandomized2018} studied (exact) orthogonal designs in the context of randomized benchmarking and Ref.~\cite{nhj2018opgrowth} computed the spreading of a local operator under evolution by an orthogonal random quantum circuit.

As the discussion closely follows that of unitary random circuits and unitary designs, our exposition will be succinct. For a probability distribution $\nu$ on the orthogonal group $\orth(d)$, the $t$-fold channels are simply $\Phi^{(t)}_\nu(A) = \int O^{\otimes t}(A)(O^T)^{\otimes t} \d \nu(O)$. Similar to \autoref{def:approxdesign}, we define an approximate orthogonal design as follows:
\begin{definition}[Approximate orthogonal designs]
    A probability distribution $\nu$ on $\orth(d)$ is an $\ep$-approximate orthogonal $t$-design if the $t$-fold channels obey
    \begin{equation}
        \big\| \Phi^{(t)}_\nu - \Phi^{(t)}_{\mu_O} \big\|_\diamond \leq \frac{\ep}{d^t}\,,
    \end{equation}
    where $\mu_O$ denotes the Haar measure on the orthogonal group. Furthermore, we say a probability distribution $\nu$ is a (relative) $\ep$-approximate orthogonal $t$-design if $(1-\ep) \Phi^{(t)}_\nu \preccurlyeq \Phi^{(t)}_{\mu_O} \preccurlyeq (1+\ep)\Phi^{(t)}_\nu$.
\end{definition}
The $t$-th moment operators for a probability distribution $\nu$ on the orthogonal group $\orth(d)$, defined as the vectorization of the $t$-fold channels
\begin{equation}
    M(\nu,t) := {\rm vec}\big(\Phi^{(t)}_\nu\big) = \int O^{\otimes 2t}\, \d \nu(O)\,,
\end{equation}
have a spectral gap given as the operator norm of the difference in moment operators
\begin{equation}
    g_O (\nu,t) := \big\| M(\nu,t) - M(\mu_O,t) \big\|_\infty\,.
\end{equation}
Due to the left/right invariance of the Haar measure, it follows that the orthogonal moment operator is a projector, and thus that $g_O(\nu,t)$ can be amplified as $g_O(\nu^{* k},t) \leq g_O(\nu,t)^k$. The same relation of the spectral gap to the approximate design condition holds in the orthogonal case, namely, for some probability distribution $\nu$ if $g_O(\nu,t)\leq \ep/d^{2t}$, then $\nu$ is an $\ep$-approximate orthogonal $t$-design. 

Consider local random quantum circuits on a $1D$ chain of $n$ qudits with local dimension $q$, where we apply a 2-site orthogonal gate drawn from $\orth(q^2)$ to a nearest-neighbor pair of qudits at each time step. The convergence of orthogonal RQCs to approximate orthogonal designs again follows from a bound on the spectral gap of the moment operators.

\begin{theorem}[Orthogonal spectral gaps for large $q$]\label{thm:Ogaps} Local orthogonal random quantum circuits on $n$ qudits have a spectral gap bounded as
\begin{equation}
    g_O(\nu_n,t)\leq 1- \frac{1}{3n}
\end{equation}
for local dimensions $q\geq 8t^2$, and for all $n\geq 4$ and $t\geq 1$. 
\end{theorem}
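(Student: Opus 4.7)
The plan is to mirror the proof of Theorem~\ref{thm:gapbound} on the large-$q$ unitary spectral gap, replacing the symmetric group $S_t$ with the set of pair partitions of $\{1,\dots,2t\}$, which indexes a basis for the commutant of the diagonal orthogonal action $\orth(q)^{\otimes 2t}$. First I would reformulate the bound as $g_O(\nu_n,t)=1-\Delta(H^O_{n,t})/n$ for a frustration-free Hamiltonian $H^O_{n,t}=\sum_i P^O_{i,i+1}$ whose local projectors are $P^O_{i,i+1}=\iden-\iden\otimes M(\mu_O,t)\otimes\iden$. The ground space is again spanned by tensor-power states $|\psi_\pi\rangle^{\otimes n}$, now indexed by pair partitions $\pi$, where $|\psi_\pi\rangle\in(\mathbb{C}^q)^{\otimes 2t}$ is the normalized Brauer-diagram vector contracting the $2t$ indices according to $\pi$.

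Next I would invoke the $m=3$ Knabe bound, Eq.~\eqref{eq:knabem=3}, so that it suffices to prove $\Delta(H^{O,\mathrm{bulk}}_3)\geq 2/3$, which yields $\Delta(H^O_{n,t})\geq 1/3$ and hence the claimed $g_O\leq 1-1/(3n)$. By the block-diagonalization step used in the proof of Theorem~\ref{thm:gapbound}, this reduces to showing $\|M(\nu^{O,\mathrm{bulk}}_3,t)-M(\mu_O,t)\|_\infty\leq 2/3$.

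The crux is to establish an orthogonal analog of Lemma~\ref{lemma:BHHinequality}: with $S_O:=\sum_\pi \psi_\pi$ the Brauer frame operator on $m$ qudits, I would prove $\|M(\mu_O,t)-S_O\|_\infty\leq Ct^2/D$ for an explicit constant $C$ whenever $D\geq C' t^2$. The pair-partition overlaps satisfy $\langle\psi_\pi|\psi_{\pi'}\rangle=D^{\ell(\pi,\pi')-t}$, where $\ell(\pi,\pi')$ is the loop count of the composed Brauer diagram, so off-diagonal Gram entries are at most $1/D$, and only $O(t^2)$ partitions sit at this worst-case overlap relative to a fixed $\pi$. A Gershgorin-type argument on the Brauer Gram matrix then gives the claimed bound. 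The slightly tighter local-dimension threshold $q\geq 8t^2$ compared with $6t^2$ in the unitary case comes from the larger count $(2t-1)!!$ of pair partitions and the correspondingly looser constants in this estimate; this step is the main technical obstacle.

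Finally, given the approximate orthogonality, I would reproduce the chain of inequalities in the proof of Theorem~\ref{thm:gapbound}: rewrite $M(\nu^{O,\mathrm{bulk}}_3,t)-M(\mu_O,t)$ in terms of $S_O^{(m)}$ up to errors $O(t^2/q^m)$, factor through a synthesis operator $B_O=\sum_\pi|\pi\rangle\langle\psi_\pi|$ satisfying $\|B_O^\dagger B_O\|_\infty\leq 1+Ct^2/q$, and bound the remaining local term $\max_\pi\|\psi_\pi\otimes\iden+\iden\otimes\psi_\pi-2\psi_\pi\otimes\psi_\pi\|_\infty\leq 1$. For $q\geq 8t^2$ this yields the required $\|M(\nu^{O,\mathrm{bulk}}_3,t)-M(\mu_O,t)\|_\infty\leq 2/3$, and the Knabe inequality closes the argument.
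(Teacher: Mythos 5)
Your overall architecture is exactly the paper's: reduce to the $m=3$ Knabe bound, show $\|M(\nu_3^{\mathrm{bulk}},t)-M(\mu_O,t)\|_\infty\leq 2/3$ via a synthesis operator for the pair-partition states, and use an orthogonal analog of Lemma~\ref{lemma:BHHinequality} with the doubled constant $2t^2/d$ that forces $q\geq 8t^2$. The one place your sketch does not close as stated is the approximate-orthogonality lemma itself. Knowing that off-diagonal Gram entries are at most $1/D$ and that only $O(t^2)$ partitions attain this worst case is not enough for a Gershgorin row-sum bound: there are $(2t)!/(2^t t!)$ pair partitions in total, so the remaining entries, each as large as $1/D^2$, could in principle contribute $\sim (2t-1)!!/D^2$, which is not $O(t^2/D)$ unless you also control the number of partitions at every distance $k$ (roughly $(ct^2)^k/k!$ at loop defect $k$). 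The paper sidesteps this level-by-level counting by evaluating the full row sum in closed form, $\sum_{\sigma\in M_{2t}} d^{\ell(\sigma,\tau)} = \prod_{j=1}^{t}(d+2(j-1))$, using the orthogonality of zonal spherical functions; dividing by $d^t$ and using $d\geq t^2$ gives $1+2t^2/d$ directly. If you replace your Gershgorin step with either that identity or an explicit count of pair partitions by loop defect, the rest of your argument goes through verbatim.
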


As the relation between the spectral gap of the moment operators and the $t$-fold channels is the same as in the unitary case, it is then an immediate corollary that local orthogonal random quantum circuits form approximate orthogonal designs at large $q$. Moreover, the detectablity lemma (\autoref{lemma:detect}) extends the result to brickwork circuits comprised of random orthogonal gates.
\begin{corollary}\label{cor:odesigns}
For any $t\geq 1$ and $n\geq 4$, and for local dimension $q\geq 8t^2$, it holds that
    \begin{enumerate}
    \item  Local orthogonal random quantum circuits of depth $3n(2nt\log(q)+\log(1/\varepsilon))$ are $\varepsilon$-approximate orthogonal $t$-designs.
    \item Brickwork orthogonal random quantum circuits of depth $26(2nt\log(q)+\log(1/\varepsilon))$ are $\varepsilon$-approximate orthogonal $t$-designs.
	\end{enumerate}
\end{corollary}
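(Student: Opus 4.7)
The plan is to deduce both items of \autoref{cor:odesigns} from \autoref{thm:Ogaps} by the same two-step strategy used in the unitary case (amplification plus, for brickwork, the detectability lemma), together with the observation that the dictionary between spectral gaps, $t$-fold channels, and approximate designs carries over wholesale to the orthogonal setting. Since the orthogonal Haar measure is left/right invariant, $M(\mu_O,t)$ is a projector and $M(\nu,t)-M(\mu_O,t)$ inherits the same amplification property $g_O(\nu^{*k},t)\leq g_O(\nu,t)^k$ that was used in the unitary argument. Combining this with the orthogonal analog of \autoref{lemma:gtodesign} (stated just after the definition of $g_O$), it suffices to exhibit a depth $T$ for which $g_O(\nu^{*T},t)\leq \varepsilon/q^{2nt}$.

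For item~1, I would simply iterate the bound from \autoref{thm:Ogaps}:
\begin{equation}
g_O(\nu_n^{*T},t)\leq \left(1-\frac{1}{3n}\right)^T \leq \exp(-T/(3n)).
\end{equation}
Solving $\exp(-T/(3n))\leq \varepsilon/q^{2nt}$ for $T$ gives $T\geq 3n(2nt\log(q)+\log(1/\varepsilon))$, which is exactly the claimed depth.

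For item~2, I would mimic the brickwork argument in the proof of \autoref{cor:RQCdesigns}. Writing $g_O(\nu_n,t)=1-\Delta(H_{n,t}^O)/n$ in terms of the corresponding orthogonal frustration-free Hamiltonian $H_{n,t}^O=\sum_i(\iden - P_{i,i+1}^{(O)})$, \autoref{thm:Ogaps} implies $\Delta(H_{n,t}^O)\geq 1/3$. Applying the detectability lemma (\autoref{lemma:detect}) to the two-layer brickwork moment operator (each layer being a product of commuting projectors, so $g=2$ still works) yields
\begin{equation}
g_O(\nu_n^{\mathrm{bw}},t)\leq \frac{1}{\sqrt{\Delta(H_{n,t}^O)/4+1}}\leq \sqrt{\frac{12}{13}} \leq 1-\frac{1}{26},
\end{equation}
and amplification then gives the stated $26(2nt\log(q)+\log(1/\varepsilon))$ depth.

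The main potential obstacle is not in the numerics but in verifying that every ingredient of the unitary chain of reasoning really does transfer to the orthogonal group: namely, that the orthogonal moment operator is a projector, that $\|\Phi_\nu^{(t)}-\Phi_{\mu_O}^{(t)}\|_\diamond\leq d^t g_O(\nu,t)$ (so that the orthogonal analog of \autoref{lemma:gtodesign} holds in both the additive and relative senses), and that the orthogonal local Hamiltonian $H_{n,t}^O$ is genuinely frustration-free with the commutation structure needed for \autoref{lemma:detect}. Each of these should follow from the same Schur-Weyl-type reasoning used in the unitary setting, with the symmetric group replaced by the Brauer algebra acting via pairings on $t$ copies, so I would treat these as routine verifications rather than new obstacles.
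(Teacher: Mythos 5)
Your proposal is correct and follows exactly the paper's route: the paper also deduces item~1 by amplifying the gap bound of \autoref{thm:Ogaps} via $g_O(\nu^{*k},t)\le g_O(\nu,t)^k$ together with the orthogonal analogue of \autoref{lemma:gtodesign}, and item~2 by feeding $\Delta(H^O_{n,t})\ge 1/3$ into the detectability lemma. Your explicit numerical checks ($\sqrt{12/13}\le 1-1/26$) and the list of "routine verifications" match what the paper treats as immediate.
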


In order to prove \autoref{thm:Ogaps}, we can express the spectral gap of the orthogonal moments operators as the gap of a frustration-free local Hamiltonian. 
First, we define a convenient short-hand for the orthogonal Haar projector on $m$ qudits $P^{(m)}_O := M(\mu_O,t)$.
Now consider the following Hamiltonian consisting of local nearest-neighbor interaction terms 
\begin{equation}
    H^O_{n,t} = \sum_i P_{i,i+1} \quad\text{with}\quad P_{i,i+1} := \iden -  \iden_{[1,i-1]} \otimes P^{(2)}_O \otimes \iden_{[i+2,n]}\,.
\end{equation}
This $1D$ translationally-invariant Hamiltonian is frustration-free, where the zero-energy ground states are generalizations of those built from permutations as in the unitary case. 

First, let $M_{2t}$ denote the set of all pair partitions on $2t$ elements.
A pair partition $\sigma\in M_{2t}$ is a partition of the set $\{1,\ldots, 2t\}$ into pairs, written as $\{ \{\sigma(1), \sigma(2)\},\ldots,\{\sigma(2t-1), \sigma(2t)\}$, where $\sigma(2n-1)< \sigma(2n)$ and $\sigma(1)<\sigma(3)<\ldots<\sigma(2t-1)$. For example, the set of pair partitions of 4 elements, $M_4$, contains three elements
\begin{equation}
\big\{ \{1,2\}, \{3,4\}\big\}\,, \quad  \big\{ \{1,4\}, \{2,3\}\big\}\,, \quad  \big\{ \{1,3\}, \{2,4\}\big\}\,.
\end{equation}
In general, $M_{2t}$ contains $(2t)!/(2^t t!)$ elements. The set of pair partitions can be simply realized as a subset of the symmetric group $S_{2t}$. Moreover, pair partitions are representatives of the left cosets of the hyperoctahedral group in the symmetric group.

In the $2t$-fold space $(\C^q)^{\otimes 2t}$, let $\ket{\Omega_{nm}}$ be a maximally entangled state on two tensor factors $\ket{\Omega_{nm}} = \frac{1}{\sqrt{q}} \sum_i \ket{i_n i_m}$. Given a pair partition $\sigma\in M_{2k}$, we construct a state as
\begin{equation}
\ket{\varphi_\sigma} := \bigotimes_{j=1}^t \ket{\Omega_{\sigma(2j-1),\sigma(2j)}}\,.
\end{equation}
For any $\sigma\in M_{2t}$, the action of the projector is $P_O \ket{\varphi_\sigma} = \ket{\varphi_\sigma}$. It then follows that the ground states of the Hamiltonian $H_{n,t}$ are the zero energy states $\ket{\varphi_\sigma}^{\otimes n}$ for all $\sigma \in M_{2t}$. 
The $\spann\{ \ket{\varphi_\sigma}\}$ is the zero energy eigenspace of the Hamiltonian, and where $\dim \!\ker H_{n,t} = (2t)!/(2^t t!)$. This can be seen as a consequence of Schur-Weyl duality for the orthogonal group, given by the action of the Brauer algebra which has a basis formed by pair partitions; see \cite{Collins04} and references therein.

Given two pair partitions $\sigma, \tau \in M_{2t}$, we can define an inner product between them as follows. First, define a graph $\mathfrak{g}(\sigma,\tau)$ with vertices $\{1,\ldots,2t\}$ and edges $\{\sigma(2j-1),\sigma(2j)\}_{i=1}^t$ as well as $\{\tau(2j-1),\tau(2j)\}_{i=1}^t$. Let $\ell(\sigma, \tau)$ be the number of connected components of the graph $\mathfrak{g}(\sigma,\tau)$. This allows us to write the Hilbert-Schmidt inner product between two states as $\vev{\varphi_\sigma|\varphi_\tau} = d^{\ell(\sigma,\tau)}$. Note that the diagonal elements with $\sigma=\tau$ will always be $d^t$. 

We define the frame operator for the (non-orthonormal) basis of states $\{\ket{\varphi_\sigma}, \sigma\in M_{2t}\}$
\begin{equation}
    S' = \sum_{\sigma\in M_{2k}} \varphi_\sigma\,,
\end{equation}
where $\varphi_\sigma := \ketbra{\varphi_\sigma}$, and prove the following Lemma:
\begin{lemma}\label{lem:pairpbound} For $d \geq t^2$, the Haar projector $P_O$ on the orthogonal group $\orth(d)$ obeys the following bound
\begin{equation}
   \big\| P_O - S'\big\|_\infty \leq \frac{2t^2}{d}\,.
\end{equation}
\end{lemma}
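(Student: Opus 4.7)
The plan is to adapt the proof strategy of Lemma~\ref{lemma:BHHinequality} to the orthogonal setting, with pair partitions $M_{2t}$ replacing permutations $S_t$ and the Brauer algebra taking the role of the symmetric group algebra. First, each $|\varphi_\sigma\rangle$ is $\orth(d)^{\otimes 2t}$-invariant, since $(O\otimes O)|\Omega_{nm}\rangle = |\Omega_{nm}\rangle$ for every $O\in \orth(d)$. By Brauer's extension of Schur--Weyl duality, for $d\geq t^2$ the family $\{|\varphi_\sigma\rangle\}_{\sigma\in M_{2t}}$ is linearly independent and spans the entire $\orth(d)^{\otimes 2t}$-invariant subspace, so $\mathrm{Im}(P_O) = \mathrm{Im}(S')$; both operators vanish on the orthogonal complement, and $\|P_O - S'\|_\infty$ is determined by the restriction to this common image, on which $P_O$ acts as the identity.

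Next I introduce an orthonormal basis $\{|\sigma\rangle\}_{\sigma\in M_{2t}}$ of an abstract $|M_{2t}|$-dimensional space and the synthesis operator $B := \sum_\sigma |\varphi_\sigma\rangle\langle\sigma|$. A direct computation yields $BB^\dagger = S'$ and $B^\dagger B = G$, where $G_{\sigma\tau} := \langle\varphi_\sigma|\varphi_\tau\rangle = d^{\ell(\sigma,\tau)-t}$ is the Brauer Gram matrix, normalized so that its diagonal equals $1$. Since $S'$ and $G$ share the same nonzero spectrum, the problem reduces to establishing
\[
\|P_O - S'\|_\infty \;=\; \|I - G\|_\infty \;\leq\; \frac{2t^2}{d}.
\]

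To control $\|I - G\|_\infty$ I apply Gerschgorin. The symmetric group $S_{2t}$ acts transitively on $M_{2t}$ by relabeling while preserving the loop number $\ell(\cdot,\cdot)$, so the off-diagonal row sums of $G$ are independent of the chosen row, and it suffices to estimate $\sum_{\tau\neq e} d^{\ell(e,\tau) - t}$ where $e=\{\{1,2\},\ldots,\{2t-1,2t\}\}$. The key combinatorial input is the identity
\[
\sum_{\tau\in M_{2t}} d^{\ell(e,\tau)} \;=\; \prod_{k=0}^{t-1}(d + 2k),
\]
which I would prove by induction on $t$ via conditioning on the block of $\tau$ containing the element $1$: that block either closes a loop with the $e$-pair $\{1,2\}$ (contributing a factor of $d$), or it merges two $e$-pairs into a chain whose complement is a pair partition on $2(t-1)$ remaining points (contributing a shift $2k$). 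Dividing by $d^t$ and subtracting the $\tau=e$ contribution then yields
\[
\sum_{\tau\neq e} d^{\ell(e,\tau)-t} \;=\; \prod_{k=0}^{t-1}\!\left(1+\frac{2k}{d}\right) - 1 \;\leq\; \exp\!\left(\frac{t(t-1)}{d}\right) - 1 \;\leq\; \frac{2t^2}{d},
\]
using $e^x - 1 \leq 2x$ for $x\in[0,1]$, valid since $d\geq t^2$ ensures $t(t-1)/d \leq 1$.

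The main obstacle is the product identity: while it is classical in random-matrix theory (it is the even GOE moment formula) and in the representation theory of the Brauer algebra $B_t(d)$, a clean inductive derivation requires careful bookkeeping of how collapsing a block of $\tau$ against the $e$-pairing either closes a loop in the joint graph $\mathfrak{g}(e,\tau)$ or produces a properly relabelled pair partition on $2(t-1)$ points. One could alternatively appeal to the orthogonal Weingarten calculus, where the same product appears as the dimension factor in the Weingarten expansion of $P_O$. Once this identity is in hand, the remainder of the argument is a transparent orthogonal analogue of the BHH proof, with the factor of $2$ in $2t^2/d$ tracing back to the $+2k$ orthogonal spacing in the Brauer generating function versus the $+k$ unitary spacing in the Stirling identity implicit in Lemma~\ref{lemma:BHHinequality}.
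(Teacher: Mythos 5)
Your proposal is correct and follows the same architecture as the paper's proof: reduce $\|P_O-S'\|_\infty$ via the synthesis operator to the (normalized) Gram matrix of the pair-partition states, bound the off-diagonal row sums, and evaluate $\sum_{\tau}d^{\ell(\sigma,\tau)}=\prod_{k=0}^{t-1}(d+2k)$. The one place you diverge is the proof of that product identity: the paper obtains it from the Collins--Matsumoto expansion of $d^{\ell(\sigma,\tau)}$ in zonal spherical functions together with their orthogonality relation (only the $\lambda=\{t\}$ term survives the sum over $M_{2t}$), whereas you propose an elementary induction by conditioning on the block of $\tau$ containing the element $1$. Your sketch does close: either that block coincides with the $e$-pair $\{1,2\}$ (closing a loop, factor $d$) or it attaches $1$ to one of the $2t-2$ remaining points (merging two $e$-pairs into a chain and reducing to a $(t-1)$-instance), giving the recursion $N_t=(d+2t-2)\,N_{t-1}$ with $N_1=d$; this is the standard Brauer/GOE-moment recursion, so what you flag as the "main obstacle" is routine bookkeeping rather than a gap. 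Your final estimate $\prod_k(1+2k/d)-1\le e^{t(t-1)/d}-1\le 2t^2/d$ (using $e^x-1\le 2x$ on $[0,1]$, valid since $d\ge t^2$) is a slightly cleaner justification of the inequality the paper simply asserts. The elementary route buys self-containedness; the paper's route buys a formula that it reuses anyway for the orthogonal Weingarten numerics.
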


\begin{proof}
    We start by showing that, similar to the almost orthogonality of permutations in the unitary case, the states $\ket{\varphi_\sigma}$ are nearly orthogonal at large dimension, by upper bounding the sum over inner products of the states $\sum_\sigma |\vev{\varphi_\sigma| \varphi_\tau}|$.
    First, we note that the sum can be expressed in terms of the inner product between pair partitions as
    \begin{equation}
        \sum_{\sigma\in M_{2t}} |\vev{\varphi_\sigma| \varphi_\tau}| = \frac{1}{d^k}\sum_{\sigma\in M_{2t}} d^{\ell(\sigma,\tau)}\,,
        \label{eq:pairpsum}
    \end{equation}
    where $\ell(\sigma,\tau)$ is the number of connected components in the graph defined by $\sigma$ and $\tau$ (and is equivalently the coset-type of the product permutation).
    Ref.~\cite[Eq.~(4.5)]{CollinsMat09} gave an expression for $d^{\ell(\sigma,\tau)}$ in terms of the so-called zonal spherical functions and zonal polynomial (see \cite[Sec.~VII]{MacDonaldHall} for a review). Assuming $d\geq t$, for two pair partitions $\sigma, \tau\in M_{2t}$ we can write 
    \begin{equation}
        d^{\ell(\sigma,\tau)} = \frac{2^t t!}{(2t)!} \sum_{\lambda \vdash t} f_{2\lambda} Z_\lambda(d) \omega_\lambda(\sigma^{-1}\tau) \quad{\rm with}\quad Z_\lambda(d) = \prod_{(i,j)\in \lambda} (d+2j-i-1)
    \end{equation}
    where we sum over integer partitions $\lambda$ of $t$, $f_{2\lambda}$ is the dimension of the irrep associated to $2\lambda$, $\omega_\lambda(\sigma)$ is the zonal spherical function, expressible as a sum of irreducible characters of $S_{2t}$ (see \cite{MacDonaldHall}), 
    and where the zonal polynomial $Z_\lambda(1^d)$ is a symmetric polynomial defined above, with the product taken over the coordinates of the Young diagram of $\lambda$. Using an orthogonality relation between the functions $\omega_\lambda$ (\cite[Eq.~(5.4)]{CollinsMat09}), it follows that $\frac{2^t t!}{(2t)!}\sum_{\sigma\in M_{2t}} \omega_\lambda(\sigma^{-1}\tau) = \delta_{\lambda,\{t\}}$, i.e. sum is non-zero only for the irrep labeled by $\{t\}$. Computing the sum in Eq.~\eqref{eq:pairpsum}, we find
    \begin{equation}
        \sum_{\sigma\in M_{2t}} |\vev{\varphi_\sigma| \varphi_\tau}| = \frac{1}{d^t}\sum_{\sigma\in M_{2t}} d^{\ell(\sigma,\tau)} = \frac{1}{d^t} \prod_{j=1}^{t} (d+2(j-1))\,.
    \end{equation}
    Taking $t^2\leq d$, it then follows that for any fixed pair partition $\tau\in M_{2t}$
    \begin{equation}
        \sum_{\sigma\in M_{2t}} |\vev{\varphi_\sigma| \varphi_\tau}| \leq 1 + \frac{2t^2}{d}\,.
    \end{equation}
    With this bound on the almost-orthogonality of the ground states, the remainder of the proof closely follows \cite[Lem.~17]{brandao_local_2016}.
    Defining the synthesis operator $B' := \sum_\sigma \ket{\sigma}\!\bra{\varphi_\sigma}$ for the orthonormal basis $\{\ket{\sigma}\}$ of ${\rm span}\{\ket{\varphi_\sigma}\}$, where $B'{}^\dagger B' = S'$, and noting that $B'{}^\dagger B'$ and $B' B'{}^\dagger$ have the same eigenvalues, we can then bound the operator norm difference of $S'$ and the Haar projector as
    \begin{equation}
        \|S' - P_O\|_\infty = \Big\| B' B'^\dagger - \sum_\sigma \ketbra{\sigma} \Big\|_\infty = \max_\sigma \sum_{\tau\neq \sigma} |\vev{\varphi_\sigma| \varphi_\tau}|\leq \frac{2t^2}{d}\,.
    \end{equation}
\end{proof}
Using \autoref{lem:pairpbound}, we can proceed with a bound on the orthogonal spectral gap at large local dimension, completely analogous to \autoref{thm:gapbound}.
\begin{proof}[Proof of \autoref{thm:Ogaps}]
Consider the probability distribution $\nu^{\rm bulk}_3$ defined as the application of a single Haar-random orthogonal gate from $\orth(q^2)$ on a random nearest-neighbor pair of 3 qudits, i.e.\ either on qudits 1 and 2 or 2 and 3. We want to bound the operator norm of the difference of moment operators
\begin{equation}
    M\big(\nu_3^{\rm bulk},t\big) - M(\mu_O,t) = \frac{1}{2}\left( P_O^{(2)} \otimes \iden + \iden \otimes P_O^{(2)}\right) - P_O^{(3)}\,.
\end{equation}
Using the operator $B' := \sum_\sigma \ket{\sigma}\!\bra{\varphi_\sigma}$ for the orthonormal basis $\{\ket{\sigma}\}$ of ${\rm span}\{\ket{\varphi_\sigma}\}$, we apply \autoref{lem:pairpbound} to show
\begin{align}
\Big\|M\big(\nu_3^{\rm bulk},t\big) - M(\mu_O,t)\Big\|_\infty &\leq \frac{1}{2} \bigg\|\sum_\sigma \left(\varphi_{\sigma}^{\otimes 2}\otimes \iden+\iden\otimes \varphi_{\sigma}^{\otimes 2}\right)-2\varphi_{\sigma}^{\otimes 3}\bigg\|_\infty + \frac{2t^2}{q^2}+\frac{2t^2}{q^3}\nn
&\leq \frac12 \|B'^\dagger B'\|_\infty \max_\sigma \big\|\varphi_\sigma \otimes \iden + \iden\otimes \varphi_\sigma - 2\varphi_\sigma\otimes\varphi_\sigma \big\|_\infty + \frac{2t^2}{q^2}+\frac{2t^2}{q^3}\nn
&\leq \frac12\left(1+\frac{2t^2}{q}\right)+\frac{2t^2}{q^2}+\frac{2t^2}{q^3}\,.
\end{align}
Taking $q\geq 8t^2$, we then find that for any $t\geq 1$
\begin{equation}
    \Big\|M\big(\nu_3^{\rm bulk},t\big) - M(\mu_O,t)\Big\|_\infty \leq \frac{2}{3}\,.
\end{equation}
We can re-express the bound on the norm of the difference in moment operators for orthogonal RQCs as a bound on the spectral gap of the frustration-free Hamiltonian
\begin{equation}
    \Delta\big(H^{O,{\rm bulk}}_{3,t}\big) \geq \frac{2}{3}\,.
\end{equation}
Using the Knabe bound in \autoref{lemma:knabe} for subsystem size $m=3$, we conclude that
\begin{equation}
    \Delta(H^O_{n,t}) \geq 2\left(\frac{2}{3}-\frac{1}{2}\right) = \frac{1}{3}\,.
\end{equation}
As $g_O(\nu_n,t) = 1-\Delta(H^O_{n,t})/n$, the claim then follows.
\end{proof}

As we discussed, \cite{brandao_local_2016} proved a lower bound on the spectral gap using the path-coupling method, specifically a version for random walks on the unitary group \cite{oliveira2009convergence}. We conclude by noting that path-coupling in the orthogonal case should also give an exponentially small (albeit $t$-independent) lower bound on the spectral gap of $\Delta(H^O_{n,t})$, which, combined with the Nachtergaele method, then would prove that local orthogonal random quantum circuits form approximate orthogonal $t$-designs in $O({\rm poly}(t) n^2)$ depth. 
We leave this investigation to future work.

\subsubsection*{Numerical gaps for orthogonal RQCs}
By explicitly constructing the orthogonal moment operator $P_O$ in the Weingarten formalism, as described in \autorefapp{app:numerics}, we can then numerically determine the spectral gaps for orthogonal RQCs. Applying Knabe bounds for both periodic and open boundary conditions, and using the detectability lemma to extend to brickwork circuits, we find convergence to $\ep$-approximate orthogonal designs in the following circuit depths:
\begin{center}
\setlength{\tabcolsep}{12pt}
\begin{tabular}{l c c}
Circuit architecture& 2-designs& 3-designs \\
\hline
local ORQCs w/ pbc& $7n(4n+\log 1/\ep)$& $8n(6n+\log1/\ep)$\\
local ORQCs w/ obc& $9n(4n+\log 1/\ep)$& $9n(6n+\log1/\ep)$\\
brickwork ORQCs w/ pbc& $55(4n+\log 1/\ep)$& $66(6n+\log1/\ep)$\\
brickwork ORQCs w/ obc& $73(4n+\log 1/\ep)$& $73(6n+\log1/\ep)$
\end{tabular}
\end{center}
As the set of pair partitions $M_{2t}$ grows substantially faster than permutations, numerical determination of the spectral gaps for higher moments quickly became computationally intractable. We were able to compute gaps for the fourth moment, but none that exceeded the Knabe threshold, and thus we just report design depths for the second and third moments. We further note that, unlike in the unitary case, the smallest nontrivial second moment gap $\Delta(H_{n=3,t=2}^{O,{\rm bulk}})$ did not give optimal design depths for all $n$, and lower bounds on $\Delta(H_{n,t=2}^O)$ improved as we increased the subsystem size.

We also computed the spectral gaps for the simplest nontrivial moment operator, with $n=3$ and $t=2$, for varying local dimensions ($q=2$ up to $q=6$). 
In analogy to \autoref{theorem:exactt=2}, we subsequently conjecture that:
\begin{conjecture}[Exact orthogonal gaps for $n=3$ and $t=2/3$]
The spectral gaps for the bulk orthogonal Hamiltonian, where $H^{O, {\rm bulk}}_{n,t} = \sum_{i=1}^{n-1} P^O_{i,i+1}$\,, for $n=3$ and $t=2$ and $3$ are given by
\begin{equation}
    \Delta\big(H^{O, {\rm bulk}}_{n=3,t=2/3}\big) = 1 - \frac{q (q + 2)}{(q + 1)(q^2 + 2)}\,.
\end{equation}
\end{conjecture}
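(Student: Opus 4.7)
The plan is to mirror the proof of \autoref{theorem:exactt=2}, replacing the representation theory of $S_t$ with orthogonal Schur--Weyl duality: the centralizer of $O(q^m)^{\otimes 2t}$ acting on $V^{\otimes 2t}$ is the Brauer algebra $B_t(q^m)$, and the $O(q^m)$-invariants are spanned by the pair-partition states $|\varphi_\sigma\rangle$, $\sigma \in M_{2t}$, introduced in \autorefapp{app:ORQCs}. The first step is to reformulate
\begin{equation*}
\big\|M(\nu_3^{\mathrm{bulk}},t)-M(\mu_O,t)\big\|_\infty = \tfrac12 \big\|(P_O^{(1)}\otimes P_O^{(2)}-P_O^{(3)}) + (P_O^{(2)}\otimes P_O^{(1)}-P_O^{(3)})\big\|_\infty,
\end{equation*}
which holds because the discarded piece $\tfrac12\big(P_O^{(2)}\otimes(\iden-P_O^{(1)}) + (\iden-P_O^{(1)})\otimes P_O^{(2)}\big)$ has image orthogonal to the RHS and operator norm at most $1/2$, while the conjectured RHS norm exceeds $1/2$. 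This reduces the problem to diagonalizing a low-rank positive operator on the orthogonal complement of $\mathrm{im}(P_O^{(3)})$ inside $\mathrm{im}(P_O^{(1)}\otimes P_O^{(2)}) + \mathrm{im}(P_O^{(2)}\otimes P_O^{(1)})$.

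Since the difference operator is $O(q^3)$-equivariant, I would proceed by $O(q^3)$-isotypic block decomposition. Within each isotypic component, $Q_{12} := P_O^{(1)}\otimes P_O^{(2)} - P_O^{(3)}$ and $Q_{21} := P_O^{(2)}\otimes P_O^{(1)} - P_O^{(3)}$ reduce to a pair of projectors onto subspaces of the multiplicity space, and the eigenvalues of $\tfrac12(Q_{12}+Q_{21})$ are $\tfrac12(1\pm \cos\theta_i)$, with $\theta_i$ the principal angles between their ranges. In analogy with the $A_{ij},B_{ij}$ basis of Eqs.~\eqref{eq:eigenvalueI}--\eqref{eq:eigenvalueIV}, I would construct an orthonormal basis $\{A^{(\pi)}_{12}\}, \{A^{(\pi)}_{21}\}$ indexed by the shared nontrivial irreducible sectors $\pi$, so that the matrix factors into $2\times 2$ blocks $\iden_2 + \tau_\pi(|0\rangle\!\langle 1|+|1\rangle\!\langle 0|)$ with eigenvalues $1\pm\tau_\pi$, giving $\Delta(H^{O,\mathrm{bulk}}_{3,t}) = 1-\max_\pi|\tau_\pi|$. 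The conjecture reduces to the identity $\max_\pi|\tau_\pi| = q(q+2)/[(q+1)(q^2+2)]$, with the same maximizing sector for both $t=2$ and $t=3$.

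The overlaps $\tau_\pi$ are traces of products of Brauer projectors and should admit closed form in terms of the dimensions of $O(q)$-irreducibles inside $V^{\otimes t}$. For $t=2$ the only nontrivial blocks are the trivial representation, the traceless symmetric $S^2_0 V$ of dimension $q(q+1)/2-1$, and the antisymmetric $\wedge^2 V$ of dimension $q(q-1)/2$; the resulting linear system mirrors Eqs.~\eqref{eq:eigenvalueI}--\eqref{eq:eigenvalueIV}. For $t=3$ one additionally needs the defining representation $V$, the mixed-symmetry pieces of $V^{\otimes 3}$, and $\wedge^3 V$, together with the three non-trivial ideals of $B_3(q)$ indexed by the number of through-strings (0, 1, 3). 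The $t$-independence of the conjectured formula strongly suggests that a single ``universal'' sector saturates the bound in both cases, and the factor $q(q+2)$ in the numerator together with $q^2+2$ in the denominator points towards the block containing $S^2_0 V$, which appears in both $V^{\otimes 2}$ and $V^{\otimes 3}$.

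The main obstacle will be the bookkeeping in $B_3(q)$: unlike $\C[S_2]$, the Brauer algebra $B_3(q)$ is not abelian, so constructing the isotypic projectors explicitly is substantially more involved. A complementary route is to compute the quantities $\tr\big(P_O^{(3)}(P_O^{(i)}\otimes P_O^{(3-i)})P_O^{(3)}\big)$ directly via the orthogonal Weingarten formula with zonal spherical functions from \cite{CollinsMat09}, identify the saturating sector numerically using the tools of \autorefapp{app:numerics}, and then verify the resulting polynomial identity in $q$. If the saturating sector is indeed the $S^2_0 V$-block, the final step should amount to checking the conjectured rational function at a handful of values of $q$, in exact analogy with Eqs.~\eqref{eq:analyticoverlapA}--\eqref{eq:analyticoverlapB}.
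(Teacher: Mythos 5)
First, a point of context: the paper does not prove this statement --- it is stated explicitly as a \emph{conjecture}, supported only by numerical computation of the gaps for $q=2$ through $q=6$ using the orthogonal Weingarten formalism of \autorefapp{app:numerics}. So there is no proof in the paper against which to compare your attempt; what you have written is a plan for settling an open question, and it should be judged on whether it actually closes the gap. It does not, for three concrete reasons. (1) The essential computations are never carried out. For $t=2$ the invariant space of $\orth(q^{m})^{\otimes 4}$ is spanned by the three pair partitions in $M_4$, so $\mathrm{im}(P_O^{(1)}\otimes P_O^{(2)})$ is already $9$-dimensional and the operator you must diagonalize has rank up to $12$, not $4$ as in \autoref{theorem:exactt=2}; you assert a clean $2\times 2$ block structure with eigenvalues $1\pm\tau_\pi$ by analogy, but in the unitary case that structure rested on the explicitly verified vanishing overlaps $\tr[A_{12}B_{21}]=\tr[B_{12}A_{21}]=0$, and you give no analogue of Eq.~\eqref{eq:decompositionsymmetricprojector} for the Brauer-algebra projectors that would deliver it. (2) For $t=3$ the obstruction is worse than ``bookkeeping'': as the paper notes after \autoref{theorem:exactt=2}, already in the unitary case the method breaks at $t=3$ because the isotypic projectors only span the \emph{center} of the algebra; $B_3(q)$ is $15$-dimensional and non-commutative, so the reduction to $2\times2$ blocks indexed by irreducible sectors is precisely the step that is not known to work, and your proposal offers no substitute.

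(3) Your fallback --- identify the saturating sector numerically and ``check the conjectured rational function at a handful of values of $q$'' --- is not a proof as stated. Verifying a rational-function identity at finitely many points requires an a priori bound on the degrees of numerator and denominator (which you gesture at via the paper's outlook remark but never establish for this operator), and verifying only the numerically identified sector leaves open that a different sector dominates for some other $q$. Separately, your opening reduction to $\tfrac12\|Q_{12}+Q_{21}\|_\infty$ is circular as justified: you discard the piece $\tfrac12\bigl(P_O^{(2)}\otimes(\iden-P_O^{(1)})+(\iden-P_O^{(1)})\otimes P_O^{(2)}\bigr)$ on the grounds that ``the conjectured RHS norm exceeds $1/2$,'' i.e.\ you assume the conclusion to set up the computation; you would instead need the block-orthogonality argument from Ref.~\cite{brandao2010exponential} adapted to the orthogonal Haar projectors. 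In short, the framework (pair partitions, Brauer algebra, principal angles between $\mathrm{im}\,Q_{12}$ and $\mathrm{im}\,Q_{21}$) is the right one and matches how the authors would presumably attack their own conjecture, but every step that distinguishes a proof from the paper's numerical evidence remains to be done.
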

Combined with the Knabe bound, this would imply a similar result as in \autoref{cor:exactgap}, efficient orthogonal 2-designs on $n$ qudits.

\section{Details on numerics}\label{app:numerics}
In this appendix, we summarize some details on how numerics for evaluating the spectral gaps were preformed. To compute the gaps we first construct the local moment operator $P_H^{(2)}$ in the Weingarten formalism \cite{Collins02,Collins04}. 
We can write the moment operator on $n$ qudits as
\begin{equation}
    P_H^{(n)} = \int U^{\otimes t}\otimes \overline{U}^{\otimes t} \,\d\mu_H(U) = \sum_{\pi, \sigma\in S_t} \Wg(\pi^{-1}\sigma,q^n) \ket{\varphi_\pi}\!\bra{\varphi_\sigma}^{\otimes n}\,,
\end{equation}
where again $\ket{\varphi_\pi} := (\iden \otimes r(\pi)) \ket\Omega $, $r(\pi)$ is the standard representation of the permutation and $\ket\Omega$ is the maximally entangled state on $(\C^q)^{\otimes t}\otimes (\C^q)^{\otimes t} $. The unitary Weingarten function $\Wg(\pi,d)$ is a function of permutations $\pi \in S_t$ and admits an expansion in terms of characters of the symmetric group \cite{Collins04} as follows 
\begin{equation}
    \Wg(\pi,d) = \frac{1}{t!} \sum_{\substack{\lambda\vdash t\\ \ell(\lambda)\leq d}} \frac{f_\lambda\, \chi_\lambda(\pi)}{c_\lambda(d)}\,, \quad{\rm where}\quad c_\lambda(d):= \prod_{(i,j)\in \lambda} (d+j-1)
\end{equation}
and where we sum over integer partitions of $t$, restricting to partitions of length $\ell(\lambda)\leq d$, $f_\lambda$ is the dimension of the irreducible representation labeled by $\lambda$, and $\chi_\lambda(\pi)$ is the irreducible character of $\lambda$ on the permutation $\pi\in S_t$. Lastly, $c_\lambda(d)$ is a polynomial (related to the Schur polynomial) where the product above is taken over coordinates of the Young diagram corresponding to $\lambda$. 

Using the above formulation, we can numerically construct the moment operators $P_H^{(2)}$ by computing the unitary Weingarten functions, from which we can then construct the Hamiltonian $H_{n,t}$. Doing so, we can numerically compute the first few eigenvalues of the resulting sparse matrix using power methods. Specifically, the Lanczos algorithm efficiently finds the eigenvalues of interest and allows us to determine the spectral gap of the Hamiltonian. 

Numerics for the orthogonal gaps can also be done using the Weingarten formalism for the orthogonal group \cite{Collins04,CollinsMat09}. The orthogonal moment operator on $n$ qudits can be written as
\begin{equation}
    P^{(n)}_O = \int O^{\otimes 2t}\, \d \mu_H(O) = \sum_{\sigma,\tau\in M_{2t}} \Wg^O(\sigma^{-1}\tau, q^n) \ket{\varphi_\sigma}\!\bra{\varphi_\tau}^{\otimes n}\,,
\end{equation}
where we sum over pair partitions and $\ket{\varphi_\sigma}$ are the states defined in the previous appendix as representations of pair partitions acting on maximally entangled states in the $2t$-fold space. In the equation above, $\Wg^O(\sigma,d)$ is the orthogonal Weingarten function on a pair partition $\sigma\in M_{2t}$. Like in the unitary case, the orthogonal Weingarten function admits an expansion in terms characters \cite{CollinsMat09} as
\begin{equation}
    \Wg^O(\sigma,d) = \frac{2^t t!}{(2t)!} \sum_{\substack{\lambda\vdash t\\ \ell(\lambda)\leq d}} \frac{f_{2\lambda}\, \omega_\lambda(\sigma)}{Z_\lambda(d)}\,,
\end{equation}
where again we sum over integer partitions of $t$, $f_{2\lambda}$ is the dimension of the $2\lambda$ irrep of $S_{2t}$, $\omega_\lambda(\sigma)$ is the zonal spherical function, and $Z_\lambda(d)$ is polynomial in $d$, both defined in \autorefapp{app:ORQCs}. 

Again, we can numerically construct the local orthogonal moment operators $P_O^{(2)}$ by computing the orthogonal Weingarten functions, and then the bulk Hamiltonian. Using sparse matrix methods to find the first few eigenvalues gives the desired numerical values of the spectral gaps.

\bibliographystyle{utphys}
\bibliography{note_local_dimension}

\providecommand{\href}[2]{#2}\begingroup\raggedright\begin{thebibliography}{10}

\bibitem{dankert_exact_2009}
C.~Dankert, R.~Cleve, J.~Emerson, and E.~Livine, ``Exact and approximate
  unitary 2-designs and their application to fidelity estimation,''
  \href{http://dx.doi.org/10.1103/PhysRevA.80.012304}{{\em Phys. Rev.}
  {\bfseries A80} (2009) 012304},
  \href{http://arxiv.org/abs/quant-ph/0606161}{{\ttfamily
  arXiv:quant-ph/0606161}}.

\bibitem{gross_evenly_2007}
D.~{Gross}, K.~{Audenaert}, and J.~{Eisert}, ``{Evenly distributed unitaries:
  On the structure of unitary designs},''
  \href{http://dx.doi.org/10.1063/1.2716992}{{\em J. Math. Phys.} {\bfseries
  48} (2007) 052104}, \href{http://arxiv.org/abs/quant-ph/0611002}{{\ttfamily
  arXiv:quant-ph/0611002}}.

\bibitem{HL08}
A.~W. {Harrow} and R.~A. {Low}, ``{Random Quantum Circuits are Approximate
  2-designs},'' \href{http://dx.doi.org/10.1007/s00220-009-0873-6}{{\em Commun.
  Math. Phys.} {\bfseries 291} (2009) 257},
  \href{http://arxiv.org/abs/0802.1919}{{\ttfamily arXiv:0802.1919
  [quant-ph]}}.

\bibitem{brandao_local_2016}
F.~G.~S.~L. {Brand{\~a}o}, A.~W. {Harrow}, and M.~{Horodecki}, ``{Local Random
  Quantum Circuits are Approximate Polynomial-Designs},''
  \href{http://dx.doi.org/10.1007/s00220-016-2706-8}{{\em Commun. Math. Phys.}
  {\bfseries 346} (2016) 397}, \href{http://arxiv.org/abs/1208.0692}{{\ttfamily
  arXiv:1208.0692 [quant-ph]}}.

\bibitem{HM18}
A.~{Harrow} and S.~{Mehraban}, ``{Approximate unitary $t$-designs by short
  random quantum circuits using nearest-neighbor and long-range gates},''
  \href{http://arxiv.org/abs/1809.06957}{{\ttfamily arXiv:1809.06957
  [quant-ph]}}.

\bibitem{NHJ19}
N.~Hunter-Jones, ``{Unitary designs from statistical mechanics in random
  quantum circuits},'' \href{http://arxiv.org/abs/1905.12053}{{\ttfamily
  arXiv:1905.12053 [quant-ph]}}.

\bibitem{brandao2010exponential}
F.~G. S.~L. Brand\~{a}o and M.~Horodecki, ``Exponential quantum speed-ups are
  generic,'' {\em Quantum Info. Comput.} {\bfseries 13} (2013) 901,
  \href{http://arxiv.org/abs/1010.3654}{{\ttfamily arXiv:1010.3654
  [quant-ph]}}.

\bibitem{ELL05}
J.~Emerson, E.~Livine, and S.~Lloyd, ``Convergence conditions for random
  quantum circuits,'' \href{http://dx.doi.org/10.1103/PhysRevA.72.060302}{{\em
  Phys. Rev.} {\bfseries A72} (2005) 060302},
  \href{http://arxiv.org/abs/quant-ph/0503210}{{\ttfamily
  arXiv:quant-ph/0503210}}.

\bibitem{ODP07}
R.~Oliveira, O.~C.~O. Dahlsten, and M.~B. Plenio, ``Generic entanglement can be
  generated efficiently,''
  \href{http://dx.doi.org/10.1103/PhysRevLett.98.130502}{{\em Phys. Rev. Lett.}
  {\bfseries 98} (2007) 130502},
  \href{http://arxiv.org/abs/quant-ph/0605126}{{\ttfamily
  arXiv:quant-ph/0605126}}.

\bibitem{Zni08}
M.~\v{Z}nidari\v{c}, ``Exact convergence times for generation of random
  bipartite entanglement,''
  \href{http://dx.doi.org/10.1103/PhysRevA.78.032324}{{\em Phys. Rev.}
  {\bfseries A78} (2008) 032324},
  \href{http://arxiv.org/abs/0809.0554}{{\ttfamily arXiv:0809.0554}}.

\bibitem{BV10}
W.~G. Brown and L.~Viola, ``Convergence rates for arbitrary statistical moments
  of random quantum circuits,''
  \href{http://dx.doi.org/10.1103/PhysRevLett.104.250501}{{\em Phys. Rev.
  Lett.} {\bfseries 104} (2010) 250501},
  \href{http://arxiv.org/abs/0910.0913}{{\ttfamily arXiv:0910.0913
  [quant-ph]}}.

\bibitem{CompGrowth19}
F.~G. S.~L. Brand{\~a}o, W.~Chemissany, N.~Hunter-Jones, R.~Kueng, and
  J.~Preskill, ``{Models of quantum complexity growth},''
  \href{http://arxiv.org/abs/1912.04297}{{\ttfamily arXiv:1912.04297
  [hep-th]}}.

\bibitem{brown2018second}
A.~R. Brown and L.~Susskind, ``{Second law of quantum complexity},''
  \href{http://dx.doi.org/10.1103/PhysRevD.97.086015}{{\em Phys. Rev.}
  {\bfseries D97} (2018) 086015},
\href{http://arxiv.org/abs/1701.01107}{{\ttfamily arXiv:1701.01107 [hep-th]}}.

\bibitem{susskind2018black}
L.~Susskind, ``{Black Holes and Complexity Classes},''
\href{http://arxiv.org/abs/1802.02175}{{\ttfamily arXiv:1802.02175 [hep-th]}}.

\bibitem{knabe1988energy}
S.~{Knabe}, ``{Energy gaps and elementary excitations for certain VBS-quantum
  antiferromagnets},'' \href{http://dx.doi.org/10.1007/BF01019721}{{\em J.
  Stat. Phys.} {\bfseries 52} (1988) 627}.

\bibitem{GM15}
D.~{Gosset} and E.~{Mozgunov}, ``{Local gap threshold for frustration-free spin
  systems},'' \href{http://dx.doi.org/10.1063/1.4962337}{{\em J. Math. Phys.}
  {\bfseries 57} (2016) 091901},
  \href{http://arxiv.org/abs/1512.00088}{{\ttfamily arXiv:1512.00088
  [quant-ph]}}.

\bibitem{nachtergaele1996spectral}
B.~{Nachtergaele}, ``{The spectral gap for some spin chains with discrete
  symmetry breaking},'' \href{http://dx.doi.org/10.1007/BF02099509}{{\em
  Commun. Math. Phys.} {\bfseries 175} (1996) 565},
  \href{http://arxiv.org/abs/cond-mat/9410110}{{\ttfamily
  arXiv:cond-mat/9410110}}.

\bibitem{bubley1997path}
R.~{Bubley} and M.~{Dyer},
  \href{http://dx.doi.org/10.1109/SFCS.1997.646111}{``{Path coupling: A
  technique for proving rapid mixing in Markov chains},''} in {\em Proceedings
  38th Annual Symposium on Foundations of Computer Science}, p.~223.
\newblock 1997.

\bibitem{oliveira2009convergence}
R.~I. Oliveira, ``{On the convergence to equilibrium of Kac's random walk on
  matrices},'' \href{http://dx.doi.org/10.1214/08-AAP550}{{\em Ann. Appl.
  Probab.} {\bfseries 19} (2009) 1200},
  \href{http://arxiv.org/abs/0705.2253}{{\ttfamily arXiv:0705.2253 [math.PR]}}.

\bibitem{HL09}
A.~W. Harrow and R.~A. Low, ``Efficient quantum tensor product expanders and
  k-designs,'' \href{http://dx.doi.org/10.1007/978-3-642-03685-9_41}{{\em
  Lecture Notes in Computer Science} {\bfseries 5687} (2009) 548},
  \href{http://arxiv.org/abs/0811.2597}{{\ttfamily arXiv:0811.2597
  [quant-ph]}}.

\bibitem{Nakata16}
Y.~Nakata, C.~Hirche, M.~Koashi, and A.~Winter, ``{Efficient Quantum
  Pseudorandomness with Nearly Time-Independent Hamiltonian Dynamics},''
  \href{http://dx.doi.org/10.1103/PhysRevX.7.021006}{{\em Phys. Rev.}
  {\bfseries X7} (2017) 021006},
\href{http://arxiv.org/abs/1609.07021}{{\ttfamily arXiv:1609.07021
  [quant-ph]}}.

\bibitem{Onorati17}
E.~Onorati, O.~Buerschaper, M.~Kliesch, W.~Brown, A.~H. Werner, and J.~Eisert,
  ``{Mixing properties of stochastic quantum Hamiltonians},''
  \href{http://dx.doi.org/10.1007/s00220-017-2950-6}{{\em Commun. Math. Phys.}
  {\bfseries 355} (2017) 905},
\href{http://arxiv.org/abs/1606.01914}{{\ttfamily arXiv:1606.01914
  [quant-ph]}}.

\bibitem{QMhomeopathy20}
J.~Haferkamp, F.~Montealegre-Mora, M.~Heinrich, J.~Eisert, D.~Gross, and
  I.~Roth, ``{Quantum homeopathy works: Efficient unitary designs with a
  system-size independent number of non-Clifford gates},''
  \href{http://arxiv.org/abs/2002.09524}{{\ttfamily arXiv:2002.09524
  [quant-ph]}}.

\bibitem{LowDeviation09}
R.~A. Low, ``Large deviation bounds for k-designs,''
  \href{http://dx.doi.org/10.1098/rspa.2009.0232}{{\em Proc. R. Soc. A.}
  {\bfseries 465} (2009) 3289},
  \href{http://arxiv.org/abs/0903.5236}{{\ttfamily arXiv:0903.5236
  [quant-ph]}}.

\bibitem{Liu2018}
Z.-W. Liu, S.~Lloyd, E.~Y. Zhu, and H.~Zhu, ``{Entanglement, quantum
  randomness, and complexity beyond scrambling},''
  \href{http://dx.doi.org/10.1007/JHEP07(2018)041}{{\em JHEP} {\bfseries 07}
  (2018) 041}, \href{http://arxiv.org/abs/1703.08104}{{\ttfamily
  arXiv:1703.08104 [quant-ph]}}.

\bibitem{EntFlucs20}
J.~Cotler, N.~Hunter-Jones, and D.~Ranard, ``{Fluctuations of subsystem
  entropies at late times},'' \href{http://arxiv.org/abs/2010.11922}{{\ttfamily
  arXiv:2010.11922 [quant-ph]}}.

\bibitem{anshu2016simple}
A.~Anshu, I.~Arad, and T.~Vidick, ``Simple proof of the detectability lemma and
  spectral gap amplification,''
  \href{http://dx.doi.org/10.1103/PhysRevB.93.205142}{{\em Phys. Rev. B}
  {\bfseries 93} (2016) 205142},
  \href{http://arxiv.org/abs/1602.01210}{{\ttfamily arXiv:1602.01210
  [quant-ph]}}.

\bibitem{aharonov2009detectability}
D.~Aharonov, I.~Arad, Z.~Landau, and U.~Vazirani,
  \href{http://dx.doi.org/10.1145/1536414.1536472}{``{The Detectability Lemma
  and Quantum Gap Amplification},''} in {\em Proceedings of the Forty-First
  Annual ACM Symposium on Theory of Computing}, STOC ’09, p.~417.
\newblock 2009.
\newblock \href{http://arxiv.org/abs/0811.3412}{{\ttfamily arXiv:0811.3412
  [quant-ph]}}.

\bibitem{haferkamp2019closing}
J.~Haferkamp, D.~Hangleiter, A.~Bouland, B.~Fefferman, J.~Eisert, and
  J.~Bermejo-Vega, ``{Closing gaps of a quantum advantage with short-time
  Hamiltonian dynamics},'' \href{http://arxiv.org/abs/1908.08069}{{\ttfamily
  arXiv:1908.08069}}.

\bibitem{maslen2003eigenvalues}
D.~K. Maslen, ``{The eigenvalues of Kac's master equation},''
  \href{http://dx.doi.org/10.1007/s00209-002-0466-y}{{\em Math. Z.} {\bfseries
  243} (2003) 291}.

\bibitem{kac1947random}
M.~Kac, ``{Random Walk and the Theory of Brownian Motion},''
  \href{http://dx.doi.org/10.2307/2304386}{{\em Am. Math. Mon} {\bfseries 54}
  (1947) 369}.

\bibitem{abeyesinghe2009mother}
A.~{Abeyesinghe}, I.~{Devetak}, P.~{Hayden}, and A.~{Winter}, ``{The mother of
  all protocols: restructuring quantum information's family tree},''
  \href{http://dx.doi.org/10.1098/rspa.2009.0202}{{\em Proc. R. Soc. A}
  {\bfseries 465} (2009) 2537},
  \href{http://arxiv.org/abs/quant-ph/0606225}{{\ttfamily
  arXiv:quant-ph/0606225}}.

\bibitem{cwiklinski2013local}
P.~{{C}wikli{n}ski}, M.~{Horodecki}, M.~{Mozrzymas}, L.~{Pankowski}, and
  M.~{Studzi{n}ski}, ``{Local random quantum circuits are approximate
  polynomial-designs: numerical results},''
  \href{http://dx.doi.org/10.1088/1751-8113/46/30/305301}{{\em J. Phys.}
  {\bfseries A46} (2013) 305301},
  \href{http://arxiv.org/abs/1212.2556}{{\ttfamily arXiv:1212.2556
  [quant-ph]}}.

\bibitem{LM18}
M.~{Lemm} and E.~{Mozgunov}, ``{Spectral gaps of frustration-free spin systems
  with boundary},'' \href{http://dx.doi.org/10.1063/1.5089773}{{\em J. Math.
  Phys.} {\bfseries 60} (2019) 051901},
  \href{http://arxiv.org/abs/1801.08915}{{\ttfamily arXiv:1801.08915
  [quant-ph]}}.

\bibitem{realrandomized2018}
A.~K. {Hashagen}, S.~T. {Flammia}, D.~{Gross}, and J.~J. {Wallman}, ``{Real
  Randomized Benchmarking},''
  \href{http://dx.doi.org/10.22331/q-2018-08-22-85}{{\em {Quantum}} {\bfseries
  2} (2018) 85}, \href{http://arxiv.org/abs/1801.06121}{{\ttfamily
  arXiv:1801.06121 [quant-ph]}}.

\bibitem{nhj2018opgrowth}
N.~Hunter-Jones, ``{Operator growth in random quantum circuits with
  symmetry},'' \href{http://arxiv.org/abs/1812.08219}{{\ttfamily
  arXiv:1812.08219 [quant-ph]}}.

\bibitem{Collins04}
B.~{Collins} and P.~{{\'S}niady}, ``{Integration with Respect to the Haar
  Measure on Unitary, Orthogonal and Symplectic Group},''
  \href{http://dx.doi.org/10.1007/s00220-006-1554-3}{{\em Commun. Math. Phys.}
  {\bfseries 264} (2006) 773},
  \href{http://arxiv.org/abs/math-ph/0402073}{{\ttfamily
  arXiv:math-ph/0402073}}.

\bibitem{CollinsMat09}
B.~Collins and S.~Matsumoto, ``{On some properties of orthogonal Weingarten
  functions},'' \href{http://dx.doi.org/10.1063/1.3251304}{{\em J. Math. Phys.}
  {\bfseries 50} (2009) 113516},
  \href{http://arxiv.org/abs/0903.5143}{{\ttfamily arXiv:0903.5143 [math-ph]}}.

\bibitem{MacDonaldHall}
I.~G. Macdonald, {\em Symmetric Functions and Hall Polynomials}.
\newblock Oxford Mathematical Monographs. Oxford University Press, 1999.

\bibitem{Collins02}
B.~Collins, ``{Moments and cumulants of polynomial random variables on unitary
  groups, the Itzykson-Zuber integral, and free probability},''
  \href{http://dx.doi.org/10.1155/S107379280320917X}{{\em Int. Math. Res. Not.}
  {\bfseries 2003} (2003) 953},
  \href{http://arxiv.org/abs/math-ph/0205010}{{\ttfamily
  arXiv:math-ph/0205010}}.

\end{thebibliography}\endgroup

\end{document}